\newcounter{margcount} 
\newtheorem{theorem}{Theorem}[section]
\newtheorem{proposition}[theorem]{Proposition}
\newtheorem{definition}[theorem]{Definition}
\newtheorem{lemma}[theorem]{Lemma}
\newtheorem{corollary}[theorem]{Corollary}
\newtheorem{Remark}[theorem]{Remark}
\newcommand{\SKP}[2]{\langle #1, #2 \rangle_{}}
\newcommand{\SKPLL}[4]{\text{}_{\scriptscriptstyle#3}\langle #1, #2
\rangle_{{\scriptscriptstyle {#4}}}}
\newcommand{\NNN}[2]{\|#1\|_{#2}}
\newcommand{\diam}{{\rm diam \ }}
\newcounter{pcounter} 
\numberwithin{equation}{section}
\numberwithin{theorem}{section}
\newcommand{\alp}{\alpha}
\newcommand{\eps}{\varepsilon}
\newcommand{\lam}{\lambda}
\renewcommand{\phi}{\varphi}
\newcommand{\Gam}{\Gamma}
\renewcommand{\t}{\tilde}
\newcommand{\p}{\partial}
\newcommand{\nco}[1]{\|#1\|_{L^\infty(\TTT)}}
\def \wto{\rightharpoonup}
\DeclareMathOperator*{\supp}{supp}
\renewcommand{\AA}{{\mathcal A}}
\newcommand{\DD}{{\mathcal D}}
\newcommand{\HH}{{\mathcal H}}
\newcommand{\II}{{\mathcal I}}
\newcommand{\MM}{{\mathcal M}}
\renewcommand{\phi}{\varphi}
\renewcommand{\H}{\mathcal{H}}
\newcommand{\I}{\mathcal{I}}
\newcommand{\nada}[1]{}
\newcommand{\Rn}{{\mathbb R^3}}
\newcommand{\R}{{\mathbb R}}
\newcommand{\N}{{\mathbb N}}
\newcommand{\TTT}{{\mathbb T}}
\newcommand{\torol}{{\mathbb T_\ell}}
\definecolor{light}{gray}{.97}
\title{Low density phases in a uniformly charged liquid 
}
\author{Hans Kn\"upfer\footnote{Institut f\"ur Angewandte Mathematik,
    Universit\"at Heidelberg, INF 294, 69120 Heidelberg, Germany} \and
  Cyrill B. Muratov \footnote{Department of Mathematical Sciences, New
    Jersey Institute of Technology, Newark, NJ 07102, USA} \and Matteo
  Novaga \footnote{Dipartimento di Matematica, Universit\`a di Pisa,
    Largo Bruno Pontecorvo 5, 56127 Pisa, Italy}}
\date{\today}
\begin{document}

\maketitle

\begin{abstract}  
  This paper is concerned with the macroscopic behavior of global
  energy minimizers in the three-dimensional sharp interface
  unscreened Ohta-Kawasaki model of diblock copolymer melts. This
  model is also referred to as the nuclear liquid drop model in the
  studies of the structure of highly compressed nuclear matter found
  in the crust of neutron stars, and, more broadly, is a paradigm for
  energy-driven pattern forming systems in which spatial order arises
  as a result of the competition of short-range attractive and
  long-range repulsive forces. Here we investigate the large volume
  behavior of minimizers in the low volume fraction regime, in which
  one expects the formation of a periodic lattice of small droplets of
  the minority phase in a sea of the majority phase.  Under periodic
  boundary conditions, we prove that the considered energy
  $\Gamma$-converges to an energy functional of the limit
  ``homogenized'' measure associated with the minority phase
  consisting of a local linear term and a non-local quadratic term
  mediated by the Coulomb kernel. As a consequence, asymptotically the
  mass of the minority phase in a minimizer spreads uniformly across
  the domain. Similarly, the energy spreads uniformly across the
  domain as well, with the limit energy density minimizing the energy
  of a single droplets per unit volume. Finally, we prove that in the
  macroscopic limit the connected components of the minimizers have
  volumes and diameters that are bounded above and below by universal
  constants, and that most of them converge to the minimizers of the
  energy divided by volume for the whole space problem.
\end{abstract}

\newpage

\tableofcontents

\section{Introduction} \label{sec-intro} %

The liquid drop model of the atomic nucleus, introduced by Gamow in
1928, is a classical example of a model that gives rise to a geometric
variational problem characterized by a competition of short-range
attractive and long-range repulsive forces
\cite{gamow30,weizsacker35,bohr36,bohr39} (for more recent studies,
see e.g.  \cite{cohen62,myers66,cohen74,pelekasis90,myers96}; for a
recent non-technical overview of nuclear models, see, e.g.,
\cite{cook}). In a nucleus, different nucleons attract each other via
the short-range nuclear force, which, however, is counteracted by the
long-range Coulomb repulsion of the constitutive protons. Within the
liquid drop model, the effect of the short-range attractive forces is
captured by postulating that the nucleons form an incompressible fluid
with fixed nuclear density and by penalizing the interface between the
nuclear fluid and vacuum via an effective surface tension. The effect
of Coulomb repulsion is captured by treating the nuclear charge as
uniformly spread throughout the nucleus. A competition of the cohesive
forces which try to minimize the interfacial area of the nucleus and
the repulsive Coulomb forces that try to spread the charges apart
makes the nucleus unstable at sufficiently large atomic numbers,
resulting in nuclear fission
\cite{meitner39,bohr39,feenberg39,frenkel39}.

It is worth noting that the liquid drop model is also applicable to
systems of many strongly interacting nuclei. Such a situation arises
in the case of matter at very high densities, occurring, for example,
in the core of a white dwarf star or in the crust of a neutron star,
where large numbers of nucleons are confined to relatively small
regions of space by gravitational forces
\cite{baym71,koester90,pethick95}. As was pointed out independently by
Kirzhnits, Abrikosov and Salpeter, at sufficiently low temperatures
and not too high densities compressed matter should exhibit
crystallization of nuclei into a body-centered cubic crystal in a sea
of delocalized degenerate electrons
\cite{kirzhnits60,abrikosov61,salpeter61}. At yet higher densities,
more exotic nuclear ``pasta phases'' are expected to appear as a
consequence of the effect of ``neutron drip''
\cite{ravenhall83,hashimoto84,oyamatsu84,lattimer85,
  lorenz93,pethick95,okamoto13,schneider13} (for an illustration, see
Fig. \ref{f:pasta}). In all cases, the ground state of nuclear matter
is determined by minimizing the appropriate (free) energy per unit
volume of one of the phases that contains contributions from the
interface area and the Coulomb energy of the nuclei.

\begin{figure}
  \centering
  \includegraphics[width=12cm]{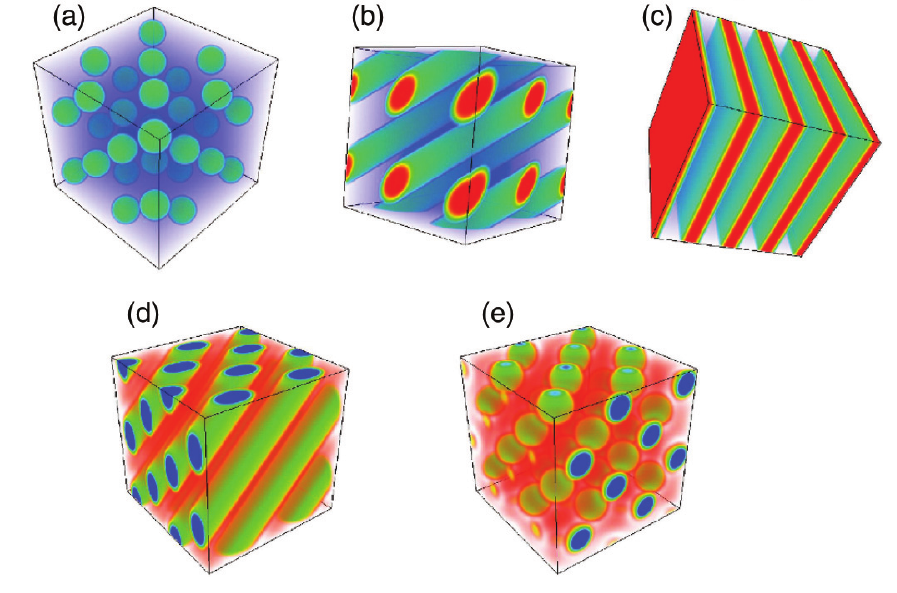}
  \caption{Nuclear pasta phases in a relativistic mean-field model of
    low density nuclear matter. The panels show a progression from
    ``meatball'' (a) to ``spaghetti'' (b) to ``lasagna'' (c) to
    ``macaroni'' (d) to ``swiss cheese'' (e) phases, which are the
    numerically obtained candidates for the ground state at different
    nuclear densities. Reproduced from Ref. \cite{okamoto13}.}
  \label{f:pasta}
\end{figure}

Within the liquid drop model, the simplest way to introduce
confinement is to restrict the nuclear fluid to a bounded domain and
impose a particular choice of boundary conditions for the Coulombic
potential. Then, after a suitable non-dimensionalization the energy
takes the form
\begin{align} 
  \label{EOm} 
  E(u) := \int_\Omega |\nabla u| \, dx +
  \frac12 \int_\Omega \int_\Omega G(x, y) (u(x) - \bar u) (u(y) - \bar
  u) \, dx \, dy.
\end{align}
Here, $\Omega \subset \mathbb R^3$ is the spatial domain (bounded),
$u \in BV(\Omega; \{ 0, 1\})$ is the characteristic function of the
region occupied by the nuclear fluid (nuclear fluid density),
$\bar u \in (0,1)$ is the neutralizing uniform background density of
electrons, and $G$ is the Green's function of the Laplacian which, in
the case of Neumann boundary conditions for the electrostatic
potential solves
\begin{align} \label{GOm} %
  -\Delta_x G(x, y) = \delta(x - y) - {1 \over |\Omega|},
\end{align}
where $\delta(x)$ is the Dirac delta-function.  The nuclear fluid
density must also satisfy the global electroneutrality constraint:
\begin{align}
  \label{uub0}
  {1 \over |\Omega|} \int_\Omega u \, dx = \bar u.
\end{align}
In writing \eqref{EOm} we took into account that because of the
scaling properties of the Green's function one can eliminate all the
physical constants appearing in \eqref{EOm} by choosing the
appropriate energy and length scales.

It is notable that the model in \eqref{EOm}--\eqref{uub0} also appears
in a completely different physical context, namely, in the studies of
mesoscopic phases of diblock copolymer melts, where it is referred to
as the Ohta-Kawasaki model \cite{ohta86,ren00,choksi03}. This is, of
course, not surprising, considering the fundamental nature of Coulomb
forces. In fact, the range of applications of the energy in
\eqref{EOm} goes far beyond the systems mentioned above (for an
overview, see \cite{m:pre02} and references therein). Importantly, the
model in \eqref{EOm} is a paradigm for the energy-driven pattern
forming systems in which spatial patterns (global or local energy
minimizers) form as a result of the competition of short-range
attractive and long-range repulsive forces. This is why this model and
its generalizations attracted considerable attention of mathematicians
in recent years (see, e.g., \cite{choksi01, alberti09, m:cmp10,
  choksi10, choksi11, km:cpam12,km:cpam13, gms:arma13,gms:arma14,
  cicalese13,acerbi13, julin14, julin13, bonacini14, lu14, figalli15,
  mz:agag15, choksi12,frank15}, this list is certainly not
exhaustive). In particular, the volume-constrained global minimization
problem for \eqref{EOm} in the whole space with no neutralizing
background, which we will also refer to as the ``self-energy
problem'', has been investigated in
\cite{choksi10,km:cpam13,lu14,frank15}.

A question of particular physical interest is how the ground states of
the energy in \eqref{EOm} behave as the domain size tends to
infinity. In \cite{alberti09}, Alberti, Choksi and Otto showed that in
this so-called ``macroscopic'' limit the energy becomes distributed
uniformly throughout the domain. Another asymptotic regime,
corresponding to the onset of non-trivial minimizers in the
two-dimensional screened version of \eqref{EOm} was studied in
\cite{m:cmp10}, where it was shown that at appropriately low densities
every non-trivial minimizer is given by the characteristic function of
a collection of nearly perfect, identical, well separated small disks
(droplets) uniformly distributed throughout the domain (see also
\cite{gms:arma13} for a related study of almost minimizers). Further
results about the fine properties of the minimizers were obtained via
two-scale $\Gamma$-expansion in \cite{gms:arma14}, using the approach
developed for the studies of magnetic Ginzburg-Landau vortices
\cite{sandier12} (more recently, the latter was also applied to
three-dimensional Coulomb gases \cite{rougerie13}). In particular, the
method of \cite{gms:arma14} allows, in principle, to determine the
asymptotic spatial arrangement of the droplets of the low density
phase via the solution of a minimization problem involving point
charges in the plane. It is widely believed that the solution of this
problem should be given by a hexagonal lattice, which in the context
of type-II superconductors is called the ``Abrikosov lattice''
\cite{tinkham}. Proving this result rigorously is a formidable task,
and to date such a result has been obtained only within a much reduced
class of Bravais lattices \cite{chen07arma,sandier12}.

It is natural to ask what happens with the low density ground state of
the energy in \eqref{EOm} as the size of the domain $\Omega$ goes to
infinity in three space dimensions. As can be seen from the above
discussion, the answer to this question bears immediate relevance to
the structure of nuclear matter under the conditions realized in the
outer crust of neutron stars. This is the question that we address in
the present paper. On physical grounds, it is expected that at low
densities the ground state of such systems is given by the
characteristic function of a union of nearly perfect small balls
(nuclei) arranged into a body-centered cubic lattice (known to
minimize the Coulomb energy of point charges among body-centered
cubic, face-centered cubic and hexagonal close-packed lattices
\cite{fuchs35,foldy71,nagai83}). The volume of each nucleus should
maximize the binding energy per nucleon, which then yields the nucleus
of an isotope of nickel.

Our results concerning the minimizers of \eqref{EOm} proceed in that
direction, but are still far from rigorously establishing such a
detailed physical picture. One major difficulty has to do with the
lack of the complete solution of the self-energy problem
\cite{km:cpam13,choksi12}. Assuming the solution of this problem,
whenever it exists, is a spherical droplet, a mathematical conjecture
formulated by Choksi and Peletier \cite{choksi11} and a universally
accepted hypothesis in nuclear physics, we indeed recover spherical
nuclei whose volume minimizes the self-energy per unit nuclear volume
(which is equivalent to maximizing the binding energy per nucleon in
the nuclear context). The question of spatial arrangement of the
nuclei is another major difficulty related to establishing periodicity
of ground states of systems of interacting particles, which goes far
beyond the scope of the present paper.  Nevertheless, knowing that the
optimal droplets are spherical should make it possible to apply the
techniques of \cite{sandier12,rougerie13} to relate the spatial
arrangement of droplets to that of the minimizers of the renormalized
Coulomb energy.

In the absence of the complete solution of the self-energy problem, we
can still establish, although in a somewhat implicit manner, the limit
behavior of the minimizers of \eqref{EOm}--\eqref{uub0} in the case
$\Omega = \TTT_\ell$, where $\TTT_\ell$ is the three-dimensional torus
with sidelength $\ell$, as $\ell \to \infty$, provided that $\bar u$
also goes simultaneously to zero with an appropriate rate (low-density
regime). We do so by establishing the $\Gamma$-limit of the energy in
\eqref{EOm}, with the notion of convergence given by weak convergence
of measures (for a closely related study, see \cite{gms:arma13}). The
limit energy is given by the sum of a constant term proportional to
the volume occupied by the minority phase (which is also referred to
as ``mass'' throughout the paper) and the Coulombic energy of the
limit measure, with the proportionality constant in the first term
given by the minimal self-energy per unit mass among all masses for
which the minimum of the self-energy is attained. Importantly, the
minimizer of the limit energy (which is strictly convex) is given
uniquely by the uniform measure. Thus, we establish that for a
minimizer of \eqref{EOm}--\eqref{uub0} the mass in the minority phase
spreads (in a coarse-grained sense) uniformly throughout the spatial
domain and that the minimal energy is proportional to the mass, with
the proportionality constant given by the minimal self-energy per unit
mass (compare to \cite{alberti09}). We also establish that almost all
the ``droplets'', i.e., the connected components of the support of a
particular minimizer, are close to the minimizers of the self-energy
with mass that minimizes the self-energy per unit mass.

Mathematically, it would be natural to try to extend our results in
two directions. The first direction is to consider exactly the same
energy as in \eqref{EOm} in higher space dimensions. Here, however, we
encounter a difficulty that it is not known that the minimizers of the
self-energy do not exist for large enough masses. Such a result is
only available in three space dimensions for the Coulombic kernel
\cite{km:cpam13,lu14}. In the absence of such a non-existence result
one may not exclude a possibility of a network-like structure in the
macroscopic limit. Another direction is to replace the Coulombic
kernel in \eqref{EOm} with the one corresponding to a more general
negative Sobolev norm. Here we would expect our results to still hold
in two space dimensions. Furthermore, the physical picture of
identical radial droplets in the limit is expected for sufficiently
long-ranged kernels, i.e., those kernels that satisfy
$G(x, y) \sim |x - y|^{-\alpha}$ for $|x - y| \ll 1$, with
$0 < \alpha \ll 1$ \cite{km:cpam12,mz:agag15}. Note that although a
similar characterization of the minimizers for long-ranged kernels
exists in higher dimensions as well \cite{bonacini14,figalli15}, these
results are still not sufficient to be used to characterize the limit
droplets, since they do not give an explicit interval of existence of
the minimizers of the self-interaction problem. Also, since the
non-existence result for the self-energy with such kernels is
available only for $\alpha < 2$ \cite{km:cpam13}, our results may not
extend to the case of $\alpha \geq 2$ in dimensions three and above.

Finally, a question of both physical and mathematical interest is what
happens with the above picture when the Coulomb potential is screened
(e.g., by the background density fluctuations). In the simplest case,
one would replace \eqref{GOm} with the following equation defining
$G$:
\begin{align}
  \label{GOmk}
  -\Delta_x G(x, y) + \kappa^2 G(x, y) = \delta(x - y),
\end{align}
where $\kappa > 0$ is the inverse screening length, and the charge
neutrality constraint from \eqref{uub0} is relaxed. Here a bifurcation
from trivial to non-trivial ground states is expected under suitable
conditions (in two dimensions, see
\cite{m:cmp10,gms:arma13,gms:arma14}). We speculate that in certain
limits this case may give rise to non-spherical droplets that minimize
the self-energy. Indeed, in the presence of an exponential cutoff at
large distances, it may no longer be advantageous to split large
droplets into smaller disconnected pieces, and the self-energy
minimizers for arbitrarily large masses may exist and resemble a
``kebab on a skewer''. In contrast to the bare Coulomb case, in the
screened case the energy of such a kebab-shaped configuration scales
linearly with mass. Note that this configuration is reminiscent of the
pearl-necklace morphology exhibited by long polyelectrolyte molecules
in poor solvents \cite{dobrynin05,forster04}.

\paragraph{Organization of the paper.} In Sec. \ref{sec-scale}, we
introduce the specific model, the scaling regime considered, the
functional setting and the heuristics. In this section, we also
discuss the self-energy problem and mention a result about attainment
of the optimal self-energy per unit mass. In Sec. \ref{sec-main}, we
first state a basic existence and regularity result for the minimizers
(Theorem \ref{thm-exist}) and give a characterization of the
minimizers of the whole space problem that also minimize the
self-energy per unit mass (Theorem \ref{thm-fstar}).  We then state
our main $\Gamma$-convergence result in Theorem \ref{thm-main}.  In
the same section, we also state the consequences of Theorem
\ref{thm-main} to the asymptotic behavior of the minimizers in
Corollary \ref{cor-conv}, as well as Theorem \ref{thm-uniform} about
the uniform distribution of energy in the minimizers and Theorem
\ref{thm-droplets} that establishes the multidroplet character of the
minimizers. Section \ref{sec-general} is devoted to generalized
minimizers of the self-energy problem, where, in particular, we obtain
existence and uniform regularity for minimizers in Theorem
\ref{thm-mingen} and Theorem \ref{prodelta}. This section also
establishes a connection to the minimizers of the whole space problem
with a truncated Coulombic kernel and ends with a characterization of
the optimal self-energy per unit mass in Theorem
\ref{prp-fstar}. Section \ref{sec-main-proof} contains the proof of
the $\Gamma$-convergence result of Theorem \ref{thm-main} and of the
equidistribution result of Theorem \ref{thm-uniform}.  Section
\ref{sec-prop} establishes uniform estimates for the problem on the
rescaled torus, where, in particular, uniform estimates for the
potential are obtained in Theorem \ref{lem-pot}.  Section
\ref{sec-drop} presents the proof of Theorem \ref{thm-droplets}.
Finally, some technical results concerning the limit measures
appearing in the $\Gamma$-limit are collected in the Appendix.

\paragraph*{Notation.} Throughout the paper $H^1$, $BV$, $L^p$, $C^k$,
$C^k_c$, $C^{k,\alpha}$, $\mathcal M$ denote the usual spaces of
Sobolev functions, functions of bounded variation, Lebesgue functions,
functions with continuous derivatives up to order $k$, compactly
supported functions with continuous derivatives up to order $k$,
functions with H\"older-continuous derivatives up to order $k$ for
$\alpha \in (0,1)$, and the space of finite signed Radon measures,
respectively. We will use the symbol $|\nabla u|$ to denote the Radon
measure associated with the distributional gradient of a function of
bounded variation. With a slight abuse of notation, we will also
identify Radon measures with the associated, possibly singular,
densities (with respect to the Lebesgue measure) on the underlying
spatial domain. For example, we will write $\nu = |\nabla u|$ and
$d \nu(x) = |\nabla u(x)| \, dx$ to imply $\nu \in \mathcal M(\Omega)$
and
$\nu(\Omega') = |\nabla u|(\Omega') = \int_{\Omega'} |\nabla u| \,
dx$,
given $u \in BV(\Omega)$ and $\Omega' \subset \Omega$. For a set
$I \subset \mathbb N$, $\# I$ denotes cardinality of $I$. The symbol
$\chi_F$ always stands for the characteristic function of the set $F$,
and $|F|$ denotes its Lebesgue measure. We also use the notation
$(u_\eps) \in \mathcal A_\eps$ to denote sequences of functions
$u_\eps \in \mathcal A_\eps$ as $\eps = \eps_n \to 0$, where
$\mathcal A_\eps$ are admissible classes.

\section{Mathematical setting and scaling} \label{sec-scale}

\paragraph{Variational problem on the unit torus.} Throughout the rest
of this paper the spatial domain $\Omega$ in \eqref{EOm} is assumed to
be a torus, which allows us to avoid dealing with boundary effects and
concentrate on the bulk properties of the energy minimizers. We define
$\TTT := \mathbb R^3 / \mathbb Z^3$ to be the flat three-dimensional
torus with unit sidelength. For $\eps > 0$, which should be treated as
a small parameter, we introduce the following energy functional:
\begin{align}\label{Eeps}
  E_\eps(u) \ := \ \eps \int_\TTT |\nabla u| \, dx + \frac12 \int_\TTT
  (u - \bar u_\eps)(-\Delta)^{-1} (u - \bar u_\eps) \, dx,
\end{align}
where the first term is understood distributionally and the second
term is understood as the double integral involving the periodic
Green's function of the Laplacian, with $u$ belonging to the
admissible class
\begin{align}
  \label{Aeps}
  \AA_\eps \ := \ \left\{ u \in BV(\TTT; \{ 0, 1\}) \ : \ \int_\TTT u \, dx
    = \bar u_\eps \right\},
\end{align}
where
\begin{align}
  \label{ubeps}
  \bar u_\eps := \lambda \, \eps^{2/3},
\end{align}
with some fixed $\lambda > 0$.  The choice of the scaling of
$\bar u_\eps$ with $\eps$ in \eqref{ubeps} will be explained
shortly. To simplify the notation, we suppress the explicit dependence
of the admissible class on $\lambda$, which is fixed throughout the
paper.

\medskip

It is natural to define for $u \in \AA_\eps$ the measure $\mu_\eps$
by
\begin{align} \label{mueps} %
  d \mu_\eps(x) := \eps^{-{2/3}} u(x) \, dx.
\end{align}
In particular, $\mu_\eps$ is a positive Radon measure and satisfies
$\mu_\eps(\TTT) = \lambda$.  Therefore, on a suitable sequence as
$\eps \to 0$ the measure $\mu_\eps$ converges weakly in the sense of
measures to a limit measure $\mu$, which is again a positive Radon
measure and satisfies $\mu(\TTT) = \lambda$.

\paragraph{Function spaces for the measure and potential.} In terms of
$\mu_\eps$ the Coulombic term in \eqref{Eeps} is given by
\begin{align} \label{culo} %
  \frac12 \int_\TTT (u - \bar u_\eps) (-\Delta)^{-1} (u - \bar u_\eps)
  \, dx = \frac{\eps^{4/3}}{2} \int_\TTT \int_\TTT G(x - y) \, d
  \mu_\eps(x) \, d \mu_\eps(y),
\end{align}
where
$G$ is the periodic Green's function of the Laplacian on $\TTT$,
i.e., the unique distributional solution of
\begin{align}
  \label{G}
  - \Delta G(x) = \delta(x) - 1, \qquad \qquad \int_{\TTT} G(x) \, dx =
  0.
\end{align}
If the kernel $G$ in \eqref{culo} were smooth, then one would be able
to pass directly to the limit in the Coulombic term and obtain the
corresponding convolution of the kernel with the limit measure. This
is not possible due to the singularity of the kernel at $\{x =
y\}$.
In fact, the double integral involving the limit measure may be
strictly less than the $\liminf$ of the sequence, and the defect of
the limit is related to a non-trivial contribution of the
self-interaction of the connected components of the set $\{ u = 1 \}$
and its perimeter to the limit energy.

\medskip

On the other hand, the singular character of the kernel provides
control on the regularity of the limit measure $\mu$. To see this, we
define the electrostatic potential $v_\eps \in H^1(\TTT)$ by
\begin{align} \label{veps} %
  v_\eps(x) :=  \int_\TTT G(x-y) \, d \mu_\eps(y),
\end{align}
which solves
\begin{align}
  \label{vepsweak}
  \int_\TTT \nabla \varphi \cdot \nabla v_\eps \, dx = \int_\TTT
  \varphi \, d \mu_\eps - \lambda \int_\TTT \varphi \, dx \qquad
  \qquad \forall \varphi \in C^\infty(\TTT).
\end{align}
By \eqref{mueps}, we can rewrite the corresponding term in the
Coulombic energy as
\begin{align}
  \label{Gmuepsmueps0}
  \int_\TTT \int_\TTT G(x - y) \, d \mu_\eps(x) \, d \mu_\eps(y) =
  \int_\TTT 
  v_\eps \, d \mu_\eps = \int_\TTT |\nabla v_\eps|^2 \, dx.
\end{align}
Hence, if the left-hand side of \eqref{Gmuepsmueps0} remains bounded
as $\eps \to 0$, and since $\int_\TTT v_\eps \, dx = 0$, the sequence
$v_\eps$ is uniformly bounded in $H^1(\TTT)$ and hence weakly
convergent in $H^1(\TTT)$ on a subsequence.

\medskip

By the above discussion, the natural space for the potential is the
space
\begin{align} \label{calH} \mathcal H \ := \ \left\{v \in
    H^1(\TTT):\int_\TTT v \, dx =0\right\} %
  &&\text{with } && %
                    \|v\|_{\mathcal H} \ := \ \left(\int_\TTT |\nabla
                    v|^2 \, dx \right)^{1/2}.
\end{align}
The space $\mathcal H$ is a Hilbert space together with the inner
product
\begin{align}
  \label{Hinner}
  \langle u, v \rangle_{\mathcal H} := \int_\TTT \nabla u \cdot \nabla
  v \, dx \qquad \qquad \forall u,v \in \mathcal H.
\end{align}
The natural class for measures $\mu_\eps$ to consider is the class of
positive Radon measures on $\TTT$ which are also in $\mathcal H'$, the
dual of $\mathcal H$. More precisely, let
$\MM^+(\TTT) \subset \MM(\TTT)$ be the set of all positive Radon
measures on $\TTT$.  We define the subset
$\MM^+(\TTT) \cap \mathcal H'$ of $\MM^+(\TTT)$ by
\begin{align} \label{def-H1*} %
  \MM^+(\TTT) \cap \mathcal H' %
  = \Big \{ \mu \in \MM^+(\TTT) \ : \ \int_\TTT \phi\, d\mu \leq C
  \NNN{\phi}{\mathcal H} \ \ \ \forall \phi \in \mathcal H \cap
  C(\TTT) \Big\},
\end{align}
for some $C>0$. This is the set of positive Radon measures which can
be understood as continuous linear functionals on $\mathcal H$.  Note
that $\mu \in \MM^+(\TTT)$ satisfies
$\mu \in \MM^+(\TTT) \cap \mathcal H'$ if and only if it has finite
Coulombic energy, i.e.
\begin{align} \label{muCoul} %
  \int_\TTT \int_\TTT G(x - y) \, d \mu(x) \, d \mu(y) < \infty,
\end{align}
with the convention that $G(0) = +\infty$. The proof of this
characterization and related facts about
$\MM^+(\TTT) \cap \mathcal H'$ are given in the Appendix.

\paragraph{The whole space problem. } We will also consider the
following related problem, formulated on $\R^3$. We consider the
energy
\begin{align} \label{Einf} %
  \widetilde E_\infty(u) \ := \ \int_\Rn |\nabla u| \, dx + {1 \over 8
  \pi} \int_\Rn \int_\Rn {u(x) u(y) \over |x - y|} \, dx \, dy.
\end{align}
The appropriate admissible class for the energy $\widetilde E_\infty$
in the present context is that of configurations with prescribed
``mass'' $m > 0$:
\begin{align} \label{Ainf} %
  \widetilde{\mathcal A}_\infty(m) := \left\{ u \in BV(\R^3; \{ 0,1\}) \ : 
    \int_\Rn u \, dx = m \right\}.
\end{align}
For a given mass $m > 0$, we define the minimal energy by
\begin{align}\label{eqe}
  e(m) \ := \ \inf_{u \in \widetilde A_\infty(m)} \widetilde E_\infty(u).
\end{align}
The set of masses for which the infimum of $\widetilde E_\infty$ in
$\widetilde A_\infty(m)$ is attained is denoted by
\begin{align} \label{I} %
  \mathcal I := \left\{ m \geq 0 \ : \ \exists\ u_m \in \widetilde
    A_\infty(m), \ \widetilde E_\infty(u_m) = e(m) \right\} ,
\end{align}
The minimization problem associated with \eqref{Einf} and \eqref{Ainf}
was recently studied by two of the authors in \cite{km:cpam13}. In
particular, by \cite[Theorem 3.3]{km:cpam13} the set $\mathcal I$ is
bounded, and by \cite[Theorems 3.1 and 3.2]{km:cpam13} the set $\II$
is non-empty and contains an interval around the origin.

\medskip

For $m \geq 0$, we also define the quantity (with the convention that
$f(0) := +\infty$)
\begin{align} \label{f} %
  f(m) := {e(m) \over m},
\end{align}
which represents the minimal energy for \eqref{Einf} and \eqref{Ainf}
per unit mass.  By \cite[Theorem 3.2]{km:cpam13} there is a universal
$\widetilde m_0 > 0$ such that $f(m)$ is obtained by evaluating
$\widetilde E_\infty$ on a ball of mass $m$ for all
$m \leq \widetilde m_0$. After a simple computation, this yields
\begin{align}  \label{fmball} %
  f(m) = 6^{2/3} \pi^{1/3} m^{-1/3} + 3^{2/3} \cdot 2^{-1/3} \cdot
  10^{-1} \cdot \pi^{-2/3} m^{2/3} \qquad \text{for all } 0 < m \leq
  \widetilde m_0.
\end{align}
Note that obviously this expression also gives an a priori upper
  bound for $f(m)$ for all $m > 0$. In addition, by \cite[Theorem
  3.4]{km:cpam13} there exist universal constants $C,c > 0$ such that
  \begin{align}
    \label{eq:fcC}
  c \leq f(m) \leq C \qquad \text{for all} \ m \geq \widetilde m_0.
  \end{align}
  It was conjectured in \cite{choksi11} that
  $\mathcal I = [0, \widetilde m_0]$ and that
  $\widetilde m_0 = m_{c1}$, where
  \begin{align}
    \label{mc1}
    m_{c1} := \frac{40 \pi}{3} \left( 2^{1/3} + 2^{-1/3} - 1 \right)
    \approx 44.134.
  \end{align}
  The quantity $m_{c1}$ is the maximum value of $m$ for which a ball
  of mass $m$ has less energy than twice the energy of a ball with
  mass $\frac12 m$.  However, such a result is not available at
  present and remains an important challenge for the considered class
  of variational problems (for several related results see
  \cite{km:cpam12,bonacini14,mz:agag15}).

Finally, we define
\begin{align} \label{fstar} %
  f^* := \inf_{m \in \mathcal I} f(m) && %
  \text{and} && %
  \mathcal I^* := \left\{ m^* \in \mathcal I \ : \ f(m^*) = f^*
                \right\}. 
\end{align}
Observe that in view of \eqref{fmball} and \eqref{eq:fcC} we have
$f^* \in (0, \infty)$. Also, as we will show in Theorem
\ref{thm-fstar}, the set $\mathcal I^*$ is non-empty, i.e., the
minimum of $f(m)$ over $\mathcal I$ is attained. In fact, the minimum
of $f(m)$ over $\mathcal I$ is also the minimum over all
$m \in (0, \infty)$ (see Theorem \ref{prp-fstar}). Note that this
result was also independently obtained by Frank and Lieb in their
recent work \cite{frank15}.  The set $\mathcal I^*$ of masses that
minimize the energy $\widetilde E_\infty$ per unit mass and the
associated minimizers (which in general may not be unique) will play a
key role in the analysis of the limit behavior of the minimizers of
$E_\eps$. Note that if $f(m)$ were given by \eqref{fmball} and
$\mathcal I = [0, m_{c1}]$, then we would have explicitly
$\mathcal I^* = \{10 \pi\}$ and
$f^* = 3^{5/3} \cdot 2^{-2/3} \cdot 5^{-1/3} \approx 2.29893$. On the
other hand, in view of the statement following \eqref{fmball}, this
value provides an a priori upper bound on the optimal energy density.

\paragraph{Macroscopic limit \& heuristics.} The limit $\eps \to 0$
with $\lambda > 0$ fixed is equivalent to the limit of the energy in
\eqref{EOm} with $\Omega = \TTT_\ell$, where
$\TTT_\ell := \mathbb R^3 / (\ell \mathbb Z)^3$ is the torus with
sidelength $\ell > 0$, as $\ell \to \infty$. Indeed, introducing the
notation
\begin{align} \label{El} %
  \widetilde E_\ell(\tilde u) \ := \ \int_{\torol} |\nabla \tilde u|
  \, dx + \frac12 \int_\torol (\tilde u - \bar{\tilde{u}}_\ell)
  (-\Delta)^{-1} (\tilde u - \bar{\tilde{u}}_\ell) \, dx,
\end{align}
for the energy in \eqref{EOm} with $\Omega = \TTT_\ell$, and taking
$\bar{\tilde{u}}_\ell = \lambda \ell^{-2}$ and
$\t u \in \widetilde{\mathcal A}_\ell$, where
\begin{align}\label{Al}
  \widetilde{\mathcal A}_\ell \ := \ \left\{ \tilde u \in BV(\torol; \{ 0,
  1\}) \ : \ \int_\torol \tilde u \, dx = \lambda \ell
  \right\},
\end{align}
it is easy to see that $u(x) := \tilde u(\ell x)$ belongs to
$\mathcal A_\eps$ with $\bar u_\eps = \lambda \eps^{2/3}$ for
$\eps = \ell^{-3}$, and we have
$E_\eps(u) = \ell^{-5} \widetilde E_\ell(\tilde u)$. It then follows
that the two limits $\ell \to \infty$ and $\eps \to 0$ are
equivalent. Note that the full space energy $\widetilde E_\infty$ is
the formal limit of \eqref{El} for $\ell \to \infty$.

\medskip

The choice of the scaling of $\bar u_\eps$ with $\eps$ is determined
by the balance of far-field and near-field contributions of the
Coulomb energy. Heuristically, one would expect the minimizers of the
energy in \eqref{Eeps} to be given by the characteristic function of a
set that consists of ``droplets'' of size of order $R \ll 1$ separated
by distance of order $d$ satisfying $R \ll d \ll 1$ (for evidence
based on recent molecular dynamics simulations, see also
\cite{schneider13}). Assuming that the volume of each droplet scales
as $R^3$ (think, for example, of all the droplets as non-overlapping
balls of equal radius and with centers forming a periodic lattice),
from \eqref{El} we find for the surface energy, self-energy and
interaction energy, respectively, for a single droplet:
\begin{align}
  \label{eq:EsEsEi}
  E_\mathrm{surf} \sim \eps R^2, \qquad E_\mathrm{self} \sim R^5,
  \qquad E_\mathrm{int} \sim {R^6 \over d^3}.
\end{align}
Equating these three quantities and recalling that
$R^3 / d^3 \sim \bar u_\eps$, we obtain
\begin{align}
  \label{Rdscale}
  R \sim \eps^{1/3}, \qquad d \sim \eps^{1/9}, \qquad \bar u_\eps \sim
  \eps^{2/3},
\end{align}
which leads to \eqref{ubeps}. Note that, in some sense, this is the
most interesting low volume fraction regime that leads to infinitely
many droplets in the limit as $\eps \to 0$, since both the self-energy
of each droplet and the interaction energy between different droplets
contribute comparably to the energy. For other scalings one would
expect only one of these two terms to contribute in the limit, which
would, however, result in loss of control on either the perimeter term
or the Coulomb term as $\eps \to 0$ and, as a consequence, a possible
change in behavior. Let us note that a different scaling regime, in
which $\bar u_\eps = O(\eps)$, leads instead to finitely many droplets
that concentrate on points as $\eps \to 0$ \cite{choksi10}, while for
$\bar u_\eps = O(1)$ one expects phases of reduced dimensionality,
such as rods and slabs (see Fig. \ref{f:pasta}).

\section{Statement of the main results} \label{sec-main} %

We now turn to stating the main results of this paper concerning the
asymptotic behavior of the minimizers or the low energy configuration
of the energy in \eqref{Eeps} within the admissible class in
\eqref{Aeps}. Existence of these minimizers is guaranteed by the
following theorem.

\begin{theorem}[Minimizers: existence and
  regularity] \label{thm-exist} %
  For every $\lambda > 0$ and every $0 < \eps < \lambda^{-3/2}$ there
  exists a minimizer $u_\eps \in \mathcal A_\eps$ of $E_\eps$ given by
  \eqref{Eeps} with $\bar u_\eps$ given by \eqref{ubeps}. Furthermore,
  after a possible modification of $u_\eps$ on a set of zero Lebesgue
  measure the support of $u_\eps$ has boundary of class $C^\infty$.
\end{theorem}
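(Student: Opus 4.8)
The plan is to treat existence and regularity separately. For \emph{existence}, I would use the direct method of the calculus of variations. Take a minimizing sequence $(u_n) \subset \AA_\eps$ for $E_\eps$. Since $E_\eps(u_n)$ is bounded, the perimeter term $\eps \int_\TTT |\nabla u_n| \, dx$ is bounded, so by compactness in $BV(\TTT; \{0,1\})$ there is a subsequence (not relabelled) with $u_n \to u_\eps$ in $L^1(\TTT)$ and pointwise a.e., hence $u_\eps \in BV(\TTT; \{0,1\})$ and $\int_\TTT u_\eps \, dx = \bar u_\eps$, so $u_\eps \in \AA_\eps$. Lower semicontinuity of the perimeter under $L^1$-convergence handles the first term. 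For the Coulombic term, note that $u_n - \bar u_\eps \to u_\eps - \bar u_\eps$ in $L^2(\TTT)$ (since the functions are uniformly bounded and converge a.e. on a finite-measure space), and $(-\Delta)^{-1}$ restricted to the mean-zero subspace of $L^2(\TTT)$ is a bounded, self-adjoint, positive operator with values in $H^2(\TTT)$; one checks directly that $w \mapsto \tfrac12 \int_\TTT w (-\Delta)^{-1} w \, dx$ is continuous (in fact weakly continuous) with respect to $L^2$-convergence of the mean-zero argument. Therefore $E_\eps(u_\eps) \le \liminf_n E_\eps(u_n) = \inf_{\AA_\eps} E_\eps$, so $u_\eps$ is a minimizer. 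The restriction $\eps < \lambda^{-3/2}$ is exactly what guarantees $\bar u_\eps = \lambda \eps^{2/3} < 1$, so that $\AA_\eps$ is non-empty (and indeed contains functions other than the forbidden constants $0,1$).

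For \emph{regularity}, the argument is the by-now standard one for sets that are quasi-minimizers of the perimeter with a volume constraint and a sufficiently regular ``volume-like'' perturbation. Let $v_\eps := (-\Delta)^{-1}(u_\eps - \bar u_\eps) \in H^2(\TTT) \hookrightarrow C^{0,\alpha}(\TTT)$ be the potential; then for any competitor $u \in BV(\TTT;\{0,1\})$ with $\|u - u_\eps\|_{L^1}$ small, a quadratic expansion of the nonlocal term shows
\begin{align*}
E_\eps(u) \ge \eps \int_\TTT |\nabla u| \, dx + \int_\TTT v_\eps \, (u - u_\eps) \, dx + E_\eps(u_\eps) - C \|u - u_\eps\|_{L^1}^2 ,
\end{align*}
with $C$ depending only on the operator norm of $(-\Delta)^{-1}$ on mean-zero $L^2(\TTT)$. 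Since $v_\eps \in L^\infty(\TTT)$, the first-order term is controlled by $\|v_\eps\|_{L^\infty} \|u - u_\eps\|_{L^1}$, and the quadratic defect is a higher-order perturbation; hence $\partial\{u_\eps = 1\}$ is a $(\Lambda, r_0)$-minimizer of the perimeter in the sense of, e.g., Maggi's book or Giusti, \emph{modulo} the fixed volume constraint. The volume constraint is removed in the usual way by a standard exchange-of-volume argument: given a perturbation that changes the volume by a small amount, one restores the constraint by adding or removing a small ball elsewhere in $\TTT$ at a perimeter cost that is again quadratically small; this converts the constrained problem into an unconstrained $\Lambda$-minimality statement. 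The regularity theory for $\Lambda$-minimizers then gives that $\partial\{u_\eps=1\}$ is $C^{1,\beta}$ for every $\beta \in (0,\tfrac12)$ outside a closed singular set of Hausdorff dimension at most $n-8 = -5$, hence empty in $\R^3$; so the reduced boundary is all of the topological boundary and is a $C^{1,\beta}$ hypersurface. Finally, one bootstraps: on a $C^{1,\beta}$ boundary, the Euler--Lagrange equation reads (after accounting for the Lagrange multiplier $\ell_\eps$ from the volume constraint)
\begin{align*}
\eps \, H_{\partial\{u_\eps=1\}} + v_\eps = \ell_\eps \qquad \text{on } \partial\{u_\eps = 1\},
\end{align*}
where $H$ is the mean curvature; since $v_\eps \in C^{0,\alpha}$, Schauder theory gives $\partial\{u_\eps=1\} \in C^{2,\alpha}$, which in turn improves $v_\eps \in C^{2,\alpha}$ via elliptic regularity for $-\Delta v_\eps = u_\eps - \bar u_\eps$ (the right-hand side is now $C^{0,\alpha}$ near the boundary away from it and piecewise constant), and one iterates to obtain $\partial\{u_\eps=1\} \in C^\infty$. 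The modification on a null set is just the choice of the open representative of $\{u_\eps = 1\}$ whose topological boundary coincides with the reduced boundary.

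The main obstacle is the \emph{regularity} half, specifically setting up the $\Lambda$-minimality cleanly in the presence of both the volume constraint and the nonlocal term, and then running the bootstrap. None of these steps is new — this is exactly the scheme used in \cite{km:cpam13} and in the references therein for the Ohta--Kawasaki functional — so I would cite the relevant results rather than reprove them, and the main work in the write-up is checking that the periodic Green's function $G$ on $\TTT$ produces a potential $v_\eps$ with enough regularity ($H^2 \cap C^{0,\alpha}$ suffices to start) for the quasi-minimality argument to apply, which follows from standard elliptic estimates on the torus since $u_\eps - \bar u_\eps \in L^\infty(\TTT)$.
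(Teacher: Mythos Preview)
Your proposal is correct and follows essentially the same approach as the paper: the direct method for existence (with the same observation that $\eps<\lambda^{-3/2}$ is exactly the non-emptiness condition for $\AA_\eps$), and for regularity the standard route via $(\Lambda,r_0)$-minimality of the perimeter followed by a Schauder bootstrap on the Euler--Lagrange equation. The paper's own proof is in fact more terse than yours, simply citing \cite{sternberg11} for $C^{3,\alpha}$ regularity and \cite{julin13} for the bootstrap to $C^\infty$; one small imprecision in your write-up is the claim that $v_\eps\in C^{2,\alpha}$ after the first iteration---since $u_\eps-\bar u_\eps$ jumps across $\partial\{u_\eps=1\}$ this fails globally, and what one actually uses is the improved regularity of the \emph{trace} of $v_\eps$ on the boundary (this is precisely the content of \cite[Proposition~2.2]{julin13})---but you correctly flag the argument as standard and defer to the literature, which is exactly what the paper does.
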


\begin{proof}
  The proof of Theorem \ref{thm-exist} is fairly standard. We present
  a few details below for the sake of completeness.

  By the direct method of the calculus of variations, minimizers of
  the considered problem exist for all $\eps > 0$ as soon as the
  admissible class $\mathcal A_\eps$ is non-empty, in view of the fact
  that the first term is coercive and lower semicontinuous in
  $BV(\TTT)$, and that the second term is continuous with respect to
  the $L^1(\TTT)$ convergence of characteristic functions. The
  admissible class is non-empty if and only if
  $\eps < \lambda^{-3/2}$.

  H\"older regularity of minimizers was proved in \cite[Proposition
  2.1]{sternberg11}, where it was shown that the essential support of
  minimizers has boundary of class $C^{3,\alpha}$. Smoothness of the
  boundary was established in \cite[Proposition 2.2]{julin13} (see
  also the proof of Lemma \ref{lemreg} below for a brief outline of
  the argument in a closely related context).
\end{proof}

In view of the regularity statement above, throughout the rest of the
paper we always choose the regular representative of a minimizer.

\medskip

We proceed by giving a characterization of the quantity $f^*$ defined
in \eqref{fstar} as the minimal self-energy of a single droplet per
unit mass, i.e., as the minimum of $f(m)$ over $\mathcal I$.

\begin{theorem}[Self-energy: attainment of optimal energy per unit
  mass]\label{thm-fstar}
Let $f^*$ be defined as in \eqref{fstar}. Then
  there exists $m^* \in \mathcal I$ such that $f^* = f(m^*)$.
\end{theorem}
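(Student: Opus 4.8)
The plan is to show that $f^*$, defined in \eqref{fstar} as an infimum over $\mathcal I$, is attained by exploiting compactness of minimizers of the self-energy problem together with the lower semicontinuity of $\widetilde E_\infty$. First I would pick a minimizing sequence $m_k \in \mathcal I$ with $f(m_k) \to f^*$. Since $\mathcal I$ is bounded (by \cite[Theorem 3.3]{km:cpam13}) and contains an interval around the origin, the $m_k$ are bounded; passing to a subsequence we may assume $m_k \to m^*$ for some $m^* \in [0, \sup \mathcal I]$. The first thing to rule out is $m^* = 0$: since $f(m) = 6^{2/3}\pi^{1/3} m^{-1/3} + \dots \to +\infty$ as $m \to 0^+$ by \eqref{fmball}, we would get $f(m_k) \to +\infty$, contradicting $f(m_k) \to f^* < \infty$ (finiteness of $f^*$ being noted right after \eqref{fstar}). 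Hence $m^* > 0$.

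Next I would produce a minimizer at mass $m^*$. For each $k$, let $u_k \in \widetilde A_\infty(m_k)$ be a minimizer with $\widetilde E_\infty(u_k) = e(m_k) = m_k f(m_k)$, which exists because $m_k \in \mathcal I$. The energies $\widetilde E_\infty(u_k) = m_k f(m_k)$ are uniformly bounded, so the perimeters $\int_{\R^3} |\nabla u_k|$ and the Coulomb energies are uniformly bounded. The standard concentration-compactness argument for this functional (as used in \cite{km:cpam13}) applies: after translations, either the sequence is tight and $u_k \to u^*$ in $L^1_{loc}$ with $u^* \in \widetilde A_\infty(m^*)$, or mass splits. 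In the tight case, lower semicontinuity of the perimeter and of the Coulomb term under $L^1_{loc}$ convergence (with mass conservation) gives $\widetilde E_\infty(u^*) \le \liminf \widetilde E_\infty(u_k) = \liminf m_k f(m_k) = m^* f^*$, so $e(m^*) \le m^* f^*$, i.e. $f(m^*) \le f^*$; combined with $f(m^*) \ge f^*$ (as $m^* \in \mathcal I$, which also follows since $u^*$ is then a minimizer) we get $f(m^*) = f^*$ and $m^* \in \mathcal I^*$.

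The main obstacle is the possibility of mass splitting (dichotomy) in the concentration-compactness analysis: the minimizing sequence $u_k$ could break into two or more pieces drifting apart, each carrying a fraction of the mass. Here the key point is a strict (or at least non-strict) subadditivity-type inequality for $e$ together with the sharp structure of $f$. If mass splits into pieces of masses $m^{(1)}, m^{(2)}, \dots$ summing to $m^*$, one obtains $e(m^*) \ge \sum_j e(m^{(j)}) = \sum_j m^{(j)} f(m^{(j)}) \ge f^* \sum_j m^{(j)} = f^* m^*$ only if each $m^{(j)} \in \mathcal I$ and $f(m^{(j)}) \ge f^*$; but a priori the split masses need not lie in $\mathcal I$. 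The clean way around this is to instead argue directly at the level of $f$: observe that the function $e$ is subadditive (one can place two configurations far apart, and the interaction energy between them tends to zero as the separation goes to infinity, while perimeter and self-energies add), hence for any decomposition $m^* = \sum m^{(j)}$ we have $e(m^*) \le \sum e(m^{(j)})$; conversely, splitting a near-optimal configuration for mass $m^*$ shows that if $f^*$ were not attained one could always lower the energy per unit mass by taking $m \to$ (boundary of $\mathcal I$ or an accumulation point), and the combination of the explicit lower bound \eqref{fmball} near $0$ and the uniform bounds \eqref{eq:fcC} on $[\widetilde m_0, \infty)$ confines the infimum to a compact subinterval of $(0,\infty)$ on which $f$ is lower semicontinuous by the compactness argument above. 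I expect the cleanest exposition to combine these: reduce to a minimizing sequence $m_k$ in a compact interval $[\delta, \sup\mathcal I] \subset (0,\infty)$, extract $m_k \to m^* > 0$, run concentration-compactness on the associated minimizers $u_k$, and use subadditivity of $e$ to exclude dichotomy — whenever splitting occurs, at least one sub-piece of mass $\tilde m \le m^*$ satisfies $e(\tilde m)/\tilde m \le f^*$, forcing $\tilde m \in \mathcal I$ and $f(\tilde m) = f^*$, so we may replace $m^*$ by $\tilde m$ and still land in $\mathcal I^*$. Either way the infimum is attained.
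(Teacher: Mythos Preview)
Your sketch has the right instinct---extract a minimizing sequence $m_k\to m^*$, pass to the limit in the minimizers---but the treatment of dichotomy is a genuine gap, and the paper's route is different and cleaner.

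The paper does not run concentration--compactness. Instead it proves two structural facts and then applies the extreme value theorem: (i) the function $f(m)=e(m)/m$ is continuous (indeed Lipschitz on $[m_0,\infty)$, Corollary~\ref{core}), and (ii) the set $\mathcal I$ is \emph{compact} (Lemma~\ref{lemI}). Both rest on the machinery of \emph{generalized minimizers} (Definition~\ref{defgenmin}), which always exist (Theorem~\ref{thm-mingen}) and come with uniform diameter and volume bounds on their components (Theorem~\ref{prodelta}). Continuity of $e$ is obtained by a simple rescaling of a generalized minimizer (Lemma~\ref{lem-elip}); closedness of $\mathcal I$ follows because minimizers with bounded mass have uniformly bounded diameter, so after translation $u_k\rightharpoonup u$ in $BV$ with no loss of mass (Lemma~\ref{teocomp}), and continuity of $e$ forces $u$ to be a minimizer. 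No dichotomy alternative ever arises.

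Your argument, by contrast, must confront dichotomy head-on, and the last paragraph does not close it. The assertion ``at least one sub-piece of mass $\tilde m$ satisfies $e(\tilde m)/\tilde m\le f^*$, forcing $\tilde m\in\mathcal I$'' is unjustified: $f(\tilde m)\le f^*$ does \emph{not} imply $\tilde m\in\mathcal I$, since $f^*$ is defined as an infimum over $\mathcal I$ only, and the pieces produced by splitting are not known to be minimizers at their respective masses. Even in the tight case you need continuity of $e$ (easy by rescaling, but you do not state it) to conclude that the limit $u^*$ is actually a minimizer and hence $m^*\in\mathcal I$; without that, $f(m^*)\le f^*$ is compatible with $m^*\notin\mathcal I$.

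The cleanest fix is the one the paper uses implicitly: invoke the uniform diameter bound on minimizers with bounded mass (from \cite[Lemmas~4.1 and~4.3]{km:cpam13}, or Theorem~\ref{prodelta} here), which makes your sequence $(u_k)$ tight after translation and eliminates dichotomy altogether. Then add the one-line rescaling argument for continuity of $e$, and your approach becomes correct---and essentially equivalent to the paper's Lemma~\ref{lemI}.
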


\noindent With the result in Theorem \ref{thm-fstar}, we are now in
the position to state our main result on the $\Gamma$-limit of the
energy in \eqref{Eeps}, which can be viewed as a generalization of
\cite[Theorem 1]{gms:arma13}.

\begin{theorem}[$\Gamma$-convergence] \label{thm-main} %
  For a given $\lambda > 0$, let $E_\eps$ be defined by \eqref{Eeps}
  with $\bar u_\eps$ given by \eqref{ubeps}. Then as $\eps \to 0$ we
  have $\eps^{-{4/3}} E_\eps \stackrel{\Gamma}{\to} E_0$, where
  \begin{align} \label{E0} %
    E_0(\mu) := \lambda f^* + \frac12 \int_\TTT \int_\TTT G(x-y) \, d
    \mu(x) \, d \mu(y),
  \end{align}
  and
  $\mu \in \MM^+(\TTT) \cap \mathcal H'$ satisfies
  $\mu (\TTT) = \lambda$. More precisely,
  \begin{itemize}
  \item[i)] (Compactness and $\Gamma$-liminf inequality) Let
    $(u_\eps) \in \mathcal A_\eps$ be such that
    \begin{align}
      \label{Epsbounded}
      \limsup_{\eps \to 0} \eps^{-4/3} E_\eps(u_\eps) < \infty,
    \end{align}
    and let $\mu_\eps$ and $v_\eps$ be defined in \eqref{mueps} and
    \eqref{veps}, respectively.  Then, upon extraction of a
    subsequence, we have 
    \begin{align} \label{muepsmu} %
      \mu_\eps \rightharpoonup \mu \textrm{ in } \MM(\TTT), \qquad
      v_\eps \rightharpoonup v \textrm{ in } \mathcal H,
    \end{align}
    as $\eps \to 0$, for some $\mu \in \MM^+(\TTT) \cap \mathcal H'$
    with $\mu(\TTT) = \lambda$, the function $v$ has a representative
    in $L^1(\TTT, d \mu)$ given by
    \begin{align} \label{v} %
      v(x) = \int_\TTT G(x - y) \, d \mu(y),
    \end{align}
    and 
    \begin{align}
      \liminf_{\eps \to 0} \eps^{-4/3} E_\eps(u_\eps) \geq
      E_0(\mu).      
    \end{align}
  \item[ii)] ($\Gamma$-limsup inequality) For any measure $\mu \in
    \MM^+(\TTT) \cap \mathcal H'$ with $\mu(\TTT) = \lambda$ there
    exists a sequence $(u_\eps) \in \mathcal A_\eps$ such that
    \eqref{muepsmu} and \eqref{v} hold as $\eps \to 0$ for $\mu_\eps$
    and $v_\eps$ defined in \eqref{mueps} and \eqref{veps},
    and 
    \begin{align}
      \limsup_{\eps \to 0} \eps^{-4/3} E_\eps(u_\eps) \leq
      E_0(\mu).      
    \end{align}
  \end{itemize}
\end{theorem}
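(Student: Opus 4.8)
The plan is to treat the two terms of the energy separately after recording the rescaled identity
\begin{align*}
  \eps^{-4/3} E_\eps(u) \ = \ \eps^{-1/3} \int_\TTT |\nabla u| \, dx
  + \frac12 \int_\TTT \int_\TTT G(x-y) \, d\mu_\eps(x) \, d\mu_\eps(y)
  \ = \ \eps^{-1/3} \int_\TTT |\nabla u| \, dx + \frac12 \|v_\eps\|_{\mathcal H}^2,
\end{align*}
which follows from \eqref{culo}, \eqref{mueps}, \eqref{veps} and \eqref{Gmuepsmueps0}. The compactness in part i) is then immediate: \eqref{Epsbounded} makes $\|v_\eps\|_{\mathcal H}$ uniformly bounded, so a subsequence of $v_\eps$ converges weakly in $\mathcal H$ to some $v$, while $\mu_\eps \in \MM^+(\TTT)$ with $\mu_\eps(\TTT)=\lambda$ yields, after a further subsequence, $\mu_\eps \rightharpoonup \mu$ in $\MM(\TTT)$ with $\mu \in \MM^+(\TTT)$, $\mu(\TTT)=\lambda$. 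Passing to the limit in \eqref{vepsweak} identifies $v$ with the mean--zero solution of $-\Delta v = \mu - \lambda$, i.e.\ \eqref{v} holds; testing \eqref{vepsweak} against mean--zero $\varphi$ and using weak lower semicontinuity gives $\int_\TTT \varphi\, d\mu \le \|\varphi\|_{\mathcal H}\,\liminf_{\eps\to0}\|v_\eps\|_{\mathcal H}$, so that $\mu \in \MM^+(\TTT)\cap\mathcal H'$; and $v \in L^1(\TTT,d\mu)$ with $\int_\TTT v\, d\mu = \|v\|_{\mathcal H}^2$ follows because $G$ is bounded below on $\TTT$ and $\|v\|_{\mathcal H}<\infty$ (these last facts being the content of the Appendix).

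For the $\Gamma$--$\liminf$ inequality I would argue droplet by droplet. Write $\{u_\eps=1\}=\bigcup_i F_i^\eps$ as the union of its connected components and put $m_i := \eps^{-1}|F_i^\eps|$, so that $\mu_\eps(F_i^\eps) = \eps^{1/3}m_i =: m_i^\eps$ and $\sum_i m_i^\eps = \lambda$. Fix a small $r>0$ and split $G = G_r + (G-G_r)$, where $G_r$ is the bounded continuous kernel obtained from $G$ by flattening the $\tfrac{1}{4\pi|z|}$ singularity below scale $r$, so that $G-G_r \ge 0$ is supported in $\{|z|<r\}$ and equals $\big(\tfrac{1}{4\pi|z|}-\tfrac{1}{4\pi r}\big)^{\!+}$ there. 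Discarding the non-negative cross terms,
\begin{align*}
  \frac12 \|v_\eps\|_{\mathcal H}^2 \ \ge \ \frac12 \int_\TTT\int_\TTT G_r(x-y)\, d\mu_\eps(x)\, d\mu_\eps(y)
  \ + \ \frac12 \sum_i \int_{F_i^\eps}\int_{F_i^\eps} (G-G_r)(x-y)\, d\mu_\eps(x)\, d\mu_\eps(y) .
\end{align*}
The first term converges to $\tfrac12\int_\TTT\int_\TTT G_r(x-y)\,d\mu(x)\,d\mu(y)$ by continuity of $G_r$ and weak convergence of $\mu_\eps$. For the second, rescaling $F_i^\eps$ by $\eps^{1/3}$ to a set $F_i'$ of volume $m_i$ and noting that for $\eps$ small the flattening of $G_r$ is inactive on $F_i'\times F_i'$, an elementary change of variables gives
\begin{align*}
  \eps^{-1/3}\int_\TTT |\nabla \chi_{F_i^\eps}|\, dx
  + \frac12 \int_{F_i^\eps}\int_{F_i^\eps} (G-G_r)(x-y)\, d\mu_\eps(x)\, d\mu_\eps(y)
  \ \ge \ \eps^{1/3}\widetilde E_\infty(\chi_{F_i'}) - \frac{(m_i^\eps)^2}{8\pi r},
\end{align*}
and, since $\chi_{F_i'}\in \widetilde A_\infty(m_i)$ and $f(m)\ge f^*$ for all $m>0$ (Theorem \ref{prp-fstar}), the right-hand side is $\ge f^*m_i^\eps - (m_i^\eps)^2/(8\pi r)$. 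Summing over $i$, bounding $\sum_i (m_i^\eps)^2 \le \lambda\,\sup_i m_i^\eps$, and sending first $\eps\to0$ and then $r\to0$, one arrives at $\liminf_{\eps\to0}\eps^{-4/3}E_\eps(u_\eps)\ge \lambda f^* + \tfrac12\int_\TTT\int_\TTT G(x-y)\,d\mu(x)\,d\mu(y)=E_0(\mu)$, \emph{provided} $\sup_i m_i^\eps\to0$. Establishing this last proviso, together with the companion fact that the few droplets of macroscopic diameter (for which the rescaling above must be modified) carry negligible mass and energy, is the main obstacle: it is not a soft consequence of the energy bound, since the self-energy $\eps^{1/3}\widetilde E_\infty(\chi_{F_i'})$ of a large connected droplet need not be large, so one must also use the Coulombic interaction and the boundedness of the set of attained masses $\mathcal I$ (\cite[Theorem 3.3]{km:cpam13})---a connected component carrying a fixed fraction of $\lambda$ is then forced into a stringy or macroscopically spread configuration whose rescaled Coulombic energy diverges. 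Making this quantitative is precisely the role of the uniform estimates on the rescaled torus in Section \ref{sec-prop}, in particular the pointwise potential bounds of Theorem \ref{lem-pot}, which I would invoke here.

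For the $\Gamma$--$\limsup$ inequality I would first reduce to measures with smooth density: measures $\rho\,dx$ with $\rho\in C^\infty(\TTT)$, $\rho\ge0$, $\int_\TTT\rho = \lambda$ are dense among $\MM^+(\TTT)\cap\mathcal H'$ of mass $\lambda$ in a topology along which $E_0$ is continuous (convolution with a smooth mollifier preserves the mass, converges weakly and in $\mathcal H'$, and $E_0$ passes to the limit), so a diagonal argument reduces the construction to this case. Given $\mu = \rho\,dx$ as above, fix $m^*\in\mathcal I^*$ and a minimizer $u^* = \chi_{F^*}$ of $\widetilde E_\infty$ in $\widetilde A_\infty(m^*)$, which exists by Theorem \ref{thm-fstar}. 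Partition $\TTT$ into cubes of side $h=h(\eps)\to0$ decaying slowly, and in each cube $Q$ place $N_Q := \big\lfloor \tfrac{1}{m^*}\,\eps^{-1/3}\!\int_Q \rho\,dx\big\rfloor$ disjoint translated copies of $\eps^{1/3}F^*$, a droplet of volume $\eps m^*$ and $\mu_\eps$-mass $\eps^{1/3}m^*$; since the droplets have size $\sim\eps^{1/3}$ and can be kept at mutual distance $\sim\eps^{1/9}\gg\eps^{1/3}$ they fit without overlap. After perturbing the volume of one droplet per cube by $o(\eps)$ to enforce $\int_\TTT u_\eps = \bar u_\eps$ exactly, the resulting $u_\eps\in\mathcal A_\eps$ satisfies $\mu_\eps\rightharpoonup\rho\,dx=\mu$, hence $v_\eps\rightharpoonup v = G*\mu$. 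Finally I would check energy convergence: combining $\eps^{-1/3}\int_\TTT|\nabla u_\eps|$ with the within-droplet part of $\tfrac12\int_\TTT\int_\TTT G\,d\mu_\eps\,d\mu_\eps$ produces $\eps^{1/3}\sum_i\widetilde E_\infty(\chi_{F_i'}) = \big(\sum_Q N_Q\big)\,\eps^{1/3}e(m^*)\to \lambda f^*$, while the cross-droplet part together with the regular part of $G$ converges, the droplets being infinitesimally small in the limit, to $\tfrac12\int_\TTT\int_\TTT G(x-y)\,d\mu(x)\,d\mu(y)$; the remaining errors---from truncating $G$ near the diagonal at the droplet scale, from the mass adjustment, and from discretizing $\rho$---are routine to estimate and vanish as $\eps\to0$.
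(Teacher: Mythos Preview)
Your compactness argument and the $\Gamma$--limsup construction are essentially the same as the paper's (Propositions~\ref{prp-lower} Step~1 and~\ref{prp-upper}), so those parts are fine.

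The $\Gamma$--liminf argument, however, has a genuine gap. Your droplet-by-droplet approach requires that every connected component $F_i^\eps$ have diameter $<r$ (so that the truncated kernel $G-G_r$ captures the full within-component Coulomb energy) and that the error $\sum_i (m_i^\eps)^2/(8\pi r)\le \lambda\sup_i m_i^\eps/(8\pi r)$ vanish as $\eps\to 0$. Neither is a consequence of the energy bound alone: one can connect many small droplets by thin tubes at negligible energetic cost and produce bounded-energy sequences with components of diameter of order $1$. You recognize this and propose to invoke the uniform estimates of Section~\ref{sec-prop}, in particular Theorem~\ref{lem-pot}. But those estimates are proved \emph{only for minimizers} of $\widetilde E_\ell$ (they rely on the Euler--Lagrange equation and on competitor constructions that decrease the energy), and so cannot be used for the $\Gamma$--liminf, which must hold for \emph{every} sequence $(u_\eps)\in\mathcal A_\eps$ with bounded rescaled energy.

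The paper avoids this difficulty by not decomposing into connected components at all. After the same near-field/far-field splitting, it rescales the whole configuration to $\TTT_{\ell_\eps}$, uses a slicing argument to extend it to a single function $\hat u$ on $\R^3$ with a controlled perimeter defect, and then bounds the near-field part from below by the truncated whole-space energy $\widetilde E_\infty^{\rho_0}(\hat u)$. The key input is then purely from Section~\ref{sec-general}: by Corollary~\ref{cor-ER}, for $\rho_0\ge R_1$ the infimum of $\widetilde E_\infty^{\rho_0}$ over $\widetilde{\mathcal A}_\infty(m)$ equals $e(m)$ (attained by a \emph{generalized} minimizer), and by Theorem~\ref{prp-fstar} one has $e(m)/m\ge f^*$ for all $m>0$. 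This yields $\liminf_{\eps\to 0}\eps^{-4/3}E_\eps^{(2)}(u_\eps)\ge(1-c\rho)\lambda f^*$ directly, with no information on individual components required. In short, the device that replaces your component-wise estimate is the existence and structure of generalized minimizers (Theorem~\ref{thm-mingen}, Corollary~\ref{cor-ER}); Section~\ref{sec-prop} plays no role in the proof of Theorem~\ref{thm-main}.
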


Note that the weak convergence of measures was recently identified in
\cite{gms:arma13} (see also \cite{m:cmp10}) as a suitable notion of
convergence for the studies of the $\Gamma$-limit of the
two-dimensional version of the energy in \eqref{Eeps}.

Observe also that the limit energy $E_0$ is a strictly convex
functional of the limit measure and, hence, attains a unique global
minimum. By direct inspection, $E_0$ is minimized by $\mu = \mu_0$,
where $d \mu_0 := \lambda dx$.  Thus, the quantity $f^*$ plays the
role of the optimal energy density in the limit $\eps \to 0$.

The remaining results are concerned with sequences of minimizers.  We
will hence assume that the functions $(u_\eps) \in \AA_\eps$ are
minimizers of the functional $E_\eps$. In this case, we can give a
more precise characterization for the asymptotic behavior of the
sequence.  We first note the following immediate consequence of
Theorem \ref{thm-main} for the convergence of sequences of minimizers.

\begin{corollary}[Minimizers: uniform distribution of
  mass] \label{cor-conv} %
  For $\lambda > 0$, let $(u_\eps) \in \mathcal A_\eps$ be minimizers
  of $E_\eps$, and let $\mu_\eps$ and $v_\eps$ be defined in
  \eqref{mueps} and \eqref{veps}, respectively. Then 
  \begin{align}
    \label{muepsveps}
    \mu_\eps \rightharpoonup \mu_0 \textrm{ in }
    \mathcal M(\TTT), \qquad v_\eps \rightharpoonup 0 \textrm{ in
    } \mathcal H, 
  \end{align}
  where $d \mu_0 = \lambda dx$, and
  \begin{align}
    \label{Eepsmin}
    \eps^{-4/3} E_\eps(u_\eps) \to \lambda f^*,
  \end{align}
  where $f^*$ is as in \eqref{fstar}, as $\eps \to 0$.
\end{corollary}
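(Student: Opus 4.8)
The plan is to obtain Corollary \ref{cor-conv} as the standard consequence of the $\Gamma$-convergence established in Theorem \ref{thm-main}: equicoercivity together with $\Gamma$-convergence forces a sequence of minimizers to converge to a minimizer of the limit functional, and forces the rescaled energies to converge to the minimal value of that functional. Since the limit $E_0$ is strictly convex with unique minimizer $\mu_0$ given by $d\mu_0 = \lambda\,dx$, this pins down the limit.

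The first step is to verify the equicoercivity bound \eqref{Epsbounded} for minimizers. Fix the admissible competitor $\mu_0$ with $d\mu_0 = \lambda\,dx$; the $\Gamma$-limsup part of Theorem \ref{thm-main} provides a recovery sequence $(\hat u_\eps) \in \AA_\eps$ with $\limsup_{\eps\to0}\eps^{-4/3}E_\eps(\hat u_\eps) \le E_0(\mu_0)$. Because each $u_\eps$ minimizes $E_\eps$ over $\AA_\eps$, we have $\eps^{-4/3}E_\eps(u_\eps) \le \eps^{-4/3}E_\eps(\hat u_\eps)$, hence $\limsup_{\eps\to0}\eps^{-4/3}E_\eps(u_\eps) \le E_0(\mu_0) < \infty$. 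I would also record here that $E_0(\mu_0) = \lambda f^*$: indeed the Coulombic term of the constant measure $\lambda\,dx$ vanishes, since $\int_\TTT G(x-y)\,dx = 0$ for every $y$ by the normalization in \eqref{G}.

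The second step applies the compactness and $\Gamma$-liminf part of Theorem \ref{thm-main}. Along a subsequence, $\mu_\eps \rightharpoonup \mu$ in $\MM(\TTT)$ and $v_\eps \rightharpoonup v$ in $\mathcal H$ for some $\mu \in \MM^+(\TTT) \cap \mathcal H'$ with $\mu(\TTT) = \lambda$, with $v$ represented by \eqref{v}, and $\liminf_{\eps\to0}\eps^{-4/3}E_\eps(u_\eps) \ge E_0(\mu)$. Chaining this with the first step gives $E_0(\mu) \le \liminf_{\eps\to0}\eps^{-4/3}E_\eps(u_\eps) \le \limsup_{\eps\to0}\eps^{-4/3}E_\eps(u_\eps) \le E_0(\mu_0)$. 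Since $\mu_0$ lies in the admissible class, $\mu$ minimizes $E_0$; by strict convexity of $E_0$ (noted right after Theorem \ref{thm-main}) the minimizer is unique, so $\mu = \mu_0$. Consequently $v(x) = \int_\TTT G(x-y)\,\lambda\,dy = 0$, that is $v = 0$. As the limit is the same for every convergent subsequence, the full sequences converge, which is \eqref{muepsveps}; and the displayed chain of inequalities collapses to $\eps^{-4/3}E_\eps(u_\eps) \to E_0(\mu_0) = \lambda f^*$, which is \eqref{Eepsmin}.

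I do not expect any genuine obstacle here — this is a textbook deduction from $\Gamma$-convergence. The only points deserving a word of care are that the equicoercivity bound uses the existence of recovery sequences (the $\Gamma$-limsup half of Theorem \ref{thm-main}), not merely hypothesis \eqref{Epsbounded}, and that the passage from subsequential to full convergence relies on the uniqueness of the minimizer of the strictly convex functional $E_0$.
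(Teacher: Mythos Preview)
Your proof is correct and is exactly the standard deduction from $\Gamma$-convergence that the paper intends; indeed, the paper does not give an explicit proof of this corollary but simply calls it ``the following immediate consequence of Theorem \ref{thm-main}'' and notes right after the theorem that $E_0$ is strictly convex with unique minimizer $d\mu_0=\lambda\,dx$. Your write-up spells out precisely this argument, including the only two points that warrant care (equicoercivity via the recovery sequence for $\mu_0$, and full-sequence convergence from uniqueness of the limit).
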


The formula in \eqref{Eepsmin} suggests that in the limit the energy
of the minimizers is dominated by the self-energy, which is captured
by the minimization problem associated with the energy
$\widetilde E_\infty$ defined in \eqref{Einf}. Therefore, it would be
natural to expect that asymptotically every connected component of a
minimizer is close to a minimizer of $\widetilde E_\infty$ under the
mass constraint associated with that connected component. Note that in
a closely related problem in two space dimensions such a result was
established in \cite{m:cmp10} for minimizers, and in
\cite{gms:arma13,gms:arma14} for almost minimizers. The situation is,
however, unique in two space dimensions, because the non-local term in
some sense decouples from the perimeter term. Hence, the minimizers
behave as almost minimizers of the perimeter and, therefore, are close
to balls. In three dimensions, however, the perimeter and the
non-local term of the self-energy $\widetilde E_\infty$ are fully
coupled, and, therefore, rigidity estimates for the perimeter
functional alone \cite{fusco08} may not be sufficient to conclude
about the ``shape'' of the minimizers. Nevertheless, we are able to
prove a result about the uniform distribution of the energy density of
the minimizers as $\eps \to 0$ in the spirit of that of
\cite{alberti09}. For a minimizer $u_\eps$, the energy density is
  associated with the Radon measure $\nu_\eps$ defined by
\begin{align}
  \label{nuepsden}
  d \nu_\eps := \eps^{-4/3} \Big( \eps |\nabla u_\eps| + \tfrac12
  \eps^{2/3} u_\eps v_\eps \Big) dx,
\end{align}
where $v_\eps$ is given by \eqref{veps} and \eqref{mueps}.
Furthermore, we are able to identify the leading order constant in the
asymptotic behavior of the energy density. 

\begin{theorem}[Minimizers: uniform distribution of
  energy] \label{thm-uniform} %
  For $\lambda > 0$, let $(u_\eps) \in \mathcal A_\eps$ be minimizers
  of $E_\eps$ and let $\nu_\eps$ be defined in \eqref{nuepsden}. Then
  \begin{align}
    \label{uepsmuepsveps}
    \nu_\eps \rightharpoonup \nu_0
    \textrm{ in } \MM(\TTT) \qquad \text{ as $\eps \to 0$,}
    \end{align}
    where $d \nu_0 = \lambda f^* dx$ and $f^*$ is as in \eqref{fstar}.
\end{theorem}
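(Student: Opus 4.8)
The plan is to upgrade the $\Gamma$-convergence result of Theorem \ref{thm-main} and the minimality of $u_\eps$ from a statement about the total energy to a statement about the energy \emph{density}. First I would observe that, by Corollary \ref{cor-conv}, we already know $\eps^{-4/3} E_\eps(u_\eps) \to \lambda f^*$, that $\mu_\eps \rightharpoonup \mu_0$ with $d\mu_0 = \lambda\,dx$, and that $v_\eps \rightharpoonup 0$ in $\mathcal H$; in particular $\nu_\eps(\TTT) \to \lambda f^*$. Since $\nu_\eps$ is a sequence of positive Radon measures with uniformly bounded mass, it is weakly-$\ast$ precompact, so along a subsequence $\nu_\eps \rightharpoonup \nu_0$ for some positive Radon measure $\nu_0$ with $\nu_0(\TTT) = \lambda f^*$. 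The whole content of the theorem is then that the only possible limit is $\nu_0 = \lambda f^*\,dx$; once this is shown for every convergent subsequence, the full sequence converges.

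The key step is a \emph{localization} argument in the spirit of \cite{alberti09}. Fix an arbitrary open cube (or ball) $Q \subset \TTT$ with $|\partial Q| = 0$. I would show the two matching one-sided bounds:
\begin{align}
  \label{eq:uniform-liminf}
  \liminf_{\eps\to 0}\ \nu_\eps(Q) \ \geq\ \lambda f^*\, |Q|,
  \qquad
  \limsup_{\eps\to 0}\ \nu_\eps(Q) \ \leq\ \lambda f^*\, |Q|.
\end{align}
For the lower bound, I would apply a \emph{localized} version of the $\Gamma$-liminf inequality of Theorem \ref{thm-main}: restricting attention to the droplets inside $Q$, the perimeter and Coulomb self-interaction of those droplets contribute at least $f^*$ per unit mass contained in $Q$ (using the definition of $f^*$ as the optimal self-energy per unit mass and the fact that the full-space functional bounds the localized one from below), while the part of $\mu_\eps$ in $Q$ has mass converging to $\lambda |Q|$ by Corollary \ref{cor-conv}; the cross-interaction terms between $Q$ and $\TTT\setminus Q$ are nonnegative and can be discarded. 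This is essentially the compactness-and-liminf machinery of Section \ref{sec-main-proof} applied with test functions supported near $Q$, together with the uniform regularity and separation estimates for droplets of a minimizer from Section \ref{sec-prop} (Theorem \ref{lem-pot}) to control boundary effects at $\partial Q$. For the upper bound I would use a complementary decomposition: $\nu_\eps(\TTT) \to \lambda f^* = \lambda f^*(|Q| + |\TTT\setminus Q|)$, and since the liminf bound \eqref{eq:uniform-liminf} applied to $\TTT\setminus Q$ (which also has $|\partial(\TTT\setminus Q)|=0$) gives $\liminf \nu_\eps(\TTT\setminus Q) \geq \lambda f^*|\TTT\setminus Q|$, subtracting yields $\limsup \nu_\eps(Q) \leq \lambda f^* |Q|$. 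Combining the two inequalities gives $\nu_\eps(Q) \to \lambda f^*|Q|$ for every such $Q$, and since cubes with null boundary generate the topology, this forces $\nu_0 = \lambda f^*\,dx$.

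The main obstacle I anticipate is making the \emph{localized} $\Gamma$-liminf bound rigorous, i.e.\ showing that the energy carried by droplets lying (mostly) inside $Q$ is at least $f^*$ per unit of their mass, uniformly up to $o(1)$ errors, without losing energy to the artificial boundary $\partial Q$ or to droplets straddling it. This requires the uniform a priori bounds on the number, size, diameter, and mutual separation of connected components established in Section \ref{sec-prop} (and used in Theorem \ref{thm-droplets}), so that only $o(\eps^{-4/3})$ worth of mass and energy can be associated with droplets intersecting a thin neighborhood of $\partial Q$; one also needs that the long-range Coulomb interaction between the inside and outside of $Q$ is nonnegative (true since $G \geq$ its mean and the relevant quadratic form, suitably arranged, is positive), so that it may be dropped in the lower bound. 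Modulo these uniform estimates — which are exactly the ingredients prepared in Sections \ref{sec-prop} and \ref{sec-drop} — the argument is a routine subadditivity/superadditivity bookkeeping on cubes, and the strict convexity of $E_0$ (already noted after Theorem \ref{thm-main}) guarantees that the limiting density is the constant one.
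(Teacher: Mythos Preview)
Your overall strategy---prove a localized lower bound $\liminf_{\eps\to 0} \nu_\eps(Q) \geq \lambda f^*\,|Q|$ on balls, then obtain the matching upper bound on $Q$ from the complement together with the total-mass identity $\nu_\eps(\TTT)\to\lambda f^*$---is exactly the one the paper follows in Section~\ref{sec-proof-uniform}. However, two of the supporting claims you make are false as stated, and the argument collapses without them.

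First, $\nu_\eps$ is \emph{not} a positive measure: the term $\tfrac12 v_\eps\,d\mu_\eps$ in \eqref{nuepsden} can be negative, since $v_\eps$ has zero average on $\TTT$ and hence takes negative values, possibly on $\mathrm{supp}(u_\eps)$. Thus weak-$\ast$ compactness does not follow from a bound on $\nu_\eps(\TTT)$ alone; you must bound $|\nu_\eps|(\TTT)$. The paper does this via the elementary lower bound $v_\eps(x)\geq \lambda\min_{y\in\TTT}G(y)$, which gives $\nu_\eps^-(U)\leq C\,\mu_\eps(U)$ and hence $\nu_\eps^+(\TTT)\leq \nu_\eps(\TTT)+\nu_\eps^-(\TTT)\leq C'$.

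Second, your assertion that the cross-interaction between $Q$ and $\TTT\setminus Q$ is nonnegative because ``$G\geq$ its mean'' is wrong: $G$ has zero mean and is strictly negative on part of $\TTT$, so $\int_Q\int_{\TTT\setminus Q}G(x-y)\,d\mu_\eps(x)\,d\mu_\eps(y)$ has no sign a~priori. The paper does not argue by sign; it uses the near/far-field split $G=G_\rho+H_\rho$ from Proposition~\ref{prp-lower}. The smooth far-field piece $\int_\TTT G_\rho(x-y)\,d\mu_\eps(y)$ converges \emph{uniformly in $x$} (by Corollary~\ref{cor-conv}) to $\lambda\int_\TTT G_\rho\,dy\geq -C\rho^2$, which is a harmless error; the near-field piece $H_\rho$ is nonnegative for small $\rho$ and supported in $B_\rho(0)$, so only mass within distance $\rho$ of $\partial Q$ can interact across the boundary through it. That boundary contribution is then absorbed by a simple Fubini slicing (choose a good radius $t\in(\delta',\delta)$ with small trace $\int_{\partial B_t}u_\eps\,d\mathcal H^2$, restrict, and compare to the whole-space problem via Corollary~\ref{cor-ER}). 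This is considerably lighter than the droplet-separation machinery from Section~\ref{sec-prop} you propose to invoke; indeed the paper's proof of Theorem~\ref{thm-uniform} uses nothing from Section~\ref{sec-prop}, so the result is logically prior to Theorem~\ref{thm-droplets}.
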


Finally, we characterize the connected components of the support of
the minimizers of $E_\eps$ and show that almost all of them approach,
on a suitable sequence as $\eps \to 0$ and after a suitable rescaling
and translation, a minimizer of $\widetilde E_\infty$ with mass in the
set $\mathcal I^*$.

\begin{theorem}[Minimizers: droplet structure] \label{thm-droplets} %
  For $\lambda > 0$, let $(u_\eps) \in \mathcal A_\eps$ be regular
  representatives of minimizers of $E_\eps$, let $N_\eps$ be the
  number of the connected components of the support of $u_\eps$, let
  $u_{\eps,k} \in BV(\R^3; \{0, 1\})$ be the characteristic function
  of the $k$-th connected component of the support of the periodic
  extension of $u_\eps$ to the whole of $\R^3$ modulo translations in
  $\mathbb Z^3$, and let $x_{\eps,k} \in \mathrm{supp}(u_{\eps,k})$.
  Then there exists $\eps_0 > 0$ such that the following properties
  hold:
  \begin{enumerate}
  \item[i)] There exist universal constants $C, c > 0$ such that for
    all $\eps \leq \eps_0$ we have
  \begin{align}
    \label{ccDueps}
    \| v_\eps \|_{L^\infty(\TTT)} \leq C \qquad \mathrm{and} \qquad
    \int_{\R^3} u_{\eps,k} \, dx \geq c \eps,
  \end{align}
  where $v_\eps$ is given by \eqref{veps}.
\item[ii)] There exist universal constants $C, c > 0$ such that for
  all $\eps \leq \eps_0$ we have
  \begin{align}
    \label{cCNeps}
    \mathrm{supp} (u_{\eps,k}) \subseteq B_{C \eps^{1/3}}(x_{\eps,k})
    \qquad \mathrm{and} \qquad c  \lambda   
    \eps^{-1/3} \leq N_\eps \leq C 
    \lambda \eps^{-1/3} . 
  \end{align}
\item[iii)] There exists $\widetilde N_\eps \leq N_\eps$ with
  $\widetilde N_\eps / N_\eps \to 1$ as $\eps \to 0$ and a subsequence
  $\eps_n \to 0$ such that for every $k_n \leq \widetilde N_{\eps_n}$
  the following holds: After possibly relabeling the connected
  components, we have
  \begin{align}
    \tilde u_n \to \tilde u &&\text{in $L^1(\R^3)$}, 
  \end{align}
  where
  $\tilde u_n (x) := u_{\eps_n,k_n}(\eps_n^{1/3} (x +
  x_{\eps_n,k_n}))$,
  and $\tilde u$ is a minimizer of $\widetilde E_\infty$ over
  $\widetilde{\mathcal A}_\infty(m^*)$ for some
  $m^* \in \mathcal I^*$, where $\mathcal I^*$ is defined in
  \eqref{fstar}.
  \end{enumerate}
\end{theorem}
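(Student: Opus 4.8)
\emph{Proof proposal.} The plan is to combine the asymptotic energy identity $\eps^{-4/3}E_\eps(u_\eps)\to\lambda f^*$ from Corollary~\ref{cor-conv} with a quantitative description of the connected components of a minimizer as almost-minimizers of the whole space energy $\widetilde E_\infty$ from \eqref{Einf}, relying on the existence and uniform regularity theory for $\widetilde E_\infty$ developed in Section~\ref{sec-general}. I would establish i), ii), iii) in this order, each step using the preceding ones; the substantial work --- carried out in Sections~\ref{sec-prop} and~\ref{sec-drop} --- is in i) and ii), after which iii) is soft.

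\textbf{Parts i) and ii).} Corollary~\ref{cor-conv} first gives $\eps\int_\TTT|\nabla u_\eps|\le C\eps^{4/3}$ and $\tfrac12\int_\TTT u_\eps v_\eps\,dx\le C$, hence $\int_\TTT|\nabla u_\eps|\le C\eps^{1/3}$ and $\|v_\eps\|_{\mathcal H}\le C$. The uniform bound $\|v_\eps\|_{L^\infty(\TTT)}\le C$ in \eqref{ccDueps} is Theorem~\ref{lem-pot}; I would prove it by a De Giorgi--type iteration on the super-level sets $\{v_\eps>t\}$ in which minimality absorbs the defect: if $v_\eps$ exceeded a large universal threshold on a ball $B_r(x)$, deleting the mass of $u_\eps$ inside $B_r(x)$ and reinserting it where $v_\eps$ is small would lower the Coulombic energy by more than the perimeter increase, contradicting minimality; iterating (after first controlling the local mass density $\eps^{-2/3}|\{u_\eps=1\}\cap B_r(x)|$) closes the $L^\infty$ bound. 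Given this, the Euler--Lagrange equation for $u_\eps$ bounds the inner mean curvature of $\partial\{u_\eps=1\}$ by $C\eps^{-1}$ and shows $\{u_\eps=1\}$ is a $(\Lambda,r_0)$-minimizer of perimeter with $\Lambda\sim\eps^{-1/3}$, $r_0\sim\eps^{1/3}$, so the usual density estimates hold up to scale $\eps^{1/3}$. For the volume bound $\int_{\R^3}u_{\eps,k}\,dx\ge c\eps$ in \eqref{ccDueps}: if $V_k:=\int_{\R^3}u_{\eps,k}\,dx\ll\eps$, then deleting the $k$-th component and enlarging a component of volume comparable to $\eps$ by the same volume changes the surface energy by $\eps(-cV_k^{2/3}+CV_k\eps^{-1/3})<0$ while changing the Coulombic energy only by $O\big(\eps^{-4/3}(\|v_\eps\|_{L^\infty}\eps^{2/3}V_k+V_k^{5/3})\big)=o(\eps^{2/3}V_k^{2/3})$, a contradiction. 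For the estimates in \eqref{cCNeps} the nonlocal term is essential: via the surgery arguments of Section~\ref{sec-drop} a component whose rescaled mass $m_k:=\eps^{-1}V_k$ exceeds a universal constant, or whose support has diameter $\gg\eps^{1/3}$, can be cut into near-optimal pieces with a net energy gain that the interaction with the other components --- controlled by $\|v_\eps\|_{L^\infty}\le C$ --- cannot offset; this yields $m_k\le C$ and $\mathrm{supp}(u_{\eps,k})\subseteq B_{C\eps^{1/3}}(x_{\eps,k})$. Finally $\sum_kV_k=\bar u_\eps=\lambda\eps^{2/3}$ together with $c\eps\le V_k\le C\eps$ gives $c\lambda\eps^{-1/3}\le N_\eps\le C\lambda\eps^{-1/3}$.

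\textbf{Part iii).} Set $\tilde u_{\eps,k}(x):=u_{\eps,k}(x_{\eps,k}+\eps^{1/3}x)$ (the rescaled and recentered $k$-th component, cf. the statement), so $\mathrm{supp}(\tilde u_{\eps,k})\subseteq B_C(0)$ and $\int_{\R^3}\tilde u_{\eps,k}\,dx=m_k\in[c,C]$ by i)--ii). Since $G(z)=\tfrac1{4\pi|z|}+O(1)$ near $z=0$ and each component has diameter $O(\eps^{1/3})$, the surface plus Coulombic self-energy of the $k$-th component is $\eps^{5/3}(\widetilde E_\infty(\tilde u_{\eps,k})+o(1))$, uniformly in $k$, so
\[
  \eps^{-4/3}E_\eps(u_\eps)\ \ge\ \eps^{1/3}\sum_{k=1}^{N_\eps}\widetilde E_\infty(\tilde u_{\eps,k})\ +\ R_\eps\ +\ o(1),
\]
where $R_\eps$ is the mutual Coulombic interaction of the single-component potentials $w_\eps^{(k)}$ (whose sum is $v_\eps$). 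Because the components have diameter $O(\eps^{1/3})$ while being pairwise $\gg\eps^{1/3}$ apart, and the point masses $\eps^{-2/3}V_k\,\delta_{x_{\eps,k}}$ converge weakly to $\mu_0=\lambda\,dx$ by Corollary~\ref{cor-conv}, one has $R_\eps\to\tfrac12\int_\TTT\int_\TTT G(x-y)\,d\mu_0(x)\,d\mu_0(y)=0$ (the same near-field/far-field bookkeeping as in the proof of Theorem~\ref{thm-main}, conveniently organized via a truncated kernel as in Section~\ref{sec-general}). Since $\widetilde E_\infty(\tilde u_{\eps,k})\ge e(m_k)\ge f^* m_k$ by \eqref{fstar} and Theorem~\ref{prp-fstar} and $\eps^{1/3}\sum_k m_k=\eps^{-2/3}\bar u_\eps=\lambda$, the right-hand side above is $\ge\lambda f^*+o(1)$; as the left-hand side tends to $\lambda f^*$, both nonnegative quantities $\eps^{1/3}\sum_k(\widetilde E_\infty(\tilde u_{\eps,k})-e(m_k))$ and $\eps^{1/3}\sum_k(e(m_k)-f^* m_k)$ tend to $0$. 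With $N_\eps\sim\lambda\eps^{-1/3}$, for each $\delta>0$ the number of indices $k$ with $\widetilde E_\infty(\tilde u_{\eps,k})-e(m_k)\ge\delta$ or $e(m_k)-f^* m_k\ge\delta$ is $o(N_\eps)$, so a diagonal choice $\delta=\delta_\eps\to0$ produces $\widetilde N_\eps\le N_\eps$ with $\widetilde N_\eps/N_\eps\to1$ and a subsequence $\eps_n\to0$ along which, for all $k_n\le\widetilde N_{\eps_n}$, $\widetilde E_\infty(\tilde u_{\eps_n,k_n})-e(m_{k_n})\to0$ and $e(m_{k_n})-f^* m_{k_n}\to0$. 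Since $g(m):=e(m)-f^* m$ is nonnegative and lower semicontinuous with $g^{-1}(0)=\mathcal I^*$ (Theorem~\ref{thm-fstar}, Theorem~\ref{prp-fstar}), $\mathcal I^*$ compact and $(m_{k_n})$ bounded by ii), we get $\mathrm{dist}(m_{k_n},\mathcal I^*)\to0$, hence $m_{k_n}\to m^*\in\mathcal I^*$ along a further subsequence; then $\widetilde E_\infty(\tilde u_{\eps_n,k_n})\to e(m^*)$ by continuity of $e$. As the $\tilde u_{\eps_n,k_n}$ are supported in $B_C(0)$ with uniformly bounded perimeter, a subsequence converges in $L^1(\R^3)$ to some $\tilde u$ with $\int\tilde u=m^*$, and lower semicontinuity of $\widetilde E_\infty$ gives $\widetilde E_\infty(\tilde u)\le e(m^*)$, whence $\tilde u$ minimizes $\widetilde E_\infty$ over $\widetilde{\mathcal A}_\infty(m^*)$.

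\textbf{Main obstacle.} The hard part is i)--ii): the uniform $L^\infty$ bound on the potential (Theorem~\ref{lem-pot}) and the surgery estimates excluding very small, very large, and very elongated components. Both genuinely use minimality via comparison with modified admissible configurations, together with quantitative control of how the Coulombic energy responds to deleting, cutting, and relocating mass. Once these are available, part iii) follows fairly mechanically from the energy split, the inequality $e(m)\ge f^* m$, and $BV$-compactness with lower semicontinuity of $\widetilde E_\infty$.
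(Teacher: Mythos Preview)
Your overall architecture is right and matches the paper: parts i)--ii) are obtained by surgery/comparison estimates on minimizers (Section~\ref{sec-prop}), and part iii) is then a soft consequence of the energy identity from Corollary~\ref{cor-conv}, the inequality $e(m)\ge f^*m$, and $BV$-compactness with lower semicontinuity of $\widetilde E_\infty$. For iii) your splitting into the two defects $\widetilde E_\infty(\tilde u_{\eps,k})-e(m_k)$ and $e(m_k)-f^*m_k$ is equivalent to the paper's single excess $\widetilde E_\infty(\tilde u_{\eps,k})-(f^*+\delta)m_k$; the diagonal/Chebyshev counting and the compactness step are the same.

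Where you differ is in the concrete mechanics of i)--ii). The paper does \emph{not} run a De~Giorgi iteration on $\{v_\eps>t\}$; instead it proves a pointwise self-improving gradient bound $|\nabla v_F(x)|\le\tfrac32(v_F(x)+C)$ (Lemma~\ref{stimagrad}, Corollary~\ref{corgrad}), uses a covering of $\TTT_\ell$ to locate a ball where the average of $v_F$ is controlled, and then runs a dichotomy: either a differential inequality for $m(r)=|F\cap B_r(\bar x)|$ forces $V:=\max v_F$ to be bounded, or the surface--cut inequality fails at some radius and a single relocation of that cap into the low-potential ball contradicts minimality. For the volume lower bound and the diameter bound the paper does \emph{not} do direct surgery either; it first shows, via Lemma~\ref{afm}, that minimizers are volume-constrained $(\Lambda,r_0)$-minimizers with $\Lambda=O(1)$ once $\|v_F\|_{L^\infty}$ is bounded, and then reads off the density estimate (Corollary~\ref{cor-dens}) and the diameter bound (Lemma~\ref{lem-diam}, via the integral $\int_{F\cap B_R}|x-y|^{-1}\,dy\le C$). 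Your surgery routes (delete-and-enlarge for small components; cut long components) are plausible alternatives, but the details you give need tightening: the mean-curvature bound in the $\TTT$-scale is $O(\eps^{-1/3})$, not $O(\eps^{-1})$ (equivalently $\Lambda\sim\eps^{-1/3}$, $r_0\sim\eps^{1/3}$, which you state correctly); and the claim ``components are pairwise $\gg\eps^{1/3}$ apart'' is neither proved nor needed --- the paper handles the cross terms exactly as you say parenthetically, by the near-field/far-field split from the proof of Proposition~\ref{prp-lower}, without any separation assumption. In short: same strategy, but for i)--ii) the paper's route via the gradient estimate for $v_F$ and $(\Lambda,r_0)$-minimality is cleaner and avoids the bookkeeping your surgeries would require.
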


The significance of this theorem lies in the fact that it shows that
all the connected components of the support of a minimizer for
sufficiently small $\eps$ look like a collection of {\em droplets} of
size of order $\eps^{1/3}$ separated by distances of order
$\eps^{1/9}$ on average. In particular, the conclusion of the theorem
excludes configurations that span the entire length of the torus, such
as the ``spaghetti'' or ``lasagna'' phases of nuclear pasta (see
Fig. \ref{f:pasta}). Thus, the ground state for small enough
$\eps > 0$ is a multi-droplet pattern (a ``meatball''
phase). Furthermore, after a rescaling most of these droplets converge
to minimizers of the non-local isoperimetric problem associated with
$\widetilde E_\infty$ that minimize the self-energy per unit mass.

\section{The problem in the whole space} \label{sec-general}

In this section, we derive some results about the single droplet
problem from \eqref{Einf}--\eqref{Ainf} and the rescaled problem from
\eqref{El}--\eqref{Al}.

\subsection{The truncated energy $\widetilde E_\infty^R$}

For reasons that will become apparent shortly, it is helpful to
consider the energies where the range of the nonlocal interaction is
truncated at certain length scale $R$. We choose a cut-off function
$\eta \in C^\infty(\R)$ with $\eta'(t) \leq 0$ for all $t\in\R$,
$\eta(t)=1$ for all $t \leq 1$ and $\eta(t)=0$ for all $t\geq 2$.  In
the following, the choice of $\eta$ is fixed once and for all, and the
dependence of constants on this choice is suppressed to avoid clutter
in the presentation. For $R > 0$, we then define
$\eta_R \in C^\infty(\mathbb R^3)$ by $\eta_R(x) := \eta(|x|/R)$. For
$u \in \widetilde{\mathcal A}_\infty(m)$, we consider the truncated
energy
\begin{align} \label{Einfk} %
  \widetilde E_\infty^R(u) := \int_{\R^3} |\nabla u| \, dx
  +\int_{\R^3} \int_{\R^3} \frac{\eta_R ( x-y ) u(x)u(y)}{8 \pi
  |x-y|}\, dx \, dy.
\end{align}
This functional will be useful in the analysis of the variational
problems associated with $\widetilde E_{\infty}$ and $E_\eps$. We
recall that by the results of \cite{rigot00}, for each $R > 0$ and
each $m > 0$ there exists a minimizer of $\widetilde E_\infty^R$ in
$\widetilde{\mathcal A}_\infty(m)$.  Furthermore, after a possible
redefinition on a set of Lebesgue measure zero, its support has
boundary of class $C^{1,1/2}$ and consists of finitely many connected
components. Below we always deal with the representatives of
minimizers that are regular.

\medskip

The following uniform density bound for minimizers of the energy is an
adaption of \cite[Lemma 4.3]{km:cpam13} for the truncated energy
$\widetilde E_\infty^R$ and generalizes the corresponding bound for
minimizers of $\widetilde E_{\infty}$.

\begin{lemma}[Density bound] \label{lem-uER} %
  There exists a universal constant $c > 0$ such that for every
  minimizer $u \in \widetilde{\mathcal A}_\infty(m)$ of
  $\widetilde E_\infty^R$ for some $R,m > 0$ and any
  $x_0 \in \overline F$ we have
  \begin{align}
    \int_{B_r(x_0)} u \, dx\geq cr^3 \qquad\text{for all }r\le
    \min(1,m^{1/3}).    
  \end{align}
\end{lemma}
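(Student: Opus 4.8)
The plan is to prove the density bound by the standard "volume-transfer / minimality comparison" argument, adapting the proof of \cite[Lemma 4.3]{km:cpam13} to keep track of the truncation. Let $u \in \widetilde{\mathcal A}_\infty(m)$ minimize $\widetilde E_\infty^R$, let $F := \mathrm{supp}(u)$, and fix $x_0 \in \overline F$. Set $m(r) := \int_{B_r(x_0)} u\, dx$ and suppose for contradiction that $m(r)$ is small relative to $c r^3$ for some $r \le \min(1, m^{1/3})$. The idea is to produce a competitor $\tilde u$ by removing the (small) mass $m(r)$ from $B_r(x_0)$ and re-inserting it far away as a tiny ball $B_\rho(z)$ with $|B_\rho| = m(r)$, disjoint from $F$ and placed so that the new ball sees almost no interaction. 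Minimality of $u$ then forces $\widetilde E_\infty^R(u) \le \widetilde E_\infty^R(\tilde u)$, and the goal is to show this inequality is violated unless $m(r) \gtrsim r^3$.

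First I would estimate the change in each term. For the perimeter term: deleting $u$ inside $B_r(x_0)$ replaces $P(F; B_r(x_0))$ by at most $\mathcal H^2(\partial B_r(x_0) \cap F) \le \mathcal H^2(\partial B_r(x_0)) \le C r^2$, while the perimeter of the re-inserted ball is $C \rho^2 = C m(r)^{2/3}$; since $m(r) \le C r^3$ this is a lower-order gain, and in fact one uses the relative isoperimetric inequality in $B_r(x_0)$, namely $m(r)^{2/3} \le C\bigl(P(F;B_r(x_0)) + \mathcal H^2(\partial B_r \cap F^{(1)})\bigr)$, to absorb it. A more careful version chooses $r$ (via Fubini / a good-slice argument on the a.e.-differentiable function $m$) so that $m'(r) = \mathcal H^2(\partial B_r(x_0) \cap F)$ is controlled by $m(r)/r$. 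For the nonlocal term: removing mass $m(r)$ can only decrease $\widetilde E_\infty^R$; the only positive contribution is the self-energy of the new ball $B_\rho(z)$, which is $\le C \rho^5 = C m(r)^{5/3}$, plus its interaction with $F$, which — because $z$ is chosen at distance $\ge 2R$ from $F$ — vanishes thanks to the cutoff $\eta_R$. (One must first check such a $z$ exists: since $|F| = m < \infty$, almost every translate works.) Therefore the net effect is, schematically,
\begin{align}
  \widetilde E_\infty^R(\tilde u) - \widetilde E_\infty^R(u) \le -\,c_0\, \frac{m(r)^{2/3}}{\text{(local perimeter lower bound)}} + C m(r)^{2/3} \cdot \frac{m(r)^{1/3}}{r} + C m(r)^{5/3},
\end{align}
so that for $m(r)$ small compared to $r^3$ the negative isoperimetric gain dominates, contradicting minimality; hence $m(r) \ge c r^3$.

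The main obstacle, and the place where the truncation actually matters, is ensuring that re-inserting the mass far away costs nothing from the nonlocal interaction: this is exactly why we need $z$ at distance $> 2R$ from $F$, and why the argument is clean for $\widetilde E_\infty^R$ but would require the decay estimates of \cite{km:cpam13} for the untruncated $\widetilde E_\infty$. The second delicate point is the bookkeeping of constants to make them \emph{universal}, i.e. independent of $R$ and $m$: one restricts to $r \le \min(1, m^{1/3})$ precisely so that (i) the removed mass $m(r) \le m$ is genuinely available to move, and (ii) $\rho = (m(r)/\omega_3)^{1/3} \le r \le 1$ keeps the self-energy term $m(r)^{5/3}$ subdominant to $m(r)^{2/3}$. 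A subtlety I would flag is that one should run the differential-inequality form of the argument (à la De Giorgi / the density estimates for $(\Lambda,r_0)$-minimizers) rather than a single comparison at one radius, to get the bound for \emph{all} $r \le \min(1,m^{1/3})$ and for every $x_0 \in \overline F$ uniformly; this is routine once the one-step estimate above is in place, so I would state it and refer to \cite{km:cpam13, rigot00} for the iteration.
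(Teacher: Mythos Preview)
Your competitor --- excising $F\cap B_r(x_0)$ and re-inserting its mass as a ball $B_\rho(z)$ placed at distance $>2R$ from $F$ --- does not produce the differential inequality you need. The perimeter of the inserted ball is $(36\pi)^{1/3}m(r)^{2/3}$, with the \emph{sharp} isoperimetric constant. Writing out the minimality comparison gives
\[
P(F;B_r)\ \le\ m'(r) + (36\pi)^{1/3}m(r)^{2/3} + C\,m(r)^{5/3},
\]
while the only lower bound available on the left is the isoperimetric inequality applied to $F\cap B_r$, namely $P(F;B_r)+m'(r)\ge (36\pi)^{1/3}m(r)^{2/3}$. Subtracting, the two $m(r)^{2/3}$ terms cancel exactly and you are left with the vacuous inequality $0\le 2m'(r)+C\,m(r)^{5/3}$. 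Replacing the full isoperimetric inequality by the relative one in $B_r$ makes things worse, not better: its constant is strictly \emph{smaller} than $(36\pi)^{1/3}$ (the extremal is a spherical cap), so the $m(r)^{2/3}$ terms no longer cancel but the sign is wrong. Your claim that the ball's perimeter is ``a lower-order gain'' that can be ``absorbed'' by the isoperimetric inequality is therefore incorrect --- the two contributions are of precisely the same order with the same constant.

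The paper follows \cite{km:cpam13} and restores the mass by \emph{dilating} the remaining set $F\setminus B_r$ by the factor $\ell=(m/(m-m(r)))^{1/3}>1$, rather than by inserting a ball. The perimeter cost of the dilation is $(\ell^2-1)P(F\setminus B_r)=O(m(r))$, genuinely lower order than $m(r)^{2/3}$, so the inequality $m'(r)\gtrsim m(r)^{2/3}$ survives and the iteration goes through. The only point where the truncation enters is that the nonlocal energy does not scale exactly as $\ell^5$; the paper checks that the defect has a favourable sign because $\eta$ is monotone decreasing and $\ell>1$. Your observation that the cutoff kills the cross-interaction with a far-away ball is correct, but it is not the mechanism that drives the proof.
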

\begin{proof}
  The claim follows by an adaption of the proof of \cite[Lemma
  4.3]{km:cpam13} to our truncated energy
  $\widetilde E_\infty^R$. Indeed, it is enough to show that the
  statement of \cite[Lemma 4.2]{km:cpam13} holds with
  $\widetilde E_{\infty}$ replaced by $\widetilde E_\infty^R$.
  The proof of this statement needs to be modified, since the kernel
  in the definition of $\widetilde E_\infty^R$ is not
  scale-invariant. We sketch the necessary changes, using the same
  notation as in \cite{km:cpam13}.

  \medskip
  
  The construction of the sets $\widetilde F$ and $\widehat F$
  proceeds as in the proof of \cite[Lemma 4.3]{km:cpam13}. The upper
  bound \cite[Eq.~(4.6)]{km:cpam13} still holds since
  $\widetilde E_\infty^R(u) \leq \widetilde E_{\infty}(u)$.  Related
  to the cut-off function in the definition of
  $\widetilde E_\infty^R$, we get an additional term in the right-hand
  side of the first line of \cite[Eq.~(4.6)]{km:cpam13}, which is of
  the form
  \begin{align} \label{extratermER} %
    \int_{\ell F_1} \int_{\ell F_1} {\eta_R(x - y) \over |x -
      y|^\alpha} \, dx \, dy - \ell^{2n-\alpha} \int_{F_1} \int_{F_1}
    {\eta_R(x - y) \over
      |x - y|^\alpha} \, dx \, dy \hspace{2cm} \notag \\
    = \ell^{2n-\alpha} \int_{F_1} \int_{F_1} {\eta_{R / \ell}(x - y) -
      \eta_R(x - y) \over |x - y|^\alpha} \, dx \, dy < 0,
  \end{align}
  since $\ell > 1$ and since the function $\eta$ is monotonically
  decreasing (note that $\alp = 1$ in our case). Since this term has a
  negative sign, \cite[Eq.~(4.6)]{km:cpam13} still holds. The rest of
  the argument then carries through unchanged.
\end{proof}
The following lemma establishes a uniform diameter bound for the
minimizers of $\widetilde E_\infty^R$. The idea of the proof is
similar to the one in \cite[Lemma 5]{lu14}.
\begin{lemma}[Diameter bound] \label{lem-ln} %
  There exist universal constants $R_0 > 0$ and $D_0 > 0$ such that
  for any $R \geq R_0$, any $m > 0$ and for any minimizer
  $u \in \widetilde{\mathcal A}_\infty(m)$ of
  $\widetilde E_\infty^R$, the diameter of each connected
  component $F_0$ of $\mathrm{supp}(u)$ is bounded above by $D_0$.
\end{lemma}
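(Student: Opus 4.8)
The plan is to argue by contradiction, exploiting the following dichotomy for a connected component $F_0$ of $\mathrm{supp}(u)$ with large diameter: either $F_0$ contains a sizable chunk of mass that is far from the bulk of $F_0$ and can be transported elsewhere to strictly decrease the energy, or $F_0$ is ``thin'' along its length, in which case its perimeter is too large compared to its mass and splitting it off a small ball yields a competitor of lower energy. First I would use the density bound of Lemma \ref{lem-uER}: if $\mathrm{diam}(F_0) = L$ is large, then covering a diameter-realizing segment by $\sim L$ balls of unit radius and applying the lower density bound shows $m \gtrsim$ some increasing function of $L$ cannot hold unboundedly — more precisely, if $m$ is bounded then $L$ is automatically bounded, so the only interesting case is $m$ large, where $F_0$ must have mass spread over a long region. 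The key observation is that the truncation at scale $R$ means the nonlocal interaction between two pieces of $F_0$ separated by distance $> 2R$ is exactly zero, so a long component behaves, energetically, like a disjoint union of bounded pieces interacting only with neighbors within distance $2R$.

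The core step is a cut-and-paste estimate. Suppose $L = \mathrm{diam}(F_0) \gg R$. Choose a slab of width $O(R)$ perpendicular to the long direction that cuts $F_0$ into two nonempty pieces $F_0'$ and $F_0''$, chosen (via a mean-value / pigeonhole argument over the family of parallel slabs) so that the perimeter added by the cut is controlled — i.e., there exists a cut whose cross-sectional area is $\lesssim \frac{1}{R}\,|F_0 \cap \text{slab region}|$, hence $\lesssim \frac{m}{R}$. Cutting here changes the surface energy by at most $O(m/R)$ (a gain if we pick $R_0$ large, since we also remove the interface where the slab meets $F_0$), and it can only decrease the truncated Coulomb energy because we are removing (nonnegative) interaction terms. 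Now I translate one of the two pieces, say $F_0''$, far away — distance $> 2R$ from everything — so that it no longer interacts with anything; by truncation this strictly lowers (or does not raise) the nonlocal energy and leaves the perimeter unchanged. Iterating, I break $F_0$ into pieces each of diameter $O(R)$; the total surface-energy cost of all the cuts is $\lesssim \frac{m}{R} \cdot (L/R) $, but this is not obviously small, so the cleaner route is: perform a \emph{single} well-chosen cut that strictly decreases the energy, contradicting minimality. To make a single cut strictly profitable I need the cut cross-section to be \emph{strictly} smaller than the interface area of $F_0$ it replaces plus the nonlocal energy released; choosing the cut in the region of $F_0$ where the cross-sectional ``profile function'' $A(t) := |F_0 \cap \{x_1 = t\}|$ is small — which must happen somewhere if $L$ is large and $m$ is fixed relative to $L$ — and releasing the full interaction energy between the two halves gives the contradiction once $R \geq R_0$ and $L \geq D_0$ for universal $R_0, D_0$.

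The main obstacle I anticipate is controlling the perimeter cost of the cut uniformly in $m$ and $R$: one must ensure that a hyperplane section exists whose area is genuinely small — ideally $o(1)$ or at least beaten by the released energy — and that slicing does not create new interface beyond the flat cut (this uses the regularity of $\partial F_0$ and a coarea/Vol'pert-type argument to pick a good slicing level, as in \cite[Lemma 5]{lu14}). A subtlety is that after the cut the moved piece $F_0''$ may itself fail to have small diameter, but that is irrelevant: we only need one strictly-improving modification to contradict minimality, and we are free to place $F_0''$ anywhere in $\R^3$ at distance $> 2R$ from the rest since the domain is all of space. The quantitative balance — surface cost $\lesssim A(t^\ast)$ versus nonlocal gain — is where the universal constants $R_0$ and $D_0$ are fixed, and it parallels the proof of the analogous non-truncated diameter bound in \cite{km:cpam13,lu14}, with the truncation actually \emph{helping} because interactions at range $> 2R$ vanish identically rather than merely decaying.
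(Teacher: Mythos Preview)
Your overall strategy---detach a piece of $F_0$ and translate it beyond range $2R$ to contradict minimality---is the right one, and you correctly use the density bound to handle the small-mass case. You also correctly identify the main obstacle: controlling the perimeter cost of the cut uniformly in $m$. The gap is that your hyperplane-cut competitor does not resolve this obstacle, and the paper uses a different construction that does.

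With a hyperplane cut, the perimeter cost is $A(t^\ast)$, and pigeonholing gives only $A(t^\ast)\lesssim m_0/L$; since the density bound yields $m_0\gtrsim L$ (a lower bound) but no upper bound, this cost is not universally controlled---think of a long thick cylinder. Meanwhile the nonlocal gain from a single cut, with the kernel truncated at $R$, comes only from pairs within distance $2R$ of the cut and is never quantified in your sketch; there is no reason it should dominate $A(t^\ast)$. The paper instead cuts off a \emph{unit ball}: the competitor is $u-\chi_{F_0\cap B_1(x_0)}+\chi_{F_0\cap B_1(x_0)}(\cdot+b)$ for $|b|$ large, whose added perimeter is at most $4\pi$ regardless of $m$. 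For the gain, one places disjoint balls $B_1(x_0),\dots,B_1(x_N)$ along a diameter with $N\sim\mathrm{diam}(F_0)$, each carrying mass $\geq c$ by Lemma~\ref{lem-uER} (when $m_0>1$), and bounds the interaction of $F_0\cap B_1(x_0)$ with the rest from below by a constant times $\sum_{k=1}^{N}\eta_R(2k+2)/(2k+2)\gtrsim\min\{\log N,\log R\}$. This logarithmic gain beats the fixed cost $4\pi$ once $N$ and $R$ exceed universal thresholds, which is what fixes $R_0$ and $D_0$. That harmonic-series estimate against a fixed $O(1)$ surface cost is the missing ingredient.
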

\begin{proof}
  Let $F_0$ be a connected component of the support of $u$ with
  $m_0 := |F_0|$. Since $u$ is a minimizer, $\chi_{F_0}$ is also a
  minimizer of $\widetilde E_\infty^R$ over
  $\mathcal A_\infty(m_0)$. Indeed, if not, replacing $u$ with
  $u - \chi_{F_0} + \chi_{\widetilde F_0}$, where
  $\chi_{\widetilde F_0}$ is a minimizer of
  $\widetilde E_\infty^R$ over $\mathcal A_\infty(m_0)$
  translated sufficiently far from the support of $u$ would lower the
  energy, contradicting the minimizing property of $u$.

  \medskip

  We may assume without loss of generality that $R \geq 2$ and
  $\diam F_0 \geq 2$. Then there is $N \in \N$ such that
  $2N \leq \diam F_0 < 2(N+1)$.  In particular there exist
  $x_0, \ldots, x_N \in \overline F_0$ such that $|x_k - x_0| = 2 k$
  for every $1 \leq k \leq N$ and, therefore, the balls $B_1(x_k)$ are
  mutually disjoint. If $m_0 \leq 1$, then by Lemma \ref{lem-uER} we
  have $|F_0 \cap B_r(x_k)| \geq c m_0$ for $r = m_0^{1/3} \leq 1$ and
  some universal $c > 0$. Therefore,
  \begin{align}
    \label{m0l1}
    m_0 \geq \sum_{k=1}^N |F_0 \cap B_r(x_k)| \geq c m_0 N,
  \end{align}
  implying that $N \leq N_0$ for some universal $N_0 \geq 1$ and,
  hence, $\diam F_0 \leq 2(N_0 + 1)$. If, on the other hand,
  $m_0 > 1$, then by Lemma \ref{lem-uER} we have
  $|F_0 \cap B_1(x_k)| \geq c$ for some universal $c > 0$.  By
  monotonicity of the kernel in $|x - y|$, we get
  \begin{align} \notag %
    \int_{F_0 \cap B_1(x_0)} \int_{F_0 \backslash B_1(x_0)} {\eta_R(x
    - y) \over |x - y|} \, dx \, dy %
    \geq c^2 \sum_{k=1}^N {\eta_R(2 k + 2) \over 2 k + 2} %
    \geq C \min \{ \log N, \log R \},
  \end{align}
  for some universal $C > 0$. Hence, if $R$ and $N$ are sufficiently
  large, then it is energetically preferable to move the charge in
  $B_1(x_0)$ sufficiently far from the remaining charge. More
  precisely, consider
  $\tilde u = u - \chi_{F_0 \cap B_1(x_0)} + \chi_{F_0 \cap B_1(x_0)}
  (\cdot + b)$,
  for some $b \in \mathbb R^3$ with $|b|$ sufficiently large. Then
  $\tilde u \in \mathcal A_\infty(m_0)$ and
  \begin{align}
    \label{EF0EE}
    \widetilde E_\infty^R(\tilde u) \leq \widetilde E_\infty^R(u)
    + 4 \pi - \tfrac12 C \min \{ \log N, \log R \} < 0,
  \end{align}
  for all $R \geq R_0$ and $N > N_0$ for some universal constants
  $R_0 \geq 2$ and $N_0 \geq 1$.  Therefore, minimality of $u$ implies
  that $N \leq N_0$ whenever $R \geq R_0$ and hence
  $\diam F_0 \leq 2 (N_0 + 1)$.
\end{proof}

\subsection{Generalized minimizers of $\widetilde E_\infty$}

We begin our analysis of $\widetilde E_\infty$ by introducing the
notion of generalized minimizers of the non-local isoperimetric
problem.

\begin{definition}[Generalized minimizers] \label{defgenmin} %
  Given $m>0$, we call a {\em generalized minimizer} of
  $\widetilde E_\infty$ in $\widetilde A_\infty(m)$ a collection of
  functions $(u_1,\ldots,u_N)$ for some $N \in \mathbb N$ such that
  $u_i$ is a minimizer of $\widetilde E_\infty$ over
  $\widetilde{\mathcal A}_\infty(m_i)$ with $m_i= \int_\TTT u_i \, dx$
  for all $i\in\{1,\ldots,N\}$, and
\begin{align}
  \label{msumem}
  m=\sum_{i=1}^N m_i \qquad \qquad
  \text{and} \qquad \qquad
  e(m)=\sum_{i=1}^N e(m_i).
\end{align}
\end{definition}

\noindent Clearly, every minimizer of $\widetilde E_\infty$ in
$\widetilde{\mathcal A}_\infty(m)$ is also a generalized minimizer
(with $N = 1$). As was shown in \cite{km:cpam13}, however, minimizers
of $\widetilde E_\infty$ in $\widetilde{\mathcal A}_\infty(m)$ may not
exist for a given $m > 0$ because of the possibility of splitting
their support into several connected components and moving those
components far apart. As we will show below, this possible loss of
compactness of minimizing sequences can be compensated by considering
characteristic functions of sets whose connected components are
``infinitely far apart'' and among which the minimum of the energy is
attained (by a generalized minimizer with some $N > 1$).  We also
remark that, if $(u_1, \ldots, u_N)$ is a generalized minimizer, then,
as can be readily seen from the definition, any sub-collection of
$u_i$'s is also a generalized minimizer with the mass equal to the sum
of the masses of its components.

\medskip

We now proceed to demonstrating existence of generalized minimizers of
$\widetilde E_\infty$ for all $m > 0$. We start by stating the basic
regularity properties of the minimizers of $\widetilde E_\infty$ and
the associated Euler-Lagrange equation.
\begin{lemma}[Regularity and Euler-Lagrange equation]\label{lemreg} %
  For $m > 0$, let $u$ be a minimizer of $\widetilde E_\infty$ in
  $\widetilde{\mathcal A}_\infty(m)$, and let
  $F = \mathrm{supp} \, (u)$.  Then, up to a set of Lebesgue measure
  zero, the set $F$ is a bounded connected set with boundary of class
  $C^\infty$, and we have
\begin{align}\label{eqEL}
  2 \kappa(x) +  v_F(x) = \lambda_F \qquad \text{for }x\in\partial F, 
\end{align}
where $\lambda_F\in\R$ is a Lagrange multiplier, $\kappa(x)$ is the
mean curvature of $\partial F$ at $x$ (positive if $F$ is convex), and
\begin{align} \label{vF} %
  v_F(x) := \frac{1}{4\pi}\int_{F}\frac{dy}{|x-y|}.
\end{align}
Moreover, if $m \in [m_0, m_1]$ for some $0 < m_0 < m_1$, then
$v_F \in C^{1,\alpha}(\mathbb R^3)$ and $\partial F$ is of class
$C^{3,\alpha}$, for all $\alpha\in (0,1)$, uniformly in $m$.
\end{lemma}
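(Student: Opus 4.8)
The plan is to regard $F$ as a volume-constrained almost minimizer of the perimeter — the Coulombic term being a lower-order, $L^1$-Lipschitz perturbation — obtain $C^{1,\alpha}$ regularity from the standard density/excess theory, extract the Euler--Lagrange equation \eqref{eqEL} by a first-variation computation, and then bootstrap using elliptic regularity for $v_F$ and Schauder theory for the prescribed-curvature equation. I would first dispose of the structural claims. The density lower bound for minimizers of $\widetilde E_\infty$ (the analogue for $\widetilde E_\infty$ of Lemma~\ref{lem-uER}, proved in \cite{km:cpam13}) gives $|F\cap B_r(x_0)|\ge c\min(r^3,m)$ for every $x_0\in\overline F$; since $|F|=m<\infty$, only finitely many pairwise disjoint unit balls can be centered in $\overline F$, so $\overline F$ is bounded (with a universal diameter bound, by running the argument in the proof of Lemma~\ref{lem-ln} with $\eta_R\equiv 1$). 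Connectedness follows from minimality: if $F$ decomposed into two sets of positive measure, translating one far from the other would leave the perimeter and the two self-energies unchanged while strictly decreasing the interaction term $\tfrac{1}{4\pi}\int_{F_1}\int_{F_2}\tfrac{1}{|x-y|}\,dx\,dy$ — a contradiction. Taking the representative of $F$ equal to its set of density-one points, $F$ is then a bounded connected set whose topological boundary coincides with its measure-theoretic boundary.

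For the potential, $v_F$ in \eqref{vF} is the Newtonian potential of $\chi_F$, so $-\Delta v_F=\chi_F$ on $\R^3$, $v_F\ge 0$, $v_F\to 0$ at infinity; rearrangement gives $\|v_F\|_{L^\infty}\le Cm^{2/3}$, $\|\nabla v_F\|_{L^\infty}\le Cm^{1/3}$, and Calder\'on--Zygmund estimates with $\|\Delta v_F\|_{L^p(\R^3)}=m^{1/p}$ give, for every $p\in(3,\infty)$, $v_F\in W^{2,p}(\R^3)\hookrightarrow C^{1,\alpha}(\R^3)$ with $\alpha=1-3/p$ and norm controlled by $m$; hence $v_F\in C^{1,\alpha}(\R^3)$ for all $\alpha\in(0,1)$, uniformly for $m\in[m_0,m_1]$. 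Consequently $\bigl|\widetilde E_\infty(\chi_F)-\widetilde E_\infty(\chi_G)\bigr|$ and $\bigl|P(F)-P(G)\bigr|$ differ by at most $\bigl(\|v_F\|_{L^\infty}+\|v_G\|_{L^\infty}\bigr)|F\triangle G|\le C(m)\,|F\triangle G|$, so a standard penalization argument (as in \cite{sternberg11}) shows that $F$ minimizes $P(\cdot)+\Lambda\,|\,\cdot\,\triangle F|$ among all sets for $\Lambda=\Lambda(m)$ large; i.e.\ $F$ is a $(\Lambda,r_0)$-minimizer of the perimeter, and De Giorgi--Almgren--Tamanini regularity yields $\partial F\in C^{1,\alpha}$ for every $\alpha\in(0,1)$, with bounds uniform for $m\in[m_0,m_1]$.

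With $\partial F\in C^{1,\alpha}$ available, \eqref{eqEL} follows from the first variation along a volume-preserving normal perturbation $\zeta\nu$, $\zeta\in C_c^\infty(\partial F)$: the first variation of $\int_{\R^3}|\nabla u|$ is $\int_{\partial F}2\kappa\,\zeta\,d\mathcal{H}^2$ with $2\kappa$ the sum of the principal curvatures, and since $\tfrac{1}{8\pi}\int_F\int_F|x-y|^{-1}\,dx\,dy=\tfrac12\int_F v_F\,dx$ its first variation is $\int_{\partial F}v_F\,\zeta\,d\mathcal{H}^2$, while the volume constraint contributes a constant Lagrange multiplier $\lambda_F$; hence $2\kappa+v_F=\lambda_F$ on $\partial F$. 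Evaluating the (unconstrained) first-variation identity on the dilation $F_t=(1+t)F$ gives $\lambda_F=\bigl(2P(F)+\tfrac52\int_F v_F\bigr)/(3m)$, which together with $cm^{2/3}\le P(F)\le\widetilde E_\infty(\chi_F)=e(m)$ and $\int_F v_F\le\|v_F\|_{L^\infty}m$ bounds $0<c(m)\le\lambda_F\le C(m)$, uniformly for $m\in[m_0,m_1]$.

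Finally, the bootstrap. Locally parametrizing $\partial F$ as a graph, \eqref{eqEL} reads $2\kappa=\lambda_F-v_F|_{\partial F}$; since $v_F\in C^{1,\alpha}(\R^3)$ and $\partial F\in C^{1,\alpha}$, the trace $v_F|_{\partial F}$ is $C^{1,\alpha}$, so Schauder theory for the prescribed-curvature equation (quasilinear, uniformly elliptic on $C^1$ graphs) upgrades $\partial F$ to $C^{3,\alpha}$, with constants depending only on $m_0,m_1$ — this is the uniform assertion of the lemma. To reach $C^\infty$ for fixed $m$, one iterates: once $\partial F\in C^{k,\alpha}$, the transmission structure of $v_F$ (continuity of $v_F$ and $\nabla v_F$ across $\partial F$, with $-\Delta v_F=\chi_F$ smooth on each side) yields one-sided regularity $v_F|_{\overline F},\,v_F|_{\overline{\R^3\setminus F}}\in C^{k,\alpha}$ up to $\partial F$, hence $v_F|_{\partial F}\in C^{k,\alpha}$, hence $\kappa\in C^{k,\alpha}$, hence $\partial F\in C^{k+2,\alpha}$; letting $k\to\infty$ gives $\partial F\in C^\infty$, as in \cite{julin13}. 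The step I expect to be the main obstacle is this bootstrap: in three dimensions the nonlocal term is genuinely coupled to the geometry (unlike in two dimensions, where it decouples from the perimeter), so one must argue carefully that a $C^{k,\alpha}$ interface produces curvature data of the \emph{same} H\"older class — via the one-sided elliptic/transmission regularity of the Newtonian potential, not its merely $C^{1,\alpha}$ global regularity — and, for the final clause, keep the penalization constant $\Lambda$, the $C^{1,\alpha}$ and Schauder constants, $\|v_F\|_{C^{1,\alpha}}$, $\lambda_F$, and $\operatorname{diam} F$ uniform as $m$ ranges over $[m_0,m_1]$.
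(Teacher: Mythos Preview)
Your proposal is correct and follows essentially the same approach as the paper: the paper's proof is little more than a list of citations --- \cite{km:cpam13} for boundedness, connectedness, and $C^{1,1/2}$ regularity via the $(\Lambda,r_0)$-minimizer structure; elliptic regularity for $v_F\in C^{1,\alpha}$; \cite{choksi07,sternberg11} for the first-variation derivation of \eqref{eqEL}; and \cite{sternberg11,julin13} for the Schauder/transmission bootstrap to $C^{3,\alpha}$ and $C^\infty$ --- and you have simply unpacked what those citations contain, including the almost-minimality bound, the Calder\'on--Zygmund argument for $v_F$, the dilation identity for $\lambda_F$, and the one-sided elliptic regularity that drives the iteration.
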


\begin{proof} {From} \cite[Proposition 2.1 and Lemma 4.1]{km:cpam13}
  it follows that, up to a set of Lebesgue measure zero, the set $F$
  is bounded and connected, and $\partial F$ is of class $C^{1,1/2}$.
  Since the function $v_F$ is the unique solution of the elliptic
  problem $-\Delta v = \chi_F$ with $v(x) \to 0$ for $|x|\to \infty$,
  by \cite[Lemma 4.4]{km:cpam13} and elliptic regularity theory
  \cite{gilbarg} it follows that $v_F \in C^{1,\alpha}(\mathbb R^3)$
  for all $\alpha\in (0,1)$, uniformly in $m \in [m_0, m_1]$.  The
  Euler-Lagrange equation \eqref{eqEL} can be obtained as in
  \cite[Theorem 2.3]{choksi07} (see also
  \cite{m:pre02,sternberg11}). Further regularity of $\partial F$
  follows from \cite[Proposition 2.1]{sternberg11} and
  \cite[Proposition 2.2]{julin13}.
\end{proof}

Similarly, if $(u_1,\ldots,u_N)$ is a generalized minimizer of
$\widetilde E_\infty$ and $F_i := \mathrm{supp} \, (u_i)$ for
$i\in\{1,\ldots,N\}$, the following Euler-Lagrange equation holds:
\begin{align}\label{eqELgen}
  2 \kappa_i(x) + \frac{1}{4\pi}\int_{F_i}\frac{dy}{|x-y|} =
  \lambda \qquad
  x\in\partial F_i, 
\end{align}
where $\kappa_i$ is the mean curvature of $\partial F_i$ (positive if
$F_i$ is convex) and $\lambda\in\R$ is a Lagrange multiplier
independent of $i$.

\medskip

In contrast to minimizers, generalized minimizers of
$\widetilde E_\infty$ in $\widetilde{\mathcal A}_\infty(m)$ exist for
all $m>0$:
\begin{theorem}[Existence of generalized
  minimizers] \label{thm-mingen} %
  For any $m \in (0,\infty)$ there exists a generalized minimizer
  $(u_1, \ldots, u_N)$ of $\widetilde E_\infty$ in
  $\widetilde{\mathcal A}_\infty(m)$.  Moreover, after a possible
  modification on a set of Lebesgue measure zero, the support of each
  component $u_i$ is bounded, connected and has boundary of class
  $C^\infty$.
\end{theorem}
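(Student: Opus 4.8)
The plan is to obtain a generalized minimizer as the ``limit of a minimizing sequence after splitting off escaping mass.'' The key ingredient is the fact (established in the cited work \cite{km:cpam13}) that the set $\mathcal I$ of masses for which a minimizer of $\widetilde E_\infty$ exists is non-empty, contains an interval around the origin, and --- more importantly --- that the subadditivity inequality $e(m_1 + m_2) \leq e(m_1) + e(m_2)$ holds, which follows from the admissible competitor obtained by translating two configurations far apart. Since $e$ is subadditive, for any $m > 0$ there exists a decomposition $m = \sum_{i=1}^N m_i$ (with $N$ possibly large) realizing the infimum of $\sum_i e(m_i)$ over all finite decompositions; the first step is to show this infimum is attained and that, at a realizing decomposition, each $e(m_i)$ is itself attained, i.e. $m_i \in \mathcal I$.

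First I would set $\bar e(m) := \inf \{ \sum_{i=1}^N e(m_i) : N \in \mathbb N, \ m_i > 0, \ \sum m_i = m \}$ and observe $\bar e(m) \leq e(m)$. Using the lower bound $e(m) \geq c' m$ for small $m$ (which follows from \eqref{fmball}, since $f(m) \to \infty$ as $m \to 0$, so $e(m)/m \to \infty$) together with $e(m) \geq c'' m$ for all $m$ bounded away from $0$ (from \eqref{eq:fcC}), one gets that any decomposition in the definition of $\bar e(m)$ can use only boundedly many pieces of mass below any fixed threshold, and in fact the number of pieces $N$ is bounded in terms of $m$; moreover no piece can carry mass exceeding $\sup \mathcal I$ without it being advantageous to split it (here I would invoke boundedness of $\mathcal I$ from \cite[Theorem 3.3]{km:cpam13} and the definition of $\mathcal I$). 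Hence the infimum defining $\bar e(m)$ runs effectively over a compact set of decompositions, so it is attained at some $(m_1, \ldots, m_N)$. The next step is to argue that at such an optimal decomposition each $m_i \in \mathcal I$: if some $m_i \notin \mathcal I$, then a minimizing sequence for $e(m_i)$ loses compactness, which by the concentration-compactness analysis of \cite{km:cpam13} means its mass splits, giving $e(m_i) = e(m_i') + e(m_i'')$ with $m_i = m_i' + m_i''$ and $m_i', m_i'' > 0$; refining the decomposition this way and iterating (the process terminates by the bound on $N$) produces an optimal decomposition with all masses in $\mathcal I$. Then by definition of $\mathcal I$ each $e(m_i)$ is attained by some minimizer $u_i$, and $(u_1, \ldots, u_N)$ satisfies $\sum m_i = m$ and $\sum e(m_i) = \bar e(m)$.

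It remains to verify that $\bar e(m) = e(m)$, i.e. that this generalized minimizer actually competes against genuine configurations of mass $m$: for this I would take the configurations $u_i$ supported in $F_i$, translate them so the $F_i$ are pairwise at mutual distance $\to \infty$, and pass to the limit, using that $\widetilde E_\infty$ of the union equals $\sum_i \widetilde E_\infty(u_i)$ plus interaction terms that vanish as the separation grows; this shows $e(m) \leq \sum_i e(m_i) = \bar e(m)$, and combined with $\bar e(m) \leq e(m)$ gives equality. Thus \eqref{msumem} holds and $(u_1, \ldots, u_N)$ is a generalized minimizer. The regularity and connectedness of each $\mathrm{supp}(u_i)$ --- boundedness, connectedness, and $C^\infty$ boundary --- follow immediately from Lemma \ref{lemreg} applied to each component $u_i$, since each $u_i$ is itself an honest minimizer of $\widetilde E_\infty$ in $\widetilde{\mathcal A}_\infty(m_i)$.

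The main obstacle I anticipate is the bound on the number $N$ of components in an optimal decomposition, which is what makes the infimum defining $\bar e(m)$ attained; this rests on having two-sided linear-in-mass bounds on $e$, namely $e(m) \gtrsim m$ both near $m = 0$ (to rule out infinitely many tiny pieces) and away from it. Both are available from the cited results (\eqref{fmball} and \eqref{eq:fcC}), so the argument should go through, but care is needed to combine them cleanly and to handle the interplay with the boundedness of $\mathcal I$ when excluding large pieces. The concentration-compactness step identifying loss of compactness with mass splitting is exactly the content of \cite{km:cpam13} and can be quoted rather than reproved.
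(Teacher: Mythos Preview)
Your approach is essentially correct but follows a genuinely different route from the paper's proof. The paper does not work with the relaxed infimum $\bar e(m) = \inf \{\sum_i e(m_i)\}$ over abstract mass decompositions at all. Instead, it uses the \emph{truncated} energy $\widetilde E_\infty^R$ (Coulomb kernel cut off at scale $R$), which by Rigot's regularity theory always admits a minimizer $u = \chi_{F_R}$ with finitely many smooth connected components $F_1,\ldots,F_N$. Two short lemmas (a density lower bound and a uniform diameter bound, both adapted from \cite{km:cpam13}) show that for $R$ large enough the components have diameter $\leq D_0$ and are mutually separated by distance $\geq 2R$; hence $\widetilde E_\infty^R$ evaluated on the union equals $\sum_i \widetilde E_\infty(\chi_{F_i})$. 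The chain
\[
e(m) \ \geq\ \inf \widetilde E_\infty^R \ =\ \sum_i \widetilde E_\infty(\chi_{F_i}) \ \geq\ \sum_i e(|F_i|) \ \geq\ e(m)
\]
then forces every inequality to be an equality, so each $\chi_{F_i}$ is an honest minimizer for its own mass, and $(\chi_{F_1},\ldots,\chi_{F_N})$ is the desired generalized minimizer. Regularity comes from Lemma~\ref{lemreg} exactly as you say.

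Comparing the two: the paper's truncation argument is self-contained and constructive---it never needs a Lions-type dichotomy, nor does it need continuity of $e$ (which in the paper is proved \emph{after} Theorem~\ref{thm-mingen}, using generalized minimizers, so you would have to supply an independent proof by rescaling to avoid circularity). Your approach is the more classical concentration-compactness route and would also work, but be aware that the key splitting statement ``if $m_i \notin \mathcal I$ then $e(m_i) = e(m_i') + e(m_i'')$ for some nontrivial split'' is not stated in \cite{km:cpam13}; it is the standard Lions strict-subadditivity criterion (and is carried out carefully for this functional in, e.g., \cite{frank15}), so you should cite that rather than \cite{km:cpam13}. The paper's method also pays an extra dividend: it directly identifies generalized minimizers with minimizers of the truncated problem (Corollary~\ref{cor-ER}), a fact used later in the $\Gamma$-liminf and equidistribution proofs.
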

\begin{proof}
  We may assume that $m\geq \widetilde m_0$, where
  $\widetilde m_0 > 0$ was defined in Sec. \ref{sec-scale}, since
  otherwise the minimum of $\widetilde E_\infty$ is attained by a ball
  \cite[Theorem 3.2]{km:cpam13} and the statement of the theorem holds
  true. In \cite[Theorems 5.1.1 and 5.1.5]{rigot00}, it is proved that
  the functional $\widetilde E_\infty^R$ admits a minimizer
  $u = \chi_{F_R}\in \widetilde{\mathcal A}_\infty(m)$,
  $F_R \subset \R^3$, for any $R > 0$, and after a possible
  redefinition on a set of Lebesgue measure zero, the set $F_R$ is
  regular, in the sense that it is a union of finitely many connected
  components whose boundaries are of class $C^{1,1/2}$.  Let
  $F_1, \ldots, F_N$ $\subset \R^3$ be the connected components of
  $F_R$. By Lemma \ref{lem-uER}, we have $N \leq N_0$,
  $|F_k| \geq \delta_0$ and $\diam F_k \leq D_0$ for all
  $1 \leq k \leq N$ and for some $N_0 \geq 1$ and some constants
  $D_0,\delta_0 > 0$ depending only on $m$. Furthermore, we have
  \begin{align} \label{dGiGj} %
    {\rm dist}(F_i,F_j)\geq 2 R\qquad \text{for }i\ne j,
  \end{align}
  since otherwise it would be energetically preferable to increase the
  distance between the components.  In particular, if $R \geq D_0$ the
  family of sets $F_1, \ldots, F_N$ $\subset \R^3$ generates a
  generalized minimizer $(u_1, \ldots, u_N)$ of $\widetilde E_\infty$
  by letting $u_i := \chi_{F_i}$.  Indeed, we have
  \begin{align}
    \label{eEEke}
    e(m) \geq \inf_{|F| = m} \widetilde E_\infty^R(u) =
    \sum_{i=1}^N \widetilde E_\infty^R (u_i) = 
    \sum_{i=1}^N \widetilde E_{\infty} (u_i) \geq \sum_{i=1}^N e(|F_i|)
    \geq e(m), 
  \end{align}
  and so all the inequalities in \eqref{eEEke} are in fact
  equalities.  Since $\widetilde E_{\infty} (\chi_{F_i}) \geq e(|F_i|)$
  for each $1 \leq i \leq N$, from \eqref{eEEke} we obtain that each
  set $F_i$ is a minimizer of $\widetilde E_\infty$ in
  $\widetilde A_\infty(|F_i|)$. By Lemma
  \ref{lemreg}, each set $F_i$ is bounded and connected, and $\p F_i$
  are of class $C^\infty$.
\end{proof}

The arguments in the proof of the previous theorem in fact show the
following relation between minimizers of the truncated energy
$\widetilde E_\infty^R$ and generalized minimizers of
$\widetilde E_{\infty}$.
\begin{corollary}[Generalized minimizers as minimizers of the
  truncated problem] \label{cor-ER} %
  Let $m > 0$ and $R > 0$, let
  $u \in \widetilde{\mathcal A}_\infty(m)$ be a minimizer of
  $\widetilde E_{\infty}^{R}$, and let $u = \sum_{i=1}^N u_i$, where
  $u_i$ are the characteristic functions of the connected components
  of the support of $u$. Then there exists a universal constant
  $R_1 > 0$ such that if $R \geq R_1$, then $(u_1, \ldots, u_N)$ is a
  generalized minimizer of $\widetilde E_\infty$ in
  $\widetilde{\mathcal A}_\infty(m)$.
\end{corollary}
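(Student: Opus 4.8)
The plan is to run the equality chain \eqref{eEEke} from the proof of Theorem \ref{thm-mingen}, but with the threshold on $R$ made explicit and universal. Concretely, I would set $R_1 := \max\{R_0, D_0\}$, where $R_0$ and $D_0$ are the universal constants produced by Lemma \ref{lem-ln}, and work under the assumption $R \geq R_1$.

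First I would control the geometry of the connected components $F_1, \dots, F_N$ of $F := \mathrm{supp}(u)$. Since $R \geq R_0$, Lemma \ref{lem-ln} gives $\diam F_i \leq D_0 \leq R$ for every $i$, so that $|x-y| \leq R$ and hence $\eta_R(x-y) = 1$ for all $x, y \in F_i$; thus the truncation is invisible on each single component and $\widetilde E_\infty^R(u_i) = \widetilde E_\infty(u_i)$ for all $i$. Next I would show $\mathrm{dist}(F_i, F_j) \geq 2R$ for $i \neq j$: if some pair violated this, then $\{(x,y) \in F_i \times F_j : |x-y| < 2R\}$ would have positive measure and contribute strictly positively to $\widetilde E_\infty^R(u)$ (the kernel being nonnegative, no interaction term can be negative), and translating $F_j$ by a large vector so that it lands at distance $\geq 2R$ from all the finitely many other components would produce an admissible competitor with the same perimeter and self-energies but strictly smaller interaction energy, contradicting minimality. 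Hence $\eta_R(x-y)u(x)u(y) = \sum_i \eta_R(x-y)u_i(x)u_i(y)$ for a.e. $(x,y)$, and, the perimeter being additive over components lying at positive distance, $\widetilde E_\infty^R(u) = \sum_{i=1}^N \widetilde E_\infty^R(u_i)$.

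Then I would close the argument with the comparison chain. Since $\widetilde E_\infty^R(w) \leq \widetilde E_\infty(w)$ for every admissible $w$ (as $0 \le \eta_R \le 1$ and the kernel is nonnegative), we have $\inf_{|F'|=m}\widetilde E_\infty^R \leq e(m)$; combining this with the minimality of $u$, the two identities just obtained, the definition of $e(|F_i|)$, and the subadditivity of $e$ (which gives $\sum_i e(|F_i|) \geq e(m)$ because $\sum_i |F_i| = m$, exactly as in the last inequality of \eqref{eEEke}), one gets
\[
  e(m) \;\geq\; \widetilde E_\infty^R(u) \;=\; \sum_{i=1}^N \widetilde E_\infty^R(u_i) \;=\; \sum_{i=1}^N \widetilde E_\infty(u_i) \;\geq\; \sum_{i=1}^N e(|F_i|) \;\geq\; e(m),
\]
so that all inequalities are equalities. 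In particular each $u_i$ satisfies $\widetilde E_\infty(u_i) = e(|F_i|)$, i.e. is a minimizer of $\widetilde E_\infty$ in $\widetilde{\mathcal A}_\infty(|F_i|)$, while $m = \sum_i |F_i|$ and $e(m) = \sum_i e(|F_i|)$; by Definition \ref{defgenmin} this is precisely the statement that $(u_1, \dots, u_N)$ is a generalized minimizer of $\widetilde E_\infty$ in $\widetilde{\mathcal A}_\infty(m)$, and the boundedness, connectedness and $C^\infty$ regularity of each $F_i$ then follow from Lemma \ref{lemreg}. The only place where genuine care is needed is keeping $R_1$ universal — which is exactly why I would invoke the universal diameter bound of Lemma \ref{lem-ln} rather than the $m$-dependent bound that comes directly from the density estimate of Lemma \ref{lem-uER} — together with the separation claim $\mathrm{dist}(F_i, F_j) \geq 2R$ and the resulting a.e. decomposition of the double integral; no real obstacle arises beyond these routine points.
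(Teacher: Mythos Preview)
Your proposal is correct and follows essentially the same approach as the paper: the paper also sets $R_1 = \max\{R_0, D_0\}$ with $R_0, D_0$ from Lemma \ref{lem-ln}, notes that $\widetilde E_\infty^R(\chi_{F_i}) = \widetilde E_\infty(\chi_{F_i})$ for each component, and then defers to ``the same argument as the one used in the proof of Theorem \ref{thm-mingen}'' --- i.e., the separation inequality \eqref{dGiGj} and the chain \eqref{eEEke} that you have written out in full.
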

 \begin{proof}
   We choose $R_1 = \max\{ R_0, D_0 \}$, where $R_0$ and $D_0$ are as
   in Lemma \ref{lem-ln}. Then we have
   $\widetilde E_\infty^R(\chi_{F_0}) = \widetilde
   E_{\infty}(\chi_{F_0})$
   for every connected component $F_0$ of the minimizer. With the same
   argument as the one used in the proof of Theorem \ref{thm-mingen},
   this yields the claim.
\end{proof}

We now provide some uniform estimates for generalized minimizers.

\begin{theorem}[Uniform estimates for generalized
  minimizers]\label{prodelta} %
  There exist universal constants $\delta_0 >0$ and $D_0 > 0$ such
  that, for any $m > \widetilde m_0$, where $\widetilde m_0$ is
  defined in Sec. \ref{sec-scale}, the support of each component of a
  generalized minimizer of $\widetilde E_\infty$ in
  $\widetilde{\mathcal A}_\infty(m)$ has volume bounded below by
  $\delta_0$ and diameter bounded above by $D_0$ (after possibly
  modifying the components on sets of Lebesgue measure zero).
  Moreover, there are universal constants $C, c > 0$ such that the
  number $N$ of the components satisfies
  \begin{align}  \label{mNm} %
    c m \leq N \leq C m.
  \end{align}
\end{theorem}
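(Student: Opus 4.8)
The plan is to derive everything from the two auxiliary lemmas on the truncated energy (Lemma~\ref{lem-uER} and Lemma~\ref{lem-ln}) together with Corollary~\ref{cor-ER}, which identifies generalized minimizers of $\widetilde E_\infty$ with minimizers of $\widetilde E_\infty^R$ once $R \geq R_1$. First I would fix $R = R_1$ and, given a generalized minimizer $(u_1,\dots,u_N)$ of $\widetilde E_\infty$ in $\widetilde{\mathcal A}_\infty(m)$ with $m > \widetilde m_0$, observe that by the construction in the proof of Theorem~\ref{thm-mingen} (placing the components far apart, at mutual distance $\geq 2R_1$) the function $u = \sum_i u_i$ is a minimizer of $\widetilde E_\infty^{R_1}$ in $\widetilde{\mathcal A}_\infty(m)$. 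Then Lemma~\ref{lem-uER} applied to this minimizer gives, for each component $F_i$ and a point $x_0 \in \overline{F_i}$, the density bound $|F_i| \geq |F_i \cap B_r(x_0)| \geq c r^3$ with $r = \min(1, m_i^{1/3})$; since each $F_i$ is a minimizer of $\widetilde E_\infty$ on $\widetilde{\mathcal A}_\infty(m_i)$, and the ball is optimal only for small mass, one still needs a lower bound on $m_i$ itself — but this follows because if $m_i$ were tiny the bound $|F_i| = m_i \geq c\min(1,m_i^{1/3})^3 = c m_i$ is vacuous, so instead I would argue that a generalized minimizer cannot have an arbitrarily small component: splitting off mass below $\widetilde m_0$ and merging it is governed by the strict subadditivity/convexity information encoded in $f$, or more directly, one invokes that $e$ is subadditive and $e(m_i) \geq c m_i^{2/3} - C m_i$-type lower bounds from \eqref{eq:fcC}/\eqref{fmball} to get a universal $\delta_0$ with $m_i \geq \delta_0$. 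With $m_i \geq \delta_0$ fixed, Lemma~\ref{lem-ln} (diameter bound for $\widetilde E_\infty^R$ minimizers, valid for $R \geq R_0$, hence for $R_1$) gives $\diam F_i \leq D_0$ with $D_0$ universal.

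Next, for the two-sided bound \eqref{mNm} on $N$: the upper bound $N \leq C m$ is immediate from $m = \sum_{i=1}^N m_i = \sum_i |F_i| \geq N \delta_0$, so $N \leq m/\delta_0$. For the lower bound $N \geq cm$ I would use the diameter/volume bound in the other direction: each component has $\diam F_i \leq D_0$, hence $|F_i| = m_i \leq \frac{4}{3}\pi (D_0/2)^3 =: M_0$, a universal constant, so $m = \sum_i m_i \leq N M_0$, giving $N \geq m/M_0 =: cm$. This closes \eqref{mNm}.

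The main obstacle, I expect, is the uniform lower volume bound $\delta_0$ on the components — specifically ruling out that a generalized minimizer has a component of vanishingly small mass as $m$ varies over $(\widetilde m_0, \infty)$. The density bound from Lemma~\ref{lem-uER} degenerates as $m_i \to 0$, so one genuinely needs the nonlocal energetics: the point is that for very small mass the energy $e(m_i)$ behaves like the surface energy of a ball, $e(m_i) \sim 6^{2/3}\pi^{1/3} m_i^{2/3}$ by \eqref{fmball}, which is superlinear-from-below in the sense that $e(m_i)/m_i \to \infty$; but in a generalized minimizer $e(m) = \sum_i e(m_i)$ and $f(m) = e(m)/m \leq C$ is bounded by \eqref{eq:fcC}, so only boundedly many $m_i$ can be small and in fact a single very small component would force $f(m)$ up unless that mass could be more cheaply reallocated — merging the small component with another one (using subadditivity $e(m_i + m_j) \leq e(m_i) + e(m_j)$, which by Theorem~\ref{thm-mingen}'s optimality must in fact be an \emph{equality} for a generalized minimizer) and then comparing against the ball formula shows that having $m_i < \delta_0$ is incompatible with being a generalized minimizer for a universal $\delta_0$. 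I would make this rigorous by assuming $m_i \leq \epsilon$ for the smallest component, using $e(m_i) = 6^{2/3}\pi^{1/3} m_i^{2/3}(1 + o(1))$, and deriving that redistributing this mass onto another component $F_j$ strictly lowers $\sum_k e(m_k)$ once $\epsilon$ is below a universal threshold — contradicting \eqref{msumem}. This is exactly the kind of strict-subadditivity-at-small-mass argument, and getting the constant genuinely universal (independent of $m$ and of which component) is the delicate part; the diameter bound $D_0$ and both inequalities in \eqref{mNm} then follow cheaply as above.
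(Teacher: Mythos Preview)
Your overall architecture---realize the generalized minimizer as a minimizer of the truncated energy $\widetilde E_\infty^{R}$ by spreading the components far apart, then read off density and diameter bounds from Lemmas~\ref{lem-uER} and~\ref{lem-ln}---matches the paper's. The difficulty you identify with the lower volume bound, however, is self-inflicted: you misapply Lemma~\ref{lem-uER}.

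The constraint in Lemma~\ref{lem-uER} is $r \leq \min(1, m^{1/3})$, where $m$ is the \emph{total} mass of the $\widetilde E_\infty^R$-minimizer $u$, not the mass $m_i$ of an individual component. Since you have already argued that the spread-out configuration $u = \sum_i u_i$ minimizes $\widetilde E_\infty^{R}$ over $\widetilde{\mathcal A}_\infty(m)$ with $m > \widetilde m_0$, you may simply take $r = \min(1,\widetilde m_0^{1/3})$, a universal constant. Choosing $x_0 \in \overline{F_i}$ and using that the components are separated by distance at least $2R > 2$, the ball $B_r(x_0)$ meets only $F_i$, and one reads off $|F_i| \geq c\,\min(1,\widetilde m_0) =: \delta_0$ immediately. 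This is exactly what the paper does; your entire strict-subadditivity detour (which, as written, is incomplete---you would first have to show that \emph{some} component is not tiny before you can invoke a Lipschitz bound on $e$ near $m_j$) becomes unnecessary.

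Two minor differences remain. For the diameter bound the paper appeals directly to \cite[Lemmas~4.1 and~4.3]{km:cpam13}, applied to each $F_i$ as a minimizer of $\widetilde E_\infty$, and obtains $D_0$ \emph{before} constructing the truncated minimizer; your route via Lemma~\ref{lem-ln} also works. For the lower bound $N \geq cm$, the paper uses the universal upper bound $|F_i| \leq \widetilde m_2$ coming from nonexistence of $\widetilde E_\infty$-minimizers for large mass \cite[Theorem~3.3]{km:cpam13}, rather than your diameter-based bound $|F_i| \leq \tfrac{4}{3}\pi (D_0/2)^3$; either suffices.
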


\begin{proof}
  Let $m \geq \widetilde m_0$ and let
  $(\chi_{F_1}, \ldots, \chi_{F_N})$ be a generalized minimizer of
  $\widetilde E_\infty$ in $\widetilde{\mathcal A}_\infty(m)$, taking
  all sets $F_i$ to be regular.  By \cite[Theorem 3.3]{km:cpam13} we
  know that there exists a universal
  $\widetilde m_2 \geq \widetilde m_0$ such that
  \begin{align}\label{estmax}
    |F_i|\leq \widetilde m_2 \qquad \text{for all }i\in\{1,\ldots,N\}\,.
  \end{align}
  Then by \cite[Lemma 4.3]{km:cpam13} and the argument of 
\cite[Lemma 4.1]{km:cpam13} we have
  \begin{align}
    \textrm{diam}(F_i) \leq D_0,
  \end{align}
  for some universal $D_0 > 0$.  On the other hand, we claim that
  taking $R \geq D_0$ we have that
  \begin{align}
    u(x) := \sum_{i=1}^N \chi_{F_i}(x + 4 i R e_1),
  \end{align}
  where $e_1$ is the unit vector in the first coordinate direction, 
  is a minimizer of $\widetilde E_\infty^R$ in
  $\widetilde A_\infty(m)$. Indeed, since the connected components of
  the support of $u$ are separated by distance $2R$, we have 
  \begin{align}
    \widetilde E_\infty^R(u) = \sum_{i=1}^N \widetilde
    E_\infty^R(\chi_{F_i}) = \sum_{i=1}^N \widetilde
    E_\infty(\chi_{F_i})  = e(m). 
  \end{align}
  At the same time, by the argument in the proof of Theorem
  \ref{thm-mingen} we have
  $\inf_{u \in \widetilde A_\infty(m)} \widetilde E_\infty^R(u)
  = e(m)$
  for all $R$ sufficiently large depending on $m$. Hence, $u$ is a
  minimizer of $\widetilde E_\infty^R$ in
  $\widetilde A_\infty(m)$ for large enough $R$. The universal lower
  bound $|F_i| \geq \delta_0$ then follows from Lemma \ref{lem-uER}
  and our assumption on $m$.

  \smallskip

  Finally, the lower bound in \eqref{mNm} is a consequence of
  \eqref{estmax}, while the upper bound follows directly from the
  lower bound on the volume of the components just obtained.
\end{proof}

\subsection{Properties of the function $e(m)$}

In this section, we discuss the properties of the functions
$e(m) = \inf_{u \in \widetilde \AA_\infty(m)} \widetilde E_\infty(u)$
and $f(m) = e(m)/m$, in particular their dependence on $m$.

\medskip

We start by showing that $e(m)$ is locally Lipschitz continuous on
$(0,\infty)$.
\begin{lemma}[Lipschitz continuity of $e$] \label{lem-elip}
  The function $e(m)$ is Lipschitz continuous on compact subsets of
  $(0,\infty)$.
\end{lemma}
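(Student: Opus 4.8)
The plan is to show that for a fixed compact interval $[m_0, m_1] \subset (0, \infty)$, the function $e$ is Lipschitz on $[m_0, m_1]$ with a constant depending only on $m_0$ and $m_1$. The basic idea is a two-sided comparison argument: given $m, m' \in [m_0, m_1]$ with, say, $m' > m$, I want to build a competitor for the problem at mass $m'$ out of a near-optimal configuration at mass $m$ (and vice versa), and estimate the change in energy.

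\textbf{Key steps.} First I would fix a near-minimizer $u \in \widetilde{\mathcal A}_\infty(m)$ with $\widetilde E_\infty(u) \leq e(m) + \delta$ for arbitrary $\delta > 0$; by Lemma \ref{lemreg} (applied, if needed, after replacing $u$ by a genuine minimizer, which exists for $m \le \widetilde m_0$ and, for larger $m$, by passing through a generalized minimizer or the truncated problem) we may take the support $F = \mathrm{supp}(u)$ to be bounded. The cleanest construction is to rescale: for $t > 0$ set $u_t(x) := u(x/t)$, so that $u_t \in \widetilde{\mathcal A}_\infty(t^3 m)$, and compute
\begin{align}
  \widetilde E_\infty(u_t) = t^2 \int_{\R^3} |\nabla u|\, dx + t^5 \cdot \frac{1}{8\pi} \int_{\R^3}\int_{\R^3} \frac{u(x)u(y)}{|x-y|}\, dx\, dy.
\end{align}
Choosing $t = (m'/m)^{1/3}$, which ranges over a compact subinterval of $(0,\infty)$ as $m, m'$ range over $[m_0,m_1]$, gives
\begin{align}
  e(m') \leq \widetilde E_\infty(u_t) \leq t^2 P + t^5 V,
\end{align}
where $P$ is the perimeter and $V$ the (rescaled) Coulomb term of $u$. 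Since $t \mapsto t^2$ and $t \mapsto t^5$ are Lipschitz on the relevant compact $t$-interval, and since $|t - 1| \leq C|m' - m|$ there (again using $m, m' \geq m_0 > 0$), I get $e(m') \leq \widetilde E_\infty(u) + C(|m'-m|)(P + V) \le e(m) + \delta + C|m'-m|(P+V)$. The remaining point is to bound $P + V = \widetilde E_\infty(u)$ uniformly: since $\widetilde E_\infty(u) \leq e(m) + \delta$ and $e$ is bounded on $[m_0, m_1]$ by the a priori bound $e(m) \le m f(m) \le C m_1$ coming from \eqref{fmball}--\eqref{eq:fcC}, both $P$ and $V$ are controlled by a constant depending only on $m_1$. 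Letting $\delta \to 0$ yields $e(m') \leq e(m) + C|m' - m|$; swapping the roles of $m$ and $m'$ gives the reverse inequality and hence Lipschitz continuity on $[m_0, m_1]$.

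\textbf{Main obstacle.} The only subtlety is the existence of a sufficiently regular near-minimizer to rescale: for $m > \widetilde m_0$ a true minimizer of $\widetilde E_\infty$ may fail to exist. This is handled either by working with $\delta$-near-minimizers (for which one can always arrange bounded support by a standard truncation/relocation argument, discarding mass that escapes to infinity and adjusting — though this needs a small amount of care to keep the mass exactly $m$), or, more cleanly, by using the generalized minimizer $(u_1,\dots,u_N)$ from Theorem \ref{thm-mingen}: rescaling \emph{each} component $u_i$ by the same factor $t$ and keeping the components infinitely far apart produces an admissible configuration of mass $t^3 m$ whose energy is $\sum_i \widetilde E_\infty((u_i)_t) = t^2 \sum_i P_i + t^5 \sum_i V_i$, and $\sum_i(P_i + V_i) = \sum_i e(m_i) = e(m)$ is exactly the quantity already bounded above. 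This avoids any truncation argument entirely and makes the estimate immediate. I expect this choice between the two routes to be the only real decision point; everything else is the elementary scaling computation above.
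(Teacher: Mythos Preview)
Your proposal is correct and follows essentially the same route as the paper: take a generalized minimizer $(u_1,\ldots,u_N)$ with mass $m$ (via Theorem \ref{thm-mingen}), rescale every component by $(m'/m)^{1/3}$, place the rescaled pieces far apart, use the scaling identities for perimeter and Coulomb energy together with $\sum_i \widetilde E_\infty(u_i)=e(m)$, and bound $e(m)$ uniformly on $[m_0,m_1]$ via the ball upper bound. The paper carries this out exactly, writing the competitor at finite separation $R$ with an interaction remainder $g(R)\to 0$ rather than speaking of components ``infinitely far apart,'' but the argument is the same.
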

\begin{proof}
  Let $m,m'\in [m_0,m_1]\subset (0,\infty)$ and let $(u_1,\ldots,u_N)$
  be a generalized minimizer of $\widetilde E_\infty$ in
  $\widetilde \AA_\infty(m)$. For $\lambda=(m'/m)^{1/3}$, we define
  the rescaled functions $u_i^\lambda$ with
  $u_i^{\lambda}(x) = u_i(\lam^{-1} x)$.  For sufficiently large
  $R > 0$, we define $u^\lam \in \widetilde \AA_\infty(m')$ by
  $u^\lam(x) := \sum_{i=1}^N u_i(\lam^{-1} x + i R e_1)$, where $e_1$
  is the unit vector in the first coordinate direction. We then have
  \begin{align} \label{Elam} %
    \widetilde E_\infty(u^\lambda) %
    = \lambda^2 \sum_{i=1}^N \int_{\R^3} |\nabla u_i| \, dx +
    \lambda^5 \sum_{i=1}^N \int_{\R^3}\int_{\R^3}\frac{u_i(x)u_i(y)}{8
    \pi |x-y|} \, dx \, dy + g(R),
  \end{align}
  where the term $g(R)$ refers to the interaction energy between
  different components $u_i^\lam$, $u_j^\lam$, $i \neq j$, of
  $u^\lam$. Clearly, we have $g(R) \to 0$ for $R \to \infty$.  It
  follows that
  \begin{multline}  
    \widetilde E_\infty(u^\lambda) - e(m) \leq
    \left|\lambda^2-1\right| \sum_{i=1}^N \int_{\R^3} |\nabla u_i| \,
    dx + \left|\lambda^5-1\right| \sum_{i=1}^N \int_{\R^3}\int_{\R^3}
    \frac{u_i(x)u_i(y)}{8 \pi |x-y|} \, dx \, dy + g(R).
  \end{multline}
  In the limit $R \to \infty$, this yields
  $e(m')\leq \widetilde E_\infty(u^\lambda) \le e(m) (1 + C |m-m'|)$
  for a constant $C > 0$ that depends only on $m_0,\,m_1$.  Since
  $m,m'$ are arbitrary and since $e(m)$ is bounded above by the energy
  of a ball of mass $m_1$, it follows that $e$ is Lipschitz continuous
  on $[m_0,m_1]$ for all $0<m_0<m_1$.
\end{proof}

We next establish a compactness result for generalized minimizers.

\begin{lemma}[Compactness for generalized
  minimizers]\label{teocomp}
  Let $m_k$ be a sequence of positive numbers converging to some
  $m > \widetilde m_0$, where $\widetilde m_0$ is defined in
  Sec. \ref{sec-scale}, as $k\to\infty$, and let
  $(u_{k,1},\ldots,u_{k,N_k})$ be a sequence of generalized minimizers
  of $\widetilde E_\infty$ in $\widetilde{\mathcal A}_\infty(m_k)$.
  Then, up to extracting a subsequence we have that
  $N_k=N\in\mathbb N$ for all $k$, and after suitable translations
  $u_{k,i}\wto u_i$ in $BV(\R^3)$ as $k\to\infty$ for all
  $i\in \{1,\ldots,N\}$, where
  $(u_1,\ldots,u_N)\in \widetilde A_\infty(m)$ is a generalized
  minimizer of $\widetilde E_\infty$ in
  $\widetilde{\mathcal A}_\infty(m)$.
\end{lemma}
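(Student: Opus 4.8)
The plan is to run a direct-method argument on each of the finitely many components of the generalized minimizers, using the a priori uniform bounds from Theorem~\ref{prodelta} to get compactness, and then to pass the ``additivity of energy'' identity \eqref{msumem} to the limit. First, since $m > \widetilde m_0$, Theorem~\ref{prodelta} applies to each generalized minimizer $(u_{k,1},\dots,u_{k,N_k})$ of $\widetilde E_\infty$ in $\widetilde{\mathcal A}_\infty(m_k)$ (we may assume $m_k > \widetilde m_0$ for all $k$, since $m_k \to m > \widetilde m_0$): the number of components satisfies $N_k \leq C m_k$, so $N_k$ is uniformly bounded, and up to a subsequence $N_k = N$ for all $k$. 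Moreover each $|F_{k,i}| \in [\delta_0, \widetilde m_2]$ and $\operatorname{diam}(F_{k,i}) \leq D_0$, where $F_{k,i} := \operatorname{supp}(u_{k,i})$. After translating each $F_{k,i}$ so that it is contained in a fixed ball $B_{D_0}(0)$, the functions $u_{k,i}$ are bounded in $BV(\R^3)$ (perimeters are controlled since $\widetilde E_\infty(u_{k,i}) \leq e(m_k) \leq C$ by Lemma~\ref{lem-elip}, and masses are bounded), and supported in a common compact set. Hence, up to a further subsequence, $u_{k,i} \wto u_i$ in $BV(\R^3)$ and $u_{k,i} \to u_i$ in $L^1(\R^3)$ for each $i \in \{1,\dots,N\}$, with $u_i \in BV(\R^3; \{0,1\})$ and $m_i := \int_{\R^3} u_i\,dx = \lim_k |F_{k,i}| \in [\delta_0, \widetilde m_2]$, and in particular $\sum_{i=1}^N m_i = \lim_k m_k = m$.

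Next I would check that each $u_i$ is a genuine minimizer of $\widetilde E_\infty$ in $\widetilde{\mathcal A}_\infty(m_i)$. For the energy convergence along the subsequence: the perimeter term is lower semicontinuous under $L^1$ convergence, and the nonlocal term is actually continuous, because the $u_{k,i}$ all live in a fixed ball and the Riesz kernel $|x-y|^{-1}$ restricted to $B_{D_0}(0) \times B_{D_0}(0)$ defines a bounded operator on $L^2$, so $L^1$ (hence $L^2$, the functions being characteristic functions) convergence passes to the double integral. Therefore $\widetilde E_\infty(u_i) \leq \liminf_k \widetilde E_\infty(u_{k,i})$. Since $u_{k,i}$ is a minimizer of $\widetilde E_\infty$ in $\widetilde{\mathcal A}_\infty(|F_{k,i}|)$ by definition of generalized minimizer, we have $\widetilde E_\infty(u_{k,i}) = e(|F_{k,i}|) \to e(m_i)$ by continuity of $e$ (Lemma~\ref{lem-elip}), whence $\widetilde E_\infty(u_i) \leq e(m_i)$. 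The reverse inequality $\widetilde E_\infty(u_i) \geq e(m_i)$ holds trivially since $u_i \in \widetilde{\mathcal A}_\infty(m_i)$, so $\widetilde E_\infty(u_i) = e(m_i)$, i.e.\ $u_i$ is a minimizer for mass $m_i$.

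Finally I would verify the additivity condition $e(m) = \sum_{i=1}^N e(m_i)$ required by Definition~\ref{defgenmin}. From $\widetilde E_\infty(u_i) = e(m_i)$ and $\sum_i m_i = m$, and using the fact that for any collection of admissible sets one can translate them far apart to make the cross-interaction energy as small as desired, we get $e(m) \leq \sum_{i=1}^N e(m_i)$ (build a competitor for mass $m$ from widely separated copies of the $u_i$'s, as in the proof of Theorem~\ref{thm-mingen}). For the reverse inequality, pass \eqref{msumem} for $(u_{k,1},\dots,u_{k,N})$ to the limit: $e(m_k) = \sum_{i=1}^N e(|F_{k,i}|) \to \sum_{i=1}^N e(m_i)$ by continuity of $e$, while $e(m_k) \to e(m)$, again by Lemma~\ref{lem-elip}; hence $e(m) = \sum_{i=1}^N e(m_i)$. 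This shows $(u_1,\dots,u_N)$ is a generalized minimizer of $\widetilde E_\infty$ in $\widetilde{\mathcal A}_\infty(m)$, completing the proof. The main obstacle to watch is the compactness step: one must be careful that no mass is lost when passing to the $L^1$ limit of the $u_{k,i}$, and that components do not degenerate or merge. This is exactly where the uniform lower volume bound $\delta_0$ and uniform diameter bound $D_0$ from Theorem~\ref{prodelta} are essential — they prevent a component from shrinking to nothing or spreading out to infinity — together with the observation that the (finite, uniformly bounded) number of components lets us treat them one at a time.
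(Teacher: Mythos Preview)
Your proof is correct and follows essentially the same approach as the paper: uniform bounds on $N_k$, the component volumes, and the diameters from Theorem~\ref{prodelta} give $BV$-compactness after translation, and then continuity of $e$ (Lemma~\ref{lem-elip}) together with lower semicontinuity of $\widetilde E_\infty$ identifies the limit as a generalized minimizer. The only minor difference is that the paper obtains $e(m)=\sum_i e(m_i)$ directly by passing the identity $e(m_k)=\sum_i e(m_{k,i})$ to the limit (continuity of $e$ on both sides), so your separate competitor argument for $e(m)\le\sum_i e(m_i)$ is not needed, though it is of course valid.
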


\begin{proof}
  By Theorem \ref{prodelta}, we know that $N_k\le M\in\mathbb N$ for
  all $k$ large enough. Hence, upon extraction of a subsequence we can
  asume that $N_k=N$ for all $k$, for some $N\in\mathbb N$.  For any
  $i\in\{1,\ldots,N\}$, we also have
  \begin{align}
    \sup_k \int_{\mathbb R^3} |\nabla u_{k,i}| \, dx \leq \sup_{m \in \I}
    \widetilde E_\infty(\chi_{B_{m^{1/3}}}) <\infty. 
  \end{align}
  Moreover, again by Theorem \ref{prodelta} we have
  $m_{k,i}\ge \delta_0$ and $\supp (u_{k,i}) \subset B_{D_0}(0)$,
  after suitable translations.  Hence, up to extracting a further
  subsequence, there exist $m_i\ge \delta_0$ and
  $u_i\in\widetilde A_\infty(m_i)$ such that $m_{k,i}\to m_i$ and
  $u_{k,i} \wto u_i$ in $BV(\R^3)$, as $k\to\infty$.  Passing to the
  limit in the equalities $m_k=\sum_{i=1}^N m_{k,i}$ and
  $e(m_k)=\sum_{i=1}^N e(m_{k,i})$, we obtain that
  \begin{align}
    m=\sum_{i=1}^N m_{i} \qquad {\rm and}\qquad e(m)=\sum_{i=1}^N
    e(m_{i}),    
  \end{align}
  where we used Lemma \ref{lem-elip} to establish the last
  equality. Finally, again by Lemma \ref{lem-elip} and by lower
  semicontinuity of $\widetilde E_\infty$ we have
  $e(m_i) \leq \widetilde E_\infty(u_i) \leq \liminf_{k \to \infty}
  e(m_{k,i}) = e(m_i)$, which yields the conclusion.
\end{proof}

With the two lemmas above, we are now in a position to prove the main
result of this subsection.

\begin{lemma}\label{lemI}
  The set $\I$ defined in \eqref{I} is compact.
\end{lemma}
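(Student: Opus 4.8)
The plan is to show that $\I$ is closed and bounded, where boundedness is already available from \cite[Theorem 3.3]{km:cpam13}, so the real content is closedness. Let $m_k \in \I$ with $m_k \to m$; we must show $m \in \I$, i.e. that $\widetilde E_\infty$ attains its infimum over $\widetilde A_\infty(m)$. The case $m \leq \widetilde m_0$ is immediate since then the minimum is attained by a ball \cite[Theorem 3.2]{km:cpam13}, so assume $m > \widetilde m_0$ (and hence $m_k > \widetilde m_0$ for $k$ large). For each such $k$, since $m_k \in \I$, there is an honest minimizer $u_k \in \widetilde A_\infty(m_k)$, which is in particular a generalized minimizer with $N_k = 1$.

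Next I would apply the compactness result of Lemma \ref{teocomp} to the sequence of generalized minimizers $(u_k)$: up to a subsequence, we get $N_k = N$ for all $k$ — but since $N_k = 1$ for every $k$, necessarily $N = 1$ — and, after suitable translations, $u_k \wto u$ in $BV(\R^3)$ with $u \in \widetilde A_\infty(m)$ a generalized minimizer of $\widetilde E_\infty$ in $\widetilde A_\infty(m)$. A generalized minimizer with $N = 1$ is, by Definition \ref{defgenmin}, exactly a minimizer of $\widetilde E_\infty$ over $\widetilde A_\infty(m)$. Hence $e(m) = \widetilde E_\infty(u)$ is attained, so $m \in \I$, proving closedness. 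Together with boundedness from \cite[Theorem 3.3]{km:cpam13}, this gives that $\I$ is compact.

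The only subtlety — and the step I expect to require a little care — is making sure the limit $N$ really is $1$ rather than merely $N \le 1$ being vacuous or the translations causing components to ``escape to infinity.'' But this is precisely what Lemma \ref{teocomp} rules out: it asserts $N_k = N$ along the subsequence and that each rescaled-and-translated component converges in $BV$ to a genuine component of a generalized minimizer, with $e(m) = \sum_{i=1}^N e(m_i)$ and each $m_i \geq \delta_0$. Since each $u_k$ has a single component, we get $N = 1$, $m_1 = m$, and the limit object is a bona fide minimizer. Thus the entire argument reduces to invoking Lemma \ref{teocomp} with the observation that minimizers are the $N=1$ generalized minimizers; no further estimates are needed.
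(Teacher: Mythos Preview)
Your proof is correct and follows essentially the same approach as the paper: boundedness from \cite[Theorem 3.3]{km:cpam13}, then closedness by applying Lemma \ref{teocomp} to a sequence of minimizers (viewed as generalized minimizers with $N_k=1$) and observing that the limit has $N=1$, hence is an honest minimizer. Your explicit handling of the case $m \le \widetilde m_0$ is in fact slightly more careful than the paper's presentation, since Lemma \ref{teocomp} is stated only for $m > \widetilde m_0$.
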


\begin{proof}
  Since $\mathcal I$ is bounded by \cite[Theorem 3.3]{km:cpam13}, it
  is enough to prove that it is closed. Let $m_k\to m>0$, with
  $m_k\in\I$, and let $u_k \in \widetilde A_\infty(m_k)$ be such that
  $\widetilde E_\infty(u_k) = e(m_k)$ for all $k\in\mathbb N$, i.e.,
  let $u_k$ be a minimizer of the whole space problem with mass $m_k$.
  We need to prove that $m\in\I$.  By Lemma \ref{teocomp} there
  exists a minimizer $u\in\widetilde A_\infty(m)$ such that
  $u_k \wto u$ weakly in $BV(\R^3)$ and $u_k\to u$ strongly in
  $L^1(\mathbb R^3)$.  In particular, there holds
  $\widetilde E_\infty(u) = e(m)$ and hence $m\in \I$.
\end{proof}

Finally, we establish a few further properties of $e(m)$.

\begin{lemma}\label{lemII}
  Let $\lambda_m^+$ and $\lam_m^-$ be the supremum and the infimum,
  respectively, of the Lagrange multipliers in \eqref{eqELgen}, among
  all generalized minimizers of $\widetilde E_\infty$ with mass
  $m > 0$.  Then the function $e(m)$ has left and right derivatives at
  each $m\in (0,\infty)$, and
\begin{align}\label{eqder}
  \lim_{h\to 0^+}\frac{e(m+h)-e(m)}{h}=\lambda_m^-\le
  \lambda_m^+=\lim_{h\to 0^+}\frac{e(m)-e(m-h)}{h}. 
\end{align}
In particular, $e$ is a.e. differentiable and
$e'(m)=\lambda_m^-=\lambda_m^+=:\lambda_m$ for a.e. $m > 0$.
\end{lemma}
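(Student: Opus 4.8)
The plan is to combine the local Lipschitz continuity of $e$ (Lemma \ref{lem-elip}), a Pohozaev‑type scaling identity relating the perimeter and Coulomb parts of a single component to its mass and to the Lagrange multiplier, and the compactness of generalized minimizers at nearby masses (Theorem \ref{prodelta} and Lemma \ref{teocomp}). Since $e$ is locally Lipschitz it is differentiable at a.e. $m>0$, and at such points the one‑sided derivatives coincide, so the last assertion of the lemma will follow at once from \eqref{eqder}; the middle inequality $\lambda_m^-\le\lambda_m^+$ there is just the trivial inequality $\inf\le\sup$ over the set of Lagrange multipliers of generalized minimizers of mass $m$, which is nonempty by Theorem \ref{thm-mingen}. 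Moreover, for $m\le\widetilde m_0$ the minimizer is a ball and $e(m)=m\,f(m)$ is given by the explicit real‑analytic formula \eqref{fmball}, so \eqref{eqder} holds; hence I may assume $m>\widetilde m_0$, and it remains to show that the right derivative of $e$ at $m$ equals $\lambda_m^-$ and the left derivative equals $\lambda_m^+$. The key computation is this: given a generalized minimizer $(u_1,\dots,u_N)$ of $\widetilde E_\infty$ of mass $m$ with common multiplier $\lambda$ from \eqref{eqELgen}, write $F_i=\mathrm{supp}\,u_i$, $m_i=|F_i|$, $P_i=\int_{\R^3}|\nabla u_i|\,dx$ and $V_i=\tfrac1{8\pi}\iint_{\R^3\times\R^3}\frac{u_i(x)u_i(y)}{|x-y|}\,dx\,dy$, so that $e(m)=\sum_{i=1}^N(P_i+V_i)$. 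Each $u_i=\chi_{F_i}$ minimizes $\widetilde E_\infty$ at mass $m_i$, so by Lemma \ref{lemreg} it is smooth and satisfies $2\kappa_i+v_{F_i}=\lambda$ on $\partial F_i$; evaluating the first variation of $\widetilde E_\infty-\lambda\,|\,\cdot\,|$ along the dilation $F_i\mapsto(1+s)F_i$, for which $\widetilde E_\infty(\chi_{(1+s)F_i})=(1+s)^2P_i+(1+s)^5V_i$ and $|(1+s)F_i|=(1+s)^3m_i$, and using the Euler--Lagrange equation to see that this first variation vanishes, one gets
\[
2P_i+5V_i=3\lambda\,m_i\qquad(i=1,\dots,N).
\]

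For the upper bounds on the one‑sided difference quotients I fix such a generalized minimizer, single out one component $F_1$ (note $m_1\ge\delta_0>0$ by Theorem \ref{prodelta}, or $m_1=m$ if $N=1$), and for $|h|$ small set $t=t(h):=((m_1+h)/m_1)^{1/3}$. Rescaling $F_1$ to $tF_1$, keeping the other components, and translating everything mutually far apart, then sending the separation to infinity, produces — exactly as in the proof of Lemma \ref{lem-elip} — an admissible competitor showing
\[
e(m+h)\ \le\ e(m)+(t^2-1)P_1+(t^5-1)V_1\ =\ e(m)+\lambda h+o(h)\qquad(h\to0),
\]
where I used $t^2-1=\tfrac{2h}{3m_1}+O(h^2)$, $t^5-1=\tfrac{5h}{3m_1}+O(h^2)$ and the Pohozaev identity. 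Taking $h\to0^+$ and then the infimum over all generalized minimizers of mass $m$ gives $\limsup_{h\to0^+}\tfrac{e(m+h)-e(m)}{h}\le\lambda_m^-$; taking instead $h\to0^-$, dividing by $h<0$ and replacing $h$ by $-h$ and the infimum by the supremum gives $\liminf_{h\to0^+}\tfrac{e(m)-e(m-h)}{h}\ge\lambda_m^+$.

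For the matching lower bounds I choose $h_k\to0^+$ realizing $\liminf_{h\to0^+}\tfrac{e(m+h)-e(m)}{h}$ and take generalized minimizers $(u_1^{(k)},\dots,u_{N_k}^{(k)})$ of mass $m+h_k$. By Theorem \ref{prodelta} and Lemma \ref{teocomp}, along a subsequence $N_k\equiv N$ and, after translations, $u_i^{(k)}\wto u_i$ in $BV(\R^3)$ with $(u_1,\dots,u_N)$ a generalized minimizer of mass $m$ with multiplier $\lambda$; moreover $m_i^{(k)}\to m_i$ and $V_i^{(k)}\to V_i$ (by the uniform diameter bound and continuity of the Coulomb energy under $L^1$ convergence), and hence $P_i^{(k)}\to P_i$, since $P_i^{(k)}+V_i^{(k)}=e(m_i^{(k)})\to e(m_i)=P_i+V_i$ by Lemma \ref{lem-elip} and lower semicontinuity of the perimeter. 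Rescaling $F_1^{(k)}$ down to mass $m_1^{(k)}-h_k$ (possible for large $k$ since $m_1^{(k)}\ge\delta_0$), keeping the rest, and separating far apart gives a competitor of mass $m$, so with $t_k:=((m_1^{(k)}-h_k)/m_1^{(k)})^{1/3}$ one obtains $e(m)\le e(m+h_k)+(t_k^2-1)P_1^{(k)}+(t_k^5-1)V_1^{(k)}$; rearranging, dividing by $h_k$ and using the expansions of $t_k^2-1$, $t_k^5-1$ together with $2P_1+5V_1=3\lambda m_1$ for the limit minimizer yields $\tfrac{e(m+h_k)-e(m)}{h_k}\ge\tfrac{2P_1^{(k)}+5V_1^{(k)}}{3m_1^{(k)}}+O(h_k)\to\lambda\ge\lambda_m^-$.

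Combining the two bounds shows the right derivative of $e$ at $m$ exists and equals $\lambda_m^-$; a symmetric argument — starting from $h_k\to0^+$ realizing $\limsup_{h\to0^+}\tfrac{e(m)-e(m-h)}{h}$, taking generalized minimizers of mass $m-h_k$, and enlarging one component up to mass $m_1^{(k)}+h_k$ — shows the left derivative exists and equals $\lambda_m^+$. This establishes \eqref{eqder}, and, as noted, the a.e. differentiability with $e'=\lambda_m^-=\lambda_m^+$ follows. The main obstacle is precisely this last (lower‑bound) step: it needs not just the compactness of Lemma \ref{teocomp} but the \emph{separate} convergence $P_i^{(k)}\to P_i$, $V_i^{(k)}\to V_i$, so that the limit of the difference quotient is exactly $\tfrac{2P_1+5V_1}{3m_1}=\lambda$ and not merely bounded by it; the competitor constructions and the Pohozaev identity themselves are routine once Lemma \ref{lemreg} and Theorem \ref{prodelta} are in hand.
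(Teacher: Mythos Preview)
Your argument is correct, and the overall architecture matches the paper's: the upper bounds on the one-sided difference quotients come from rescaling a generalized minimizer, and the lower bounds come from compactness of generalized minimizers at nearby masses (Lemma~\ref{teocomp}) together with a device that recovers the Lagrange multiplier in the limit. There are, however, two genuine differences worth noting.

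First, for the upper bound the paper rescales \emph{all} components of a generalized minimizer by the common factor $((m+h)/m)^{1/3}$ and computes the first variation directly via the Euler--Lagrange equation \eqref{eqELgen}, whereas you rescale a \emph{single} component and use the Pohozaev identity $2P_i+5V_i=3\lambda m_i$. Both are valid; your route is marginally cleaner bookkeeping-wise, at the cost of introducing the scaling identity.

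Second, and more substantively, for the lower bound the paper proceeds differently. It first observes that at a.e.\ point of differentiability the upper bounds alone already force $e'(m)=\lambda_m^-=\lambda_m^+$; then, for a general $m$, it picks $m_h\in(m,m+h)$ with $e$ differentiable at $m_h$ and $\tfrac{e(m+h)-e(m)}{h}\ge e'(m_h)=\lambda_{m_h}$, and finally uses the uniform $C^{3,\alpha}$ regularity of Lemma~\ref{lemreg} to upgrade the $BV$-compactness of Lemma~\ref{teocomp} to $C^2$ convergence of the boundaries, so that the Lagrange multipliers $\lambda_{m_h}$ themselves converge via \eqref{eqELgen} to some $\bar\lambda\ge\lambda_m^-$. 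Your argument avoids this regularity upgrade entirely: you rescale a component of the $(m+h_k)$-minimizer to build a competitor at mass $m$, and then identify the limit of the difference quotient as $\tfrac{2P_1+5V_1}{3m_1}=\lambda$ using only the separate convergences $V_1^{(k)}\to V_1$ (from $L^1$ convergence on uniformly bounded supports) and $P_1^{(k)}\to P_1$ (from continuity of $e$ and lower semicontinuity of perimeter). This is a more elementary path: it trades the $C^2$-convergence step for the Pohozaev identity, which is itself a one-line consequence of criticality under dilation. The paper's approach, on the other hand, yields the additional information (recorded in the Remark following the lemma) that $\lambda_m^\pm$ are actually attained.
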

\begin{proof}
  First of all, note that for $m \leq \widetilde m_0$, where
  $\widetilde m_0$ is defined in Sec. \ref{sec-scale}, the function
  $e(m) = m f(m)$ is given via \eqref{fmball}, and the statement of
  the lemma can be verified explicitly. On the other hand, by
  definition we have $\lambda_m^-\leq \lambda_m^+$.  Fix
  $m> \widetilde m_0$ and let $(u_1,\ldots,u_N)$, with
  $u_i=\chi_{F_i}$, be a generalized minimizer of
  $\widetilde E_\infty$ with mass $m$. We first show that
\begin{align}\label{eqderuno}
  \lam_m^- \geq \limsup_{h\to 0^+}\frac{e(m+h)-e(m)}{h} 
  && %
     \text{and} 
  && %
     \lambda_m^+ \leq \liminf_{h\to 0^+}\frac{e(m)-e(m-h)}{h}.
\end{align}
Indeed, for $h>0$ let $u^h_i=\chi_{F^h_i}$ with
$F^h_i=(\frac{m+h}{m})^{1/3} F_i$, so that
$|F^h_i|=(\frac{m+h}{m})|F_i|$.  Since
$(\frac{m+h}{m})^{1/3}=1+\frac h{3m}+o(h)$, we have
\begin{align} 
  \widetilde E_\infty(u^h_i) 
  = \widetilde E_\infty(u_i) 
  & + \frac{2h}{3m} \int_{\partial F_i} \kappa(x)\
    (x\cdot\nu(x))\,d\H^2(x) \notag \\
  & + \frac {h} {12 \pi m} \int_{\partial F_i}\int_{F_i}
    \frac{(x\cdot\nu(x) )}{|x-y|}\,dy\,d\H^2(x)+o(h),
\end{align}
where $\nu(x)$ is the outward unit normal to $\partial F_i$ at point
$x$.  In view of the Euler-Lagrange equation \eqref{eqELgen}, we hence
obtain
\begin{align} %
  \widetilde E_\infty(u^h_i) - \widetilde E_\infty(u_i)
  & = \frac{\lambda h}{3m}\int_{\partial F_i} 
    (x\cdot\nu(x))\,d\H^2(x) = \lambda\left( |F^h_i|-|F_i|\right)
    + o(h), 
\end{align}
where $\lam$ is the Langrage multiplier in \eqref{eqELgen}. Passing to
the limit as $h\to 0^+$, this gives
\begin{align}\label{lello}
  \limsup_{h\to 0^+}\frac{e(m+h)-e(m)}{h} %
  \leq  \limsup_{h\to 0^+} \frac 1h \Big( \sum_{i=1}^N\widetilde
    E_\infty(u_i^h)- \sum_{i=1}^N \widetilde E_\infty(u_i) \Big) \leq
  \lambda.
\end{align}
Since \eqref{lello} holds for all generalized minimizers, this yields
the first inequality in \eqref{eqderuno}.  Following the same argument
with $h$ replaced by $-h$, and taking the limit as $h\to 0^+$, we
obtain the second inequality in \eqref{eqderuno}.

\medskip

Now, by Lemma \ref{lem-elip} the function $e(m)$ is
a.e. differentiable on $(0,\infty)$, and at the points of
differentiability we have $e'(m) = \lam_m^- = \lam_m^+ =: \lam_m$.
Hence, for any $h > 0$ there exists $m_h \in (m,m+h)$ such that $e$ is
differentiable at $m_h$ and
\begin{align}
  \frac{e(m+h)-e(m)}{h} \geq  e'(m_h)=\lambda_{m_h}\,,
\end{align}
so that
\begin{align}\label{lella}
  \liminf_{h\to 0^+}\frac{e(m+h)-e(m)}{h}\ge
  \bar\lambda:=\liminf_{h\to 0^+}\lambda_{m_h}\,. 
\end{align}
Let $h_k\to 0^+$ be a sequence such that
$\lambda_{m_{h_k}} \to \bar \lambda$ as $k \to \infty$.  If
$(u_1^k,\ldots,u_N^k)$ are generalized minimizers with mass $m_{h_k}$
then by Lemma \ref{teocomp} they converge, up to a subsequence, to a
generalized minimizer with mass $m$. In view of Lemma \ref{lemreg}, up
to another subsequence we also have that the boundaries of the
components of the generalized minimizers with mass $m_{h_k}$ converge
strongly in $C^2$ to those of the limit generalized minimizer with
mass $m$. Therefore, by \eqref{eqELgen} we have that $\bar \lambda$ is
the Lagrange multiplier associated with the limit minimizer.  It then
follows that $\bar\lambda\geq \lambda_m^-$, so that recalling
\eqref{eqderuno} and \eqref{lella} we get
\begin{align}\label{lalla}
  \lim_{h\to 0^+} \frac{e(m+h)-e(m)}{h}=\lambda_m^-.
\end{align}
This is the first equality in \eqref{eqder}. The last equality in
\eqref{eqder} follows analogously by taking the limit from the other
side.
\end{proof}

\begin{Remark}\rm
{From} the proof of Lemma \ref{lemII} it follows that $\lambda^\pm_m$
are in fact the maximum and the minimum (not only the supremum and the
infimum) of the Lagrange multipliers in \eqref{eqELgen}, i.e., that
$\lambda_m^\pm$ are attained by some generalized minimizers with mass
$m$.
\end{Remark}

\begin{corollary}\label{core}
  The function $e(m)$ is Lipschitz continuous on $[m_0,\infty)$ for
  any $m_0>0$.
\end{corollary}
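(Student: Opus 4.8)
\textbf{Proof plan for Corollary \ref{core}.}

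The plan is to combine the two pieces of information we already have about $e(m)$: local Lipschitz continuity on $(0,\infty)$ from Lemma \ref{lem-elip}, and control of the one-sided derivatives via the Lagrange multipliers from Lemma \ref{lemII}. The only thing missing to upgrade ``Lipschitz on compact subsets of $(0,\infty)$'' to ``Lipschitz on $[m_0,\infty)$'' is a \emph{uniform} bound on $|e'(m)|$ for large $m$, so the whole task reduces to bounding the Lagrange multipliers $\lambda_m$ uniformly in $m$.

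First I would recall that, by Lemma \ref{lemII}, $e$ is a.e.\ differentiable with $e'(m) = \lambda_m$, where $\lambda_m$ is the common value of the Lagrange multipliers appearing in \eqref{eqELgen} for generalized minimizers of mass $m$; hence it suffices to show there is a universal constant $\Lambda$ with $|\lambda_m| \le \Lambda$ for all $m \ge m_0$ (indeed for all $m \ge \widetilde m_0$, with the range $(0,\widetilde m_0]$ handled explicitly via \eqref{fmball}, and the gap $[m_0, \widetilde m_0]$ absorbed by Lemma \ref{lem-elip}). To bound $\lambda_m$, fix a generalized minimizer $(\chi_{F_1},\dots,\chi_{F_N})$ of mass $m > \widetilde m_0$ and look at the Euler--Lagrange equation $2\kappa_i(x) + v_{F_i}(x) = \lambda_m$ on $\partial F_i$ for a single component $F_i$. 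By Theorem \ref{prodelta}, every component satisfies $\delta_0 \le |F_i| \le \widetilde m_2$ and $\mathrm{diam}(F_i) \le D_0$ with universal constants; in particular $|F_i|$ ranges over a fixed compact subinterval of $(0,\infty)$, so by the uniform regularity in Lemma \ref{lemreg} (applied with $[m_0,m_1] = [\delta_0,\widetilde m_2]$) we have $\|v_{F_i}\|_{C^{1,\alpha}(\R^3)} \le C$ and $\partial F_i \in C^{3,\alpha}$ with a uniform bound, whence $\|\kappa_i\|_{L^\infty(\partial F_i)} \le C$ as well. Evaluating the Euler--Lagrange equation at any point $x \in \partial F_i$ then gives $|\lambda_m| = |2\kappa_i(x) + v_{F_i}(x)| \le C$, a universal constant independent of $m$.

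Having established $|e'(m)| = |\lambda_m| \le \Lambda$ for a.e.\ $m \ge \widetilde m_0$, I would conclude as follows: since $e$ is locally Lipschitz (hence locally absolutely continuous) on $(0,\infty)$ by Lemma \ref{lem-elip}, for any $\widetilde m_0 \le a < b$ we have $e(b) - e(a) = \int_a^b e'(t)\,dt$, so $|e(b) - e(a)| \le \Lambda |b - a|$; thus $e$ is $\Lambda$-Lipschitz on $[\widetilde m_0,\infty)$. On the remaining interval $[m_0,\widetilde m_0]$ (when $m_0 < \widetilde m_0$) $e$ is Lipschitz by Lemma \ref{lem-elip}, and gluing the two Lipschitz estimates at $\widetilde m_0$ yields that $e$ is Lipschitz on all of $[m_0,\infty)$, as claimed.

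I expect the main obstacle to be the uniform bound on the mean curvature $\kappa_i$, equivalently the uniform $C^{3,\alpha}$ (or even just $C^2$) regularity of $\partial F_i$ across all components of all generalized minimizers of all masses $m > \widetilde m_0$. This is exactly where the combination of Theorem \ref{prodelta} (which confines the component masses to a \emph{fixed} compact set of positive numbers and bounds their diameters) and the ``uniformly in $m$'' clause of Lemma \ref{lemreg} is essential; without the mass confinement one could not invoke a single uniform elliptic-regularity constant. Everything else — differentiating $e$, integrating $e'$, splicing the two intervals — is routine.
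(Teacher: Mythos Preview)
Your proposal is correct and follows essentially the same route as the paper: bound the Lagrange multipliers $\lambda_m$ uniformly by confining the component masses of generalized minimizers to a fixed compact interval (via Theorem \ref{prodelta} and the boundedness of $\mathcal I$) and then invoking the uniform $C^{3,\alpha}$ regularity of Lemma \ref{lemreg} to control $\kappa_i$ and $v_{F_i}$ in the Euler--Lagrange equation \eqref{eqELgen}. The paper's proof is terser --- it appeals directly to the one-sided derivative formula \eqref{eqder} rather than integrating $e'$, and it does not separate out the interval $[m_0,\widetilde m_0]$ --- but the substance is the same.
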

\begin{proof}
  This follows from \eqref{eqder}, noticing that for all $m\geq m_0$
  there holds
\begin{align}
  -\infty < \inf_{m'\in [m_0,M]} \lambda_{m'}^-\leq
  \lambda_m^- \le\lambda_m^+  \leq \sup_{m'\in [m_0,M]}
  \lambda_{m'}^+ < +\infty,
\end{align}
where $M > 0$ is such that $\I\subset [0,M]$, and we used
\eqref{eqELgen} together with the uniform regularity from Lemma
\ref{lemreg} for the components of the generalized minimizers.
\end{proof}

\subsection{Proof of Theorem \ref{thm-fstar}}

In lieu of a complete characterization of the function $f(m)$ and the
set $\mathcal I$, we show that $f(m)$ is
continuous and attains its infimum on $\mathcal I$.  

The next result follows directly from Theorem \ref{thm-mingen},
Theorem \ref{prodelta} and \cite[Theorem 3.2]{km:cpam13}.
\begin{lemma}\label{lem-sum}
  There exists a universal constant $\delta_0 > 0$ such that for any
  $m \in (0,\infty)$ there exist $N \geq 1$ and
  $m_1,\ldots,m_{N}\in\I$ such that $m_i\ge \min \{ \delta_0, m \}$
  for all $i=1,\ldots,N$ and
\begin{align}\label{mmi}
  m=\sum_{i=1}^N m_i \qquad {\rm and}\qquad f(m)=\sum_{i=1}^N
  \frac{m_i}{m} f(m_i). 
\end{align}
\end{lemma}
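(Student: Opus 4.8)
The plan is to split into two regimes according to the size of $m$ relative to the universal threshold $\widetilde m_0$ introduced in Sec.~\ref{sec-scale}, and in each regime to read the decomposition off a generalized minimizer. First, in the regime $m \le \widetilde m_0$, I would invoke \cite[Theorem 3.2]{km:cpam13}, which guarantees that the infimum of $\widetilde E_\infty$ over $\widetilde A_\infty(m)$ is attained (in fact by a ball), so that $m \in \I$. One then simply takes $N = 1$ and $m_1 = m$: the mass identity $m = m_1$ is trivial, the energy identity $f(m) = \tfrac{m_1}{m} f(m_1)$ is a tautology, and $m_1 = m \ge \min\{\delta_0, m\}$ holds for any $\delta_0 > 0$.

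Next, in the regime $m > \widetilde m_0$, I would apply Theorem~\ref{thm-mingen} to obtain a generalized minimizer $(u_1, \ldots, u_N)$ of $\widetilde E_\infty$ in $\widetilde A_\infty(m)$, and set $m_i := \int_{\R^3} u_i \, dx$. By Definition~\ref{defgenmin}, each $u_i$ is a minimizer of $\widetilde E_\infty$ over $\widetilde A_\infty(m_i)$, whence $m_i \in \I$; moreover $m = \sum_{i=1}^N m_i$ and $e(m) = \sum_{i=1}^N e(m_i)$. Dividing the latter identity by $m$ and using $e(m_i) = m_i f(m_i)$ (valid since $m_i > 0$) yields
\begin{align}
f(m) = \frac{e(m)}{m} = \frac{1}{m} \sum_{i=1}^N e(m_i) = \sum_{i=1}^N \frac{m_i}{m} f(m_i),
\end{align}
which is the desired energy decomposition \eqref{mmi}.

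For the lower bound on the masses $m_i$, I would appeal to Theorem~\ref{prodelta}: since $m > \widetilde m_0$, the support of each component of a generalized minimizer has volume bounded below by a universal constant. Calling this constant $\delta_0$, and — if necessary — replacing it by $\min\{\delta_0, \widetilde m_0\}$ so that the same constant serves both regimes, we obtain in the second regime $m \ge \widetilde m_0 \ge \delta_0$, hence $m_i = |\mathrm{supp}(u_i)| \ge \delta_0 = \min\{\delta_0, m\}$, while in the first regime the bound was checked directly above. Combining the two cases with this common choice of $\delta_0$ gives the statement.

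I do not expect a genuine obstacle here, since the lemma is essentially a repackaging of Theorem~\ref{thm-mingen} (existence of generalized minimizers for all $m$) and Theorem~\ref{prodelta} (uniform volume lower bound for their components), together with the attainment result \cite[Theorem 3.2]{km:cpam13} for small mass. The only point needing a little care is the bookkeeping of the constant $\delta_0$: one must arrange $\delta_0 \le \widetilde m_0$ so that the bound $m_i \ge \min\{\delta_0, m\}$ holds simultaneously in the small-mass regime (where possibly $m < \delta_0^{\mathrm{prodelta}}$ and $N = 1$) and in the large-mass regime (where $N$ may exceed $1$).
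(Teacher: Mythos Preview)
Your proposal is correct and follows precisely the route indicated in the paper, which simply notes that the lemma is an immediate consequence of Theorem~\ref{thm-mingen}, Theorem~\ref{prodelta}, and \cite[Theorem 3.2]{km:cpam13}. Your case split at $\widetilde m_0$ and the bookkeeping adjustment $\delta_0 \leftarrow \min\{\delta_0, \widetilde m_0\}$ are exactly the right way to unpack that sentence.
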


Theorem \ref{thm-fstar} is a corollary of the following result.

\begin{theorem} \label{prp-fstar} %
  The function $f(m)$ is Lipschitz continuous on $[m_0,\infty)$ for
  any $m_0 > 0$. Furthermore, $f(m)$ attains its minimum, i.e.,
  \begin{align} \label{eq:Istar} %
    \mathcal I^* := \left\{ m^* \in \mathcal I \ : \ f(m^*) = \inf_{m
    \in \mathcal I} f(m) \right\} \not= \varnothing.
  \end{align}
  Furthermore, we have $f(m) \geq f^*$ for all $m > 0$ and
  \begin{align}\label{limf}
    \lim_{m\to 0}f(m)=\infty, 
    && 
       \lim_{m \to \infty} f(m) = f^*,
    && 
       \lim_{m \to \infty} \| f' \|_{L^\infty(m, \infty)} = 0. 
  \end{align}
\end{theorem}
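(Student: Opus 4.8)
The plan is to derive every assertion from results already established: the local Lipschitz continuity of $e$ (Lemma~\ref{lem-elip}, Corollary~\ref{core}), the explicit formula \eqref{fmball} together with the bounds \eqref{eq:fcC}, the decomposition identity of Lemma~\ref{lem-sum}, and the compactness of $\mathcal I$ (Lemma~\ref{lemI}).

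\emph{Lipschitz bound on $f$ and decay of $f'$.} By Corollary~\ref{core}, $e$ is Lipschitz on $[m_0,\infty)$ for every $m_0>0$; hence $f(m)=e(m)/m$ is locally absolutely continuous on $(0,\infty)$ with $f'(m)=e'(m)/m-f(m)/m$ at a.e.\ point. The function $f$ is bounded on $[m_0,\infty)$ (by \eqref{fmball} on $[m_0,\widetilde m_0]$ and by \eqref{eq:fcC} on $[\widetilde m_0,\infty)$), and $|e'|\le L(m_0)$ a.e.\ on $[m_0,\infty)$, so $|f'(m)|\le C(m_0)/m$ for a.e.\ $m\ge m_0$. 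This simultaneously gives that $f$ is Lipschitz on $[m_0,\infty)$ and, fixing $m_0=1$, that $\|f'\|_{L^\infty(m,\infty)}\le C(1)/m\to 0$ as $m\to\infty$.

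\emph{Blow-up at the origin, the bound $f\ge f^*$, and attainment.} The limit $\lim_{m\to0}f(m)=\infty$ is read off from \eqref{fmball}, whose leading term is $6^{2/3}\pi^{1/3}m^{-1/3}$. For the lower bound, Lemma~\ref{lem-sum} gives $f(m)=\sum_{i=1}^N(m_i/m)f(m_i)$ with $m_i\in\mathcal I$ and $\sum_i m_i=m$; since each $f(m_i)\ge f^*$ and the weights $m_i/m$ sum to $1$, we obtain $f(m)\ge f^*$ for all $m>0$. For the attainment of the minimum, extend $f$ to $\mathcal I$ by $f(0):=+\infty$; this extension is lower semicontinuous on the compact set $\mathcal I$ (continuous on $\mathcal I\setminus\{0\}$ by the Lipschitz bound above, and $\liminf_{m\to0}f(m)=+\infty$ by \eqref{fmball}, using that $\mathcal I$ contains a neighborhood of the origin), and its infimum over $\mathcal I$ equals $f^*\in(0,\infty)$, hence is attained at some $m^*\in\mathcal I\setminus\{0\}$. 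Thus $\mathcal I^*\ne\varnothing$, which also proves Theorem~\ref{thm-fstar}.

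\emph{The limit $f(m)\to f^*$ and the main obstacle.} In view of $f\ge f^*$ it remains to prove $\limsup_{m\to\infty}f(m)\le f^*$. Fix $m^*\in\mathcal I^*$ and a minimizer $u^*$ of $\widetilde E_\infty$ over $\widetilde A_\infty(m^*)$. For large $m$ set $k:=\lfloor m/m^*\rfloor$ and $r:=m-km^*\in[0,m^*)$, and take as competitor the union of $k$ disjoint translates of $u^*$ together with, if $r>0$, one ball of volume $r$, placed pairwise at mutual distance $\ge L$. Since $\widetilde E_\infty$ is the untruncated Coulomb energy, its value on this competitor is the sum of the $k+1$ self-energies plus the mutual interaction energy, the latter being $O(k^2/L)$ and hence $\le 1$ once $L=L(m)$ is chosen large enough. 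Therefore $e(m)\le k\,e(m^*)+E_0+1=km^*f^*+E_0+1$, where $E_0$ is the (finite) supremum of the energy of a ball of volume $r$ over $r\in[0,m^*]$; dividing by $m$ and using $km^*\le m$ and $f^*>0$ yields $f(m)\le f^*+(E_0+1)/m$, so $\limsup_{m\to\infty}f(m)\le f^*$. Everything else is bookkeeping once Corollary~\ref{core} and Lemma~\ref{lem-sum} are in hand; the only genuinely constructive step is this upper bound, and the point to watch is that the number of replicas $k\sim m/m^*$ grows with $m$, so the mutual separation $L$ must be taken $m$-dependent (e.g.\ $L\gg k^2$) to keep the accumulated cross-interactions negligible.
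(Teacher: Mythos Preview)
Your proof is correct and uses the same ingredients as the paper: Corollary~\ref{core} for the Lipschitz bound on $e$, the explicit formula \eqref{fmball} for the behavior near zero, compactness of $\mathcal I$ (Lemma~\ref{lemI}) for attainment, and the convex-combination identity from Lemma~\ref{lem-sum} for the global lower bound $f\ge f^*$.

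The one noteworthy difference is in the step $\limsup_{m\to\infty}f(m)\le f^*$. The paper only builds test configurations at the special masses $m=km^*$, obtaining $f(km^*)\le f^*$ and hence $\liminf_{m\to\infty}f(m)\le f^*$; the passage from this subsequence statement to the full limit then implicitly uses either the decay $|f'(m)|\le C/m$ (established in the following line) or, equivalently, subadditivity of $e$. You instead treat every large $m$ directly by writing $m=km^*+r$ and appending a ball of volume $r$, which makes the $\limsup$ bound self-contained at the cost of tracking the $O(k^2/L)$ cross-interactions and choosing $L=L(m)$. Both routes are fine; yours is a bit more explicit, the paper's a bit more economical.
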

\begin{proof}
  Since $f(m)=e(m)/m$, the Lipschitz continuity of $f(m)$ follows from
  Corollary \ref{core}.  By the continuity of $f(m)$ and since $\II$
  is compact, it then follows that there exists a (possibly
  non-unique) minimizer $m^* > 0$ of $f(m)$ over $\I$.

  \smallskip

  Turning to \eqref{limf}, the first statement there follows from
  \eqref{fmball}. Let now $u^*=\chi_{F^*} \in \AA_\infty(m^*)$ be a
  minimizer of $\widetilde E_\infty$ with $m = m^*$ for some
  $m^* \in \II^*$.  Given $k\in\mathbb N$, we can consider $k$ copies
  of $F^*$ at sufficiently large distance as a test configuration. We
  hence get $f(k m^*) \leq f(m^*)$ for any $k \in \mathbb N$, which
  implies $f(m^*) \geq \liminf_{m \to \infty}f(m)$. On the other hand,
  since $f(m^*)\leq f(m)$ for all $m\in\I$, by Lemma \ref{lem-sum} we
  obtain
  \begin{align}
    f(m)=\sum_{i=1}^N \frac{m_i}{m} f(m_i)\geq \sum_{i=1}^N
    \frac{m_i}{m} f(m^*)=f(m^*) \qquad \forall\,m>0, 
  \end{align}
  which gives the second identity in \eqref{limf}. Finally, by
  Corollary \ref{core} we have
  \begin{align}
    \lim_{m\to \infty}|f'(m)|=\lim_{m\to
    \infty}\left|\frac{e'(m)m-e(m)}{m^2}\right| \leq \lim_{m\to 
    \infty}\frac{f(m^*) + 2 \|e'\|_{L^\infty(m^*,\infty)}}{m}=0,
  \end{align}
  which yields the third identity in \eqref{limf}.
\end{proof}

\section{Proof of Theorems \ref{thm-main} and \ref{thm-uniform}}
\label{sec-main-proof}

\subsection{Compactness and lower bound}
\label{sec-lower}

In this section, we present the proof of the lower bound part of the
$\Gam$-limit in Theorem \ref{thm-main}:
\begin{proposition}[Compactness and lower bound] \label{prp-lower} %
  Let $(u_\eps) \in \AA_\eps$, let $\mu_\eps$ be given by
  \eqref{mueps}, let $v_\eps$ be given by \eqref{veps} and suppose
  that
  \begin{align} \label{EepC} \limsup_{\eps \to 0} \eps^{-4/3}
    E_\eps(u_\eps) < \infty.
  \end{align}
  Then the following holds:
  \begin{enumerate}
  \item[i)] There exists $\mu \in \MM^+(\TTT) \cap \mathcal H'$ and
    $v \in \mathcal H$ such that upon extraction of subsequences we
    have $\mu_\eps \rightharpoonup \mu$ in $\MM(\TTT)$ and
    $v_\eps \wto v$ in $\mathcal H$. Furthermore,
    \begin{align}
      \label{vdd}
      - \Delta v = \mu - \lam && \text{in $\DD'(\TTT)$.}
    \end{align}
  \item[ii)] The limit measure satisfies
    \begin{align}
      \label{EepsE0}
      E_0(\mu) \leq \liminf_{\eps \to 0} \eps^{-4/3} E_\eps(u_\eps).
    \end{align}
  \end{enumerate}
\end{proposition}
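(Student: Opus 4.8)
The plan is to split the energy $\eps^{-4/3}E_\eps(u_\eps)$ into the perimeter part and the Coulomb part, extract weak limits from the Coulomb part to get compactness, and then account for the ``lost'' energy (the self-interaction of small droplets and their perimeter) by a localization argument around the connected components, using the whole-space self-energy $\widetilde E_\infty$ and the optimal density $f^*$ from Theorem \ref{thm-fstar}.

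\emph{Step 1 (compactness).} From \eqref{EepC} and \eqref{Gmuepsmueps0} the quantity $\eps^{4/3}\int_\TTT|\nabla v_\eps|^2\,dx$ is bounded by a multiple of $E_\eps(u_\eps)\lesssim\eps^{4/3}$, so $\|v_\eps\|_{\mathcal H}$ is bounded; since $\int_\TTT v_\eps\,dx=0$, a subsequence of $v_\eps$ converges weakly in $\mathcal H$ to some $v$. Also $\mu_\eps(\TTT)=\lambda$ and $\mu_\eps\ge 0$, so $\mu_\eps\rightharpoonup\mu$ in $\MM(\TTT)$ along a further subsequence, with $\mu\in\MM^+(\TTT)$ and $\mu(\TTT)=\lambda$. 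Passing to the limit in the weak formulation \eqref{vepsweak} gives $-\Delta v=\mu-\lambda$ in $\DD'(\TTT)$, which is \eqref{vdd}; testing against $v_\eps$ and lower semicontinuity of the $\mathcal H$-norm yield $\int_\TTT|\nabla v|^2\,dx\le\liminf\int_\TTT|\nabla v_\eps|^2\,dx$, so in particular $\mu\in\MM^+(\TTT)\cap\mathcal H'$ by the characterization \eqref{def-H1*}; the identification $v(x)=\int_\TTT G(x-y)\,d\mu(y)$ in $L^1(\TTT,d\mu)$ follows from \eqref{v}--type arguments collected in the Appendix. This already gives part i).

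\emph{Step 2 (lower bound, easy Coulomb part).} The weak-$\mathcal H$ convergence gives
\begin{align*}
  \liminf_{\eps\to0}\tfrac12\,\eps^{4/3}\!\int_\TTT|\nabla v_\eps|^2\,dx \;\ge\; \tfrac12\!\int_\TTT|\nabla v|^2\,dx \;=\;\tfrac12\!\int_\TTT\!\int_\TTT G(x-y)\,d\mu(x)\,d\mu(y),
\end{align*}
which is the nonlocal term of $E_0(\mu)$. So it remains to show that the perimeter term $\eps^{-4/3}\cdot\eps\int_\TTT|\nabla u_\eps|\,dx=\eps^{-1/3}\int_\TTT|\nabla u_\eps|\,dx$, together with whatever Coulomb energy the above weak limit fails to capture, contributes at least $\lambda f^*$ in the $\liminf$.

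\emph{Step 3 (the heart: recovering $\lambda f^*$).} This is the main obstacle. The idea is to decompose $\{u_\eps=1\}$ into connected components $\{F_{\eps,k}\}_k$ with masses $m_{\eps,k}=|F_{\eps,k}|$, rescale each by $\eps^{-1/3}$ so that $\widehat F_{\eps,k}:=\eps^{-1/3}F_{\eps,k}$ has mass $\widehat m_{\eps,k}=\eps^{-1}m_{\eps,k}$, and observe that (after truncating the Coulomb kernel at a scale $R$ intermediate between droplet size and droplet spacing, so that the inter-droplet interactions are already accounted for by the weak limit in Step 2) the rescaled perimeter plus truncated self-energy of the $k$-th component is at least $\widetilde E_\infty^R(\widehat u_{\eps,k})\ge e_R(\widehat m_{\eps,k})$, hence (using Corollary \ref{cor-ER} and the generalized-minimizer machinery of Section \ref{sec-general}) at least $e(\widehat m_{\eps,k})\ge f^*\,\widehat m_{\eps,k}$ by the definition of $f^*$ and Theorem \ref{prp-fstar}. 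Summing over $k$ and using $\sum_k\widehat m_{\eps,k}=\eps^{-1}\sum_k m_{\eps,k}=\eps^{-1}\bar u_\eps=\lambda\eps^{-1/3}$, and noting $\eps^{-4/3}\cdot\eps=\eps^{-1/3}$, one gets that the perimeter-plus-self-energy contribution is bounded below by $f^*\cdot\eps^{1/3}\cdot\lambda\eps^{-1/3}=\lambda f^*$ up to errors that vanish as $\eps\to0$. The delicate points are: (a) making rigorous the claim that the truncated part of the Coulomb energy is asymptotically disjoint from the part captured by $v$ (this requires choosing $R=R_\eps\to\infty$ with $\eps^{1/3}R_\eps\to0$, and a standard diagonal/truncation estimate controlling the cross terms and the tail of $G$ minus its truncation, as in \cite{gms:arma13}); (b) handling components of very small mass or components touching the ``boundary'' of the fundamental domain via periodic extension, for which one uses the density and diameter bounds of Lemma \ref{lem-uER} and Lemma \ref{lem-ln}; and (c) ensuring the lower bound $e(m)\ge f^*m$ is applied uniformly, which is exactly the content of $f(m)\ge f^*$ for all $m>0$ from Theorem \ref{prp-fstar}. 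Combining Step 2 and Step 3 yields \eqref{EepsE0}, completing part ii).
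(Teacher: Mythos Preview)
Your Step 1 is correct and matches the paper. The difficulty is in how Steps 2 and 3 are meant to fit together. In Step 2 you apply weak lower semicontinuity of $v\mapsto\|v\|_{\mathcal H}^2$ to the \emph{entire} Coulomb term (the stray factor $\eps^{4/3}$ in your display is a slip; after \eqref{Gmuepsmueps0} the rescaled Coulomb term is exactly $\tfrac12\int_\TTT|\nabla v_\eps|^2\,dx$). Having used the full Coulomb energy there, Step 3 is left with only the perimeter term, and $\eps^{-1/3}\int_\TTT|\nabla u_\eps|\,dx$ alone is \emph{not} bounded below by $\lambda f^*$: one can make it arbitrarily small (a single ball of mass $\bar u_\eps$) or arbitrarily large (many tiny balls) while keeping the energy bounded. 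The phrase ``whatever Coulomb energy the above weak limit fails to capture'' is the right intuition, but the defect of weak convergence is not a quantity you can extract and add back; you must split \emph{before} passing to the limit, and then Step 2 becomes superfluous.

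This is exactly what the paper does. For a fixed $\rho\in(0,1)$ one writes $G=G_\rho+H_\rho$ with $G_\rho$ smooth (far field) and $H_\rho$ supported in $B_\rho$ (near field), and splits $\eps^{-4/3}E_\eps=E_\eps^{(1)}+E_\eps^{(2)}$. Since $G_\rho$ is continuous, $\mu_\eps\rightharpoonup\mu$ in $\MM(\TTT)$ gives an \emph{exact} limit $E_\eps^{(1)}\to\tfrac12\int\!\int G_\rho\,d\mu\,d\mu$, not merely a liminf inequality; sending $\rho\to0$ afterwards recovers the full nonlocal term of $E_0(\mu)$ by monotone convergence (the diagonal being $\mu\otimes\mu$--null by Lemma~\ref{lemhans}). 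For $E_\eps^{(2)}$ one bounds $H_\rho$ below by $(1-c\rho)$ times the truncated Newtonian kernel, rescales by $\ell_\eps=\eps^{-1/3}$, extends the configuration to $\R^3$ by zero outside one fundamental cube (choosing the translation so that the added boundary perimeter is $\le 6\lambda$, via a Fubini slicing), and then applies $\widetilde E_\infty^{\rho_0}(\hat u)\big/\!\int\hat u\ge f^*$ for $\rho_0\ge R_1$ from Corollary~\ref{cor-ER} and Theorem~\ref{prp-fstar}. This yields $\liminf_{\eps\to0} E_\eps^{(2)}\ge(1-c\rho)\lambda f^*$, and letting $\rho\to0$ finishes.

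Note in particular that the paper never decomposes $\{u_\eps=1\}$ into connected components; the truncated whole-space bound is applied to the \emph{entire} configuration $\hat u$ in one shot. Your component-by-component route is not wrong in spirit, but your point (b) appeals to the density and diameter estimates of Lemma~\ref{lem-uER} and Lemma~\ref{lem-ln}, which are proved only for \emph{minimizers} of $\widetilde E_\infty^R$ and are unavailable for a generic sequence $(u_\eps)\in\AA_\eps$ with merely bounded energy. Without those bounds you have no control on how components sit relative to the fundamental domain or on how many there are, which is precisely why the paper bypasses components altogether.
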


\begin{proof}
  The proof proceeds via a sequence of 4 steps. 

  \medskip

  \noindent \emph{Step 1: Compactness.} Since
  $\int_\TTT d\mu_\eps = \lam$, it follows that there is
  $\mu \in \MM^+(\TTT)$ with $\int_\TTT d\mu = \lam$ and a subsequence
  such that $\mu_\eps \rightharpoonup \mu$ in
  $\MM(\TTT)$. Furthermore, from \eqref{EepC} we have the uniform
  bound
  \begin{align} \label{id-veps} %
    \frac12 \int_{\TTT} |\nabla v_\eps|^2 \, dx = \frac12 \int_\TTT
    \int_\TTT G(x-y) \, d\mu_\eps(x) \, d\mu_\eps(y) \leq \eps^{-4/3}
    E_\eps(u_\eps) \leq C.
  \end{align}
  By the definition of the potential, we also have
  $\int_\TTT v_\eps \, dx = 0$. Upon extraction of a further
  subsequence, we hence get $v_\eps \wto v$ in $\HH$. Since
  $\mu_\eps \rightharpoonup \mu$ in $\MM(\TTT)$ and since the
  convolution of $G$ with a continuous function is again continuous,
  we also have
  \begin{align}
    \int_\TTT \left( \int_\TTT G(x-y) \phi(x) \, dx \right) d\mu_\eps(y) \to
    \int_\TTT \left( \int_\TTT G(x-y) \phi(x) \, dx \right) d\mu(y) 
    && \forall \phi \in \DD(\TTT).
  \end{align}
  This yields by Fubini-Tonelli theorem and uniqueness of the
  distributional limit that
  \begin{align}
    v(x) = \int G(x-y) \, d\mu(y) && \text{for a.e. $x \in \TTT$}.
  \end{align}
  Furthermore, since $v_\eps$ satisfies
  \begin{align} \label{vdd-eps} %
    - \Delta v_\eps = \mu_\eps - \lam && \text{in $\DD'(\TTT)$,}
  \end{align}
  taking the distributional limit, it follows that $v$ satisfies
  \eqref{vdd}. In particular, \eqref{vdd} implies that $\mu$ defines a
  bounded functional on $\mathcal H$, i.e. $\mu \in \mathcal H'$.

  \medskip

  \noindent \emph{Step 2: Decomposition of the energy into near field
    and far field contributions.}  We split the nonlocal interaction
  into a far-field and a near-field component. For $\rho \in (0,1)$
  and $x \in \TTT$, let $\eta_\rho (x) := \eta(|x| / \rho)$, where
  $\eta \in C^\infty(\R)$ is a monotonically increasing function such
  that $\eta(t) = 0$ for $t \leq \frac 12$ and $\eta(t) = 1$ for
  $t \geq 1$.  The far-field part $G_\rho$ and the near-field part
  $H_\rho$ of the kernel $G$ are then given by
  \begin{align} \label{GHrho} %
    G_\rho(x) = \eta_\rho(x) G(x), && H_\rho := G - G_\rho.
  \end{align}
  For any $u \in \mathcal A_\eps$, we decompose the energy accordingly
  as $E_\eps = E_\eps^{(1)} + E_\eps^{(2)}$, where
  \begin{align}     \label{Eeps12} %
    \begin{aligned}
      \eps^{-4/3} E_\eps^{(1)} (u) %
      & = \frac12 \eps^{-4/3} \int_\TTT \int_\TTT G_\rho(x-y) u(x)
      u(y) \, dx \, dy \\
      \eps^{-4/3} E_\eps^{(2)} (u) %
      & = \eps^{-1/3} \int_{\TTT} |\nabla u| \, dx + \frac 12
      \eps^{-4/3} \int_\TTT \int_\TTT H_\rho(x-y) u(x) u(y) \, dx \,
      dy
    \end{aligned}
  \end{align}
  In the rescaled variables, the far field part $E_\eps^{(1)}$ of the
  energy can also be expressed as
  \begin{align} \label{E1-rescaled} %
    \eps^{-4/3} E_\eps^{(1)} (u) %
    &= \frac12 \int_\TTT \int_\TTT G_\rho(x-y) \, d\mu_\eps(x) \,
      d\mu_\eps(y),
  \end{align}
  where $\mu_\eps$ is given by \eqref{mueps}. For the near field part
  $E_\eps^{(2)}$ of the energy, we set $\ell_\eps := \eps^{-1/3}$ and
  define $\t u : \TTT_{\ell_\eps} \to \R$ by
  \begin{align}
    \t u (x) :=  u(x / \ell_\eps),
  \end{align}
  where $\TTT_{\ell_\eps}$ is a torus with sidelength $\ell_\eps$
  (cf. Sec. \ref{sec-scale}). In the rescaled variables, we get
  \begin{align}
    \label{e2eps}
    \eps^{-4/3} E_\eps^{(2)} (u) %
    &= \eps^{1/3} \left( \int_{\TTT_{\ell_\eps}} |\nabla \t u| d
      x + \frac 12 \int_{\TTT_{\ell_\eps}} 
      \int_{\TTT_{\ell_\eps}} \eps^{1/3} H_\rho(\eps^{1/3}(x- y))
      \t u(x) \t u(y) \, d x \, d y 
      \right).
  \end{align}

  \medskip

  \noindent \emph{Step 3: Passage to the limit: the near field part.}
  Our strategy for the proof of the lower bound for \eqref{e2eps} is
  to compare $E_\eps^{(2)}$ with the whole space energy treated in
  Section \ref{sec-general} and use the results of this section. We
  claim that
  \begin{align} \label{est-near} %
    \liminf_{\eps \to 0} \eps^{-4/3} E_\eps^{(2)} (u) \geq (1 - c
    \rho) \lam f^*, %
  \end{align}
  for some universal constant $c > 0$. 

  \medskip
  
  Let $\Gam(x) := \frac 1{4\pi |x|}$, $x \in \R^3$, be the Newtonian
  potential in $\R^3$ and let $\Gam^\#(x) := \frac 1{4\pi |x|}$,
  $x \in \TTT$, be the restriction of $\Gamma(x)$ to the unit torus.
  We also define the corresponding truncated Newtonian potential
  $\Gam_\rho^\# : \TTT \to \R$ by
  \begin{align}
    \Gam_\rho^\#(x) := (1 - \eta_\rho(x)) \Gam^\#(x).
  \end{align}
  By a standard result, we have
  \begin{align}\label{decompG}
    G(x) = \Gam^\#(x) + R(x), \qquad x \in \TTT,
  \end{align}
  for some $R \in \mathrm{Lip}(\TTT)$.  Hence
  \begin{multline}
    \label{Hrhomin}
    H_\rho(x) %
    = (1 - \eta_\rho(x)) G(x) \geq (1 - \eta_\rho(x)) (\Gam^\#(x)
    - \nco{R}) \\
    \geq (1 - \eta_\rho(x)) \Gam^\#(x) (1 - 4 \pi \rho \nco{R} ) = (1
    - c \rho) \Gam_\rho^\#(x),
  \end{multline}
  where $c = 4\pi \nco{R}$. Inserting this estimate into
  \eqref{e2eps}, for $c \rho < 1$ we arrive at
  \begin{align} \label{updo-asin} %
    \frac{\eps^{-4/3} E_\eps^{(2)} (u) }{1 - c \rho} %
    & \geq \eps^{1/3} \left( \int_{\TTT_{\ell_\eps}} |\nabla \t u| \,
      dx + \frac 12 \int_{\TTT_{\ell_\eps}} \int_{\TTT_{\ell_\eps}}
      \eps^{1/3} \Gam_\rho^\#(\eps^{1/3}(x-y))
      \t u(x) \t u(y) \, dx \, dy \right) \notag \\
    & = \eps^{1/3} \left( \int_{\TTT_{\ell_\eps}} |\nabla \t u| \, dx
      + \int_{\TTT_{\ell_\eps}} \int_{\TTT_{\ell_\eps}} { (1 -
      \eta_\rho(\eps^{1/3} (x - y) )) \over 8 \pi |x-y| } \, \t u(x)
      \t u(y) \, dx \, dy \right).
  \end{align}

  Next we want to pass to a whole space situation by extending the
  function $\t u$ periodically to the whole of $\R^3$ and then
  truncating it by zero outside one period. We claim that after a
  suitable translation there is no concentration of the periodic
  extension of $\t u$, still denoted by $\t u$ for simplicity, on the
  boundary of a cube
  $Q_{\ell_\eps} := (-\frac12 \ell_\eps , \frac12 \ell_\eps)^3$. More
  precisely, we claim that
  \begin{align} \label{tr-good} %
    \int_{\p Q_{\ell_\eps}} \t u(x - x^*) \, d \H^2(x) \leq 6 \lam,
  \end{align}
  for some $x^* \in Q_{\ell_\eps}$. Indeed, by Fubini's theorem we
    have
    \begin{align}
      \label{fubslice}
      \lambda \ell_\eps = \int_{Q_{\ell_\eps}} \tilde u \, dx =
      \int_{-\frac12 \ell_\eps}^{\frac12 \ell_\eps} \mathcal H^2(\{
      u(x) = 1 \} \cap \{x \cdot e_1 = t\}) \, dt, 
    \end{align}
    where $e_1$ is the unit vector in the first coordinate direction.
    This yields existence of
    $x_1^* \in (-\frac12 \ell_\eps , \frac12 \ell_\eps)$ such that
    $\mathcal H^2(\{ u(x) = 1 \} \cap \{x \cdot e_1 = x_1^* \}) \leq
    \lambda$.
    Repeating this argument in the other two coordinate directions and
    taking advantage of periodicity of $\t u$, we obtain existence of
    $x^* \in Q_{\ell_\eps}$ such that \eqref{tr-good} holds.

    Now we set
    \begin{align} \label{def-tu} 
      \hat u(x) := %
      \left \{
    \begin{array}{ll}
      \t u(x - x^*) \qquad& x \in \overline Q_{\ell_\eps}, \\
      0 \qquad& x \in \R^3 \backslash \overline Q_{\ell_\eps}. \\
    \end{array}
    \right.
  \end{align}
  We also introduce the truncated Newtonian potential on $\R^3$ by
  \begin{align}\label{gammarho}
    \Gam_\rho(x) :=  {1 - \eta_\rho(x) \over 4 \pi |x|}, \qquad x
    \in \R^3.
  \end{align}
  By \eqref{tr-good}, the additional interfacial energy due to the
  extension \eqref{def-tu} is controlled:
  \begin{align} \label{low-good} %
    \int_{\TTT_{\ell_\eps}} |\nabla \t u| \, dx = \int_{\R^3} |\nabla
    \hat u| \, dx - \int_{\p Q_{\ell_\eps}} \hat u \, dx \geq
    \int_{\R^3} |\nabla \hat u| \, dx - 6 \lam .
  \end{align}
  We hence get from \eqref{updo-asin}:
  \begin{multline}
    \frac{\eps^{-4/3} E_\eps^{(2)} (u) }{1 - c \rho} %
    \geq \eps^{1/3} \left( \int_{\R^3} |\nabla \hat u| \, dx + \frac
      12 \int_{\R^3} \int_{\R^3} \Gam_{\eps^{-1/3}
        \rho}(x-y) \hat u(x) \hat u(y) \, dx \, dy - 6 \lam \right) \\
    \geq \frac \lam{\int_{\R^3} \hat u \, dx} \left( \int_{\R^3}
      |\nabla \hat u| \, dx + \frac 12 \int_{\R^3} \int_{\R^3} \Gam_{
        \rho_0}(x-y) \hat u(x) \hat u(y) \, dx \, dy \right) - 6 \lam
    \eps^{1/3},
  \end{multline}
  for any $\rho_0 > 0$, provided that $\eps$ is sufficiently small
  (depending on $\rho_0$). By Corollary \ref{cor-ER} and Theorem
  \ref{prp-fstar}, the first term on the right hand side is bounded
  below by $\lam f^*$ as soon as $\rho_0 \geq R_1$.  Therefore,
  passing to the limit as $\eps \to 0$, we obtain \eqref{est-near}.

  \medskip

  \noindent \emph{Step 4: Passage to the limit: the far field part.}
  Passing to the limit $\mu_\eps \rightharpoonup \mu$ in
  $\mathcal M(\TTT)$, for the far field part of the energy we obtain
  \begin{align}
    \label{e1rho}
    \lim_{\eps \to 0} \eps^{-4/3} E_\eps^{(1)} (u_\eps)%
    &= \frac12 \int_{\TTT} \int_{\TTT} G_\rho(x-y) \, d\mu(x) \, d \mu(y).
  \end{align}
  At the same time, by \eqref{app-eq} in Lemma \ref{lemhans} in the
  appendix the set $\{(x, y) \in \TTT : x = y \}$ is negligible with
  respect to the product measure $\mu \otimes \mu$ on
  $\TTT \times \TTT$.  Therefore, since
  $G_\rho(x - y) \nearrow G(x - y)$ as $\rho \to 0$ for all
  $x \not = y$, by the monotone convergence theorem the right-hand
  side of \eqref{e1rho} converges to
  $\int_{\TTT} \int_{\TTT} G(x - y) \, d \mu(x) \, d \mu(y)$. Finally,
  the lower bound in \eqref{EepsE0} is recovered by combining this
  result with the limit of \eqref{est-near} as $\rho \to 0$.
\end{proof}

\subsection{Upper bound construction}
\label{sec-upper}

We next give the proof of the upper bound in Theorem \ref{thm-main}:
\begin{proposition}[Upper bound construction] \label{prp-upper} %
  For any $\mu \in \mathcal M^+(\TTT) \cap \mathcal H'$ with
  $\int_\TTT d \mu = \lambda$, there exists a sequence
  $(u_\eps) \in \mathcal A_\eps$ such that
    \begin{align}
      \mu_\eps \rightharpoonup \mu \quad \text{ in
      $\MM(\TTT)$} 
      && \text{and} && v_\eps \rightharpoonup v \qquad \text{ in
                       $\mathcal H$}, 
    \end{align}
    as $\eps \to 0$, where $\mu_\eps$, $v_\eps$ and $v$ are defined in
    \eqref{mueps}, \eqref{veps} and \eqref{v}, respectively, and
    \begin{align}
      \label{amost-en}
      \limsup_{\eps \to 0} \eps^{-4/3} E_\eps(u_\eps) \leq E_0(\mu).
    \end{align}
\end{proposition}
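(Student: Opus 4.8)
The plan is to build an explicit periodic multi-droplet test configuration, after first reducing to limit measures with a smooth density.

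\emph{Reduction.} It will suffice to produce a recovery sequence when $d\mu=\rho\,dx$ with $\rho\in C^\infty(\TTT)$, $\rho\ge 0$ and $\int_\TTT\rho\,dx=\lambda$. Indeed, mollifying at scale $\delta$ gives $\mu_\delta:=\mu*\phi_\delta$ with smooth density, $\mu_\delta(\TTT)=\lambda$ and $\mu_\delta\rightharpoonup\mu$; its potential is $v_\mu*\phi_\delta$, so $\|v_{\mu_\delta}\|_{\mathcal H}\le\|v_\mu\|_{\mathcal H}$, which together with weak lower semicontinuity forces $\int_\TTT\int_\TTT G\,d\mu_\delta\,d\mu_\delta\to\int_\TTT\int_\TTT G\,d\mu\,d\mu$, hence $E_0(\mu_\delta)\to E_0(\mu)$. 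Since the weak topologies on bounded subsets of $\MM(\TTT)$ and of $\mathcal H$ are metrizable, a diagonal argument reduces the general case to the smooth one. Moreover, once the energy along a sequence is bounded, \eqref{id-veps} shows $v_\eps$ is bounded in $\mathcal H$, and any weak limit $v$ then satisfies $-\Delta v=\mu-\lambda$ in $\DD'(\TTT)$ and hence coincides with the function in \eqref{v}; so it is enough to track the convergence $\mu_\eps\rightharpoonup\mu$ and the energy.

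\emph{Construction.} Fix $m^*\in\mathcal I^*$ and a minimizer $u_{m^*}=\chi_{F^*}$ of $\widetilde E_\infty$ over $\widetilde A_\infty(m^*)$; this exists by Theorem \ref{thm-fstar}, and by Lemma \ref{lemreg} the set $F^*$ is bounded with smooth boundary. Choose a mesoscopic scale $h=h_\eps$ with $\eps^{1/9}\ll h\ll 1$ and $1/h\in\mathbb N$ (e.g. $h=\eps^{1/18}$), partition $\TTT$ into cubes $\{Q_i\}$ of side $h$, and set $m_i:=\int_{Q_i}\rho\,dx$. In $Q_i$ place $n_i:=\lfloor\eps^{-1/3}m_i/m^*\rfloor$ pairwise disjoint translates of the dilated droplet $\eps^{1/3}F^*$, with centers on a regular sub-lattice of $Q_i$ of spacing comparable to $(m^*\eps^{1/3}/\rho(x_i))^{1/3}\sim\eps^{1/9}$, inset from $\partial Q_i$. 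Since each droplet has diameter $O(\eps^{1/3})\ll\eps^{1/9}$, for small $\eps$ the droplets are disjoint, lie inside $Q_i$, and (uniformly over $i$, using $\rho\in L^\infty$) any two distinct droplets of the configuration are separated by at least $c_0\eps^{1/9}$ for some $c_0=c_0(\rho)>0$. Let $u_\eps$ be the characteristic function of the union; since $\int_\TTT u_\eps\,dx$ differs from $\lambda\eps^{2/3}$ by $O(\eps^{5/6})$, adding or removing $O(\eps^{-1/6})$ droplets, which is negligible in the energy, yields $u_\eps\in\mathcal A_\eps$. Each droplet carries $\mu_\eps$-mass $\eps^{-2/3}\eps m^*=\eps^{1/3}m^*$, so $\mu_\eps(Q_i)=m_i+O(\eps^{1/3})$; testing against $\phi\in C(\TTT)$ and using $h\to 0$ gives $\mu_\eps\rightharpoonup\mu$, and the same estimate gives $\nu_\eps\rightharpoonup\mu$ for the empirical measure $\nu_\eps:=\sum_\alpha\eps^{1/3}m^*\,\delta_{a_\alpha}$, the sum running over all droplet centers $a_\alpha$.

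\emph{Energy.} Write $u_\eps=\sum_\alpha w_\alpha$ with $w_\alpha=\chi_{a_\alpha+\eps^{1/3}F^*}$ and split the Coulombic part of $E_\eps$ into self- and cross-interactions of the $w_\alpha$. By scaling, $\eps\int_\TTT|\nabla w_\alpha|=\eps^{5/3}\int_{\R^3}|\nabla\chi_{F^*}|$, and, using the splitting $G(x)=(4\pi|x|)^{-1}+R(x)$ with $R$ Lipschitz together with $\widetilde E_\infty(\chi_{F^*})=e(m^*)$, the self-interaction of $w_\alpha$ equals $\eps^{5/3}(e(m^*)-\int_{\R^3}|\nabla\chi_{F^*}|)+O(\eps^2)$. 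Summing over the $N\sim\lambda\eps^{-1/3}/m^*$ droplets, the perimeter term plus the total self-interaction equals $\eps^{4/3}\tfrac{\lambda}{m^*}e(m^*)+o(\eps^{4/3})=\eps^{4/3}\lambda f(m^*)+o(\eps^{4/3})=\eps^{4/3}\lambda f^*+o(\eps^{4/3})$, since $m^*\in\mathcal I^*$. For the cross-interaction, distinct droplet supports are at distance $\ge c_0\eps^{1/9}$ with diameter $O(\eps^{1/3})$, while $|\nabla G(z)|\lesssim|z|^{-2}+1$; hence replacing $G(x-y)$ by $G(a_\alpha-a_\beta)$ on each pair, and summing over the $O(N^2)=O(\eps^{-2/3})$ pairs, incurs a total error that is $o(\eps^{4/3})$. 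Thus $\eps^{-4/3}$ times the total cross-interaction equals $\tfrac12\sum_{\alpha\ne\beta}(\eps^{1/3}m^*)^2 G(a_\alpha-a_\beta)+o(1)$, i.e. $\tfrac12$ times the diagonal-free Coulomb energy of the atomic measure $\nu_\eps$, plus $o(1)$.

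\emph{Limit of the discrete Coulomb energy, and conclusion.} The remaining and most delicate point is that $\sum_{\alpha\ne\beta}(\eps^{1/3}m^*)^2 G(a_\alpha-a_\beta)\to\int_\TTT\int_\TTT G(x-y)\,d\mu(x)\,d\mu(y)$. Since $\nu_\eps\otimes\nu_\eps\rightharpoonup\mu\otimes\mu$ and $G$ is continuous on $\{|x-y|>\delta\}$, the part of the sum with $|a_\alpha-a_\beta|>\delta$ converges to $\int_{\{|x-y|>\delta\}}G\,d\mu\,d\mu$ for all but countably many $\delta>0$, by the Portmanteau theorem. For the part with $0<|a_\alpha-a_\beta|\le\delta$, the $c_0\eps^{1/9}$-separation yields the packing bound $\#\{\beta:|a_\beta-a_\alpha|\sim r\}\lesssim(r/\eps^{1/9})^3$ at dyadic scales $r\le\delta$, so summing $(\eps^{1/3}m^*)^2 G(a_\alpha-a_\beta)\lesssim(\eps^{1/3}m^*)^2 r^{-1}$ over these scales costs $\lesssim(m^*)^2\eps^{1/3}\delta^2$ per center, hence $\lesssim\lambda m^*\delta^2$ after multiplying by $N\sim\lambda\eps^{-1/3}/m^*$ centers, provided $\delta$ is small enough that $G\ge 0$ near the diagonal. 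Since $\int_{\{0<|x-y|\le\delta\}}G\,d\mu\,d\mu\to 0$ as $\delta\to 0$ by finiteness of the Coulombic energy of $\mu$ (from $\mu\in\mathcal H'$, cf. \eqref{muCoul}) and $\mu\otimes\mu(\{x=y\})=0$ (Lemma \ref{lemhans}), letting $\delta\to 0$ proves the claim. Combining the above, $\eps^{-4/3}E_\eps(u_\eps)\to\lambda f^*+\tfrac12\int_\TTT\int_\TTT G\,d\mu\,d\mu=E_0(\mu)$, which is stronger than \eqref{amost-en}; together with the reduction, this proves the proposition. The crux of the argument is this last step, namely recovering the continuum Coulombic energy of $\mu$ from a singular, essentially atomic interaction, and it is precisely what dictates the scale separation $\eps^{1/9}\ll h\ll 1$ and the packing/dyadic control near the diagonal.
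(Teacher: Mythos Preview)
Your argument is essentially correct and parallels the paper's proof: both reduce by mollification to a smooth density, place rescaled copies of a minimizer $\chi_{F^*}$ with $m^*\in\mathcal I^*$ in mesoscopic cells so that droplets are $\gtrsim\eps^{1/9}$ apart, and control the near-diagonal Coulomb interaction by a dyadic packing estimate. The organizational difference is in the energy splitting. The paper decomposes the kernel as $G=G_\rho+H_\rho$ (far/near field): the far-field term converges directly by weak convergence of $\mu_\eps$, while the near-field term is bounded by $(1+c\rho)\lambda f^*$ plus a remainder of order $\rho^2$ coming from the same dyadic annulus estimate you use. You instead split by droplet into self- and cross-interactions, obtain $\lambda f^*$ from the self terms, replace the cross terms by the discrete Coulomb sum of the atomic measure $\nu_\eps$, and pass to the limit via Portmanteau away from the diagonal plus packing near it. Both routes encode the same two mechanisms (optimal self-energy per mass, and controlled short-range interaction), and your cross-interaction replacement error is indeed $O(\eps^{5/3})$ by the $|\nabla G|\lesssim|z|^{-2}$ bound and the packing count, as you indicate.

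One small gap: adding or removing integer numbers of droplets of fixed mass $\eps m^*$ cannot in general hit the constraint $\int_\TTT u_\eps\,dx=\lambda\eps^{2/3}$ exactly. The paper handles this by dilating the droplets in each cell by factors $\theta^{(j)}\to 1$; you could instead insert a single small ball of the residual mass (at most $\eps m^*$), which contributes $O(\eps^{5/3})$ to the energy and does not affect any of your limits.
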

\begin{proof}
  We first note that the limit energy is continuos with respect to
  convolutions. In particular, we may assume without loss of
  generality that $d \mu(x) = g(x) dx$ for some
  $g \in C^\infty(\TTT)$, and that there exist $C \geq c > 0$ such
  that
  \begin{align}\label{cmuepsC}
    c \leq g(x) \leq C &&\text{for all $x \in \TTT$.}
  \end{align}

  We proceed now to the construction of the recovery sequence. For
  $\delta > 0$, we partition $\TTT$ into cubes $Q_i^\delta$ with
  sidelength $\delta$.  Let $u^* \in BV(\TTT_{\ell_\eps}; \{0, 1\})$,
  where $\ell_\eps = \eps^{-1/3}$, be a minimizer of
  $\widetilde E_\infty$ over $\widetilde {\mathcal A}_\infty(m)$ with
  $m = m^* \in \II^*$ (cf. Theorem \ref{prp-fstar}), suitably
  translated, restricted to a cube with sidelength $\ell_\eps$ and
  then trivially extended to $\TTT_{\ell_\eps}$ (the latter is
  possible without modifying either the mass or the perimeter by
  Theorem \ref{prodelta} for universally small $\eps$). For a given
  set of centers $a_{\eps,\delta}^{(j)}$,
  $j = 1, \ldots, N_{\eps,\delta}$, and a given set of scaling factors
  $\theta_{\eps,\delta}^{(j)} \in [1, \infty)$, we define
  $u_{\eps,\delta} : \TTT \to \R$ by
  \begin{align} \label{muepsstar} %
    u_{\eps,\delta}(x) := \sum_{j=1}^{N_{\eps,\delta}} u^* \left(
    \theta_{\eps,\delta}^{(j)} \eps^{-1/3} (x - a_{\eps,\delta}^{(j)})
    \right) &&\text{for $x \in \TTT$},
   \end{align}
   as the sum of $N_{\eps,\delta}$ suitably rescaled minimizers of
   $\widetilde E_\infty(u)/\int_{\R^3} u \, dx$.  Note that \linebreak
   $\int_{\TTT_{\ell_\eps}} u^* (\eps^{-1/3}x) \, dx = \eps m^*$.  To
   decide on the placement of $a_{\eps,\delta}^{(j)}$, we denote the
   number of the centers in each cube as $N_{\eps,\delta}^{(i)}$,
   i.e.,
  \begin{align}
    N_{\eps,\delta}^{(i)}  := \# \Big \{ j \in \{ 1, \ldots,
    N_{\eps,\delta} \} \ : \ a_{\eps,\delta}^{(j)} \in 
    Q_i^\delta \Big \} .
  \end{align}
  With this notation we have
  $N_{\eps,\delta} = \sum_i N_{\eps,\delta}^{(i)}$, provided that
  $\supp (u_{\eps,\delta}) \cap \partial Q_i^\delta = \varnothing$ for
  all $i$.  The measure $\mu$ is then locally approximated in every
  cube $Q_i^{\delta}$ by ``droplets'' uniformly distributed throughout
  each cube. Namely, we set
  \begin{align}
    N_{\eps,\delta}^{(i)}  \ = \ \left\lceil
    \frac{\mu(Q_i^\delta)}{\eps^{1/3} 
    m^*} \right\rceil, 
  \end{align}
  and choose $a_{\eps,\delta}^{(j)}$ so that
  \begin{align}
    \label{deps}
    K \eps^{1/9} \leq d_{\eps,\delta} \leq K' \eps^{1/9},  
  \end{align}
  where
  $d_{\eps,\delta} := \min_{i \not= j} |a_{\eps,\delta}^{(j)} -
  a_{\eps,\delta}^{(i)}|$
  is the minimal distance between the centers, for some $K' > K > 0$
  depending only on $\mu$. We also set
  \begin{align}
    \theta_{\eps,\delta}^{(j)} := \left( {\eps^{1/3} m^*
    N_{\eps,\delta}^{(i)} \over
    \mu(Q_i^\delta)} \right)^{1/3} \qquad
    \text{if} \ a_{\eps,\delta}^{(j)} \in
    Q^\delta_i.
  \end{align}
  Then, if $\eps$ is sufficiently small depending only on $\delta$ and
  $\mu$, we find that $u_{\eps,\delta} \in \mathcal A_\eps$ for $\eps$
  sufficiently small depending only on $\delta$ and $\mu$.
  
  Finally, we define the measure $\mu_{\eps,\delta}$ associated with
  the test function $u_{\eps,\delta}$ constructed above,
  $d \mu_{\eps,\delta}(x) := \eps^{-{2/3}} u_{\eps,\delta}(x) \, dx$,
  as in \eqref{mueps}
  and choose a sequence of $\delta \to 0$. Choosing a suitable
  sequence of $\eps = \eps_{\delta} \to 0$, we have
  $\mu_{\eps_\delta,\delta} \rightharpoonup \mu$ in
  $\mathcal M(\TTT)$.  For simplicity of notation, in the following we
  will suppress the $\delta$-dependence, e.g., we will simply write
  $u_\eps$ instead of $u_{\eps_{\delta},\delta}$, etc.

\medskip
  
It remains to prove \eqref{amost-en}. As in the proof of the lower
bound, for a given $\rho \in (0,1)$ we split the kernel $G$ into the
far field part $G_\rho$ and the near field part $H_\rho$. Decomposing
the energy into the two parts in \eqref{Eeps12} and using
\eqref{E1-rescaled}, we have
  \begin{align}
    \eps^{-4/3} E_\eps^{(1)} (u_\eps) %
    &= \frac12 \int_\TTT \int_{\TTT} G_\rho(x - y) \, d \mu_\eps(x) \, d
      \mu_\eps(y). \label{Eeps1sup} 
  \end{align}
  Since $\mu_\eps \rightharpoonup \mu$ in $\MM(\TTT)$, we can pass to
  the limit $\eps \to 0$ in \eqref{Eeps1sup}. Then, since the limit
  measure $\mu$ belongs to $\mathcal H'$, by the monotone convergence
  theorem we recover the full Coulombic part of the limit energy $E_0$
  in \eqref{E0} in the limit $\rho \to 0$.

  \medskip

  For the estimate of the near field part of the energy, we observe
  that 
  \begin{align}
    \label{Hrhomax}
    H_\rho(x) \leq (1 + c \rho) \Gamma_\rho^\#(x), 
  \end{align}
  for some universal $c > 0$ (cf. the estimates in
  \eqref{Hrhomin}). With this estimate, we get
  \begin{align}
    \eps^{-4/3} E_\eps^{(2)}(u_\eps) \ %
    & \leq \eps^{-1/3} \int_{\TTT} |\nabla u_\eps| \, dx + \frac12
      \eps^{-4/3} (1 + c \rho) \int_\TTT \int_\TTT
      \Gamma^\#_\rho(x - y)  u_\eps(x)
      u_\eps(y) \, dx \, dy \notag \\ 
    &\leq \eps^{-1/3} \int_{\TTT} |\nabla u_\eps| \, dx + \eps^{-4/3}
      (1 + c \rho)
      \int_\TTT \int_{B_{\frac12 d_\eps}(x)} \frac{u_\eps(x)
      u_\eps(y)}{8\pi|x-y|} \, dy \, dx \notag \\ 
    &\qquad + \eps^{-4/3}
      (1 + c \rho) \int_\TTT \int_{B_\rho(x) \backslash B_{\frac12 d_\eps}(x)}
      \frac{u_\eps(x) u_\eps(y)}{8\pi|x-y|} \, dy \,
      dx. \label{linup-2}
\end{align}
By the optimality of $u^*$ and the fact that all
$\theta_{\eps,\delta}^{(j)} \geq 1$, we hence get
\begin{align}
  \eps^{-1/3} \int_{\TTT} |\nabla u_\eps| \, dx +  (1 + c
  \rho) \eps^{-4/3} 
  \int_\TTT \int_{B_{\frac12 d_\eps}(x)} \frac{u_\eps(x) 
  u_\eps(y)}{8\pi|x-y|} \, dy \, dx 
  \leq (1 + c \rho) (\lambda + o_\eps(1)) f^*,
\end{align}
where the $o_\eps(1)$ term can be made to vanish in the limit by
choosing $\eps_\delta$ small enough for each $\delta$ to ensure that
all $\theta_{\eps,\delta}^{(j)} \to 1$. Since we can choose $\rho > 0$
arbitrary, this recovers the first term in the limit energy $E_0$ in
\eqref{E0}.

\medskip

It hence remains to estimate the last term in \eqref{linup-2}. We
first note that
  \begin{align}
    \eps^{-4/3} \int_\TTT \int_{B_\rho(x) \backslash
    B_{\frac12 d_\eps}(x)} \frac{u_\eps(x) 
    u_\eps(y)}{|x-y|} \, dy \, dx %
    \leq \ \lam \, \sup_{x \in \TTT} \int_{B_\rho(x)
    \backslash B_{\frac12 d_\eps}(x)}
    \frac{d\mu_\eps(y)}{|x-y|}  .
  \end{align}
  To control the last term, for any given $x \in \TTT$ we introduce a
  family of dyadic balls $B_k := B_{2^{-k} \rho}(x)$,
  $k = 0, 1, \ldots$.  By \eqref{deps}, we have
  $B_\rho(x) \backslash B_{\frac12 d_\eps}(x) \subset
  \bigcup_{k=0}^{K_\eps} B_k \backslash B_{k+1}$
  for
  $K_\eps := \lceil \log_2 (\rho / d_\eps ) \rceil \leq 1 + \log_2
  (\rho / d_\eps)$,
  or, equivalently, $2^{-K_\eps} \rho \geq {d_\eps \over 2}$, provided
  that $\eps$ is sufficiently small depending only on $\delta$ and
  $\mu$. Therefore, with our construction we have
  $\mu_\eps(B_k) \leq 2^{-3 k} C \rho^3$ for some $C > 0$ depending
  only on $\mu$ and all $0 \leq k \leq K_\eps$. This yields
  \begin{multline}
    \label{vsup}
    \quad \sup_{x \in \TTT} \int_{B_\rho(x) \backslash B_{\frac12
        d_\eps}(x)} \frac{d\mu_\eps(y)}{|x - y|} \leq
    \sum_{k=0}^{K_\eps} \int_{B_k \backslash B_{k+1}}
    \frac{d\mu_\eps(y)}{|x - y|} \\
    \leq \sum_{k=0}^{K_\eps} \frac{2^{k +1} \mu_\eps(B_k)}{\rho} %
    \leq \sum_{k=0}^{K_\eps} \frac{2 C \rho^2}{4^k} %
    \leq {8 C \rho^2 \over 3}. \quad
    \end{multline}
    Since we can choose $\rho > 0$ arbitrarily small, this concludes the
  proof.
\end{proof}

\begin{Remark}
  \rm{We note that the construction in Proposition \ref{prp-upper}
    still yields, upon extraction of a subsequence, a recovery
    sequence for a given sequence of $\eps = \eps_n \to 0$.}
\end{Remark}

\subsection{Equidistribution of energy} 
\label{sec-proof-uniform}

We now prove Theorem \ref{thm-uniform}. First, we observe that
\begin{align}
  d \nu_\eps = \eps^{-1/3} |\nabla u_\eps| \, dx + \frac 12 v_\eps d
  \mu_\eps, 
\end{align}
where $\mu_\eps$ is defined in \eqref{mueps}. We claim that the
following lower bound for measures $\nu_\eps$, given $\bar x \in \TTT$
and $\delta \in (0, 1)$, holds true:
\begin{align} \label{EQliminf} %
  \liminf_{\eps \to 0} \nu_\eps(B_\delta(\bar x)) \ge
  |B_\delta(\bar x)| \lambda f^*.
\end{align}
As in \eqref{GHrho}, we split $G$ into the far field part $G_\rho$ and
the near field part $H_\rho$, for some fixed $\rho \in
(0,\delta)$. Since $\supp (H_\rho) \subset B_\delta(0)$, we obtain
\begin{align}
  \label{EQ1}
  \nu_\eps(B_\delta(\bar x)) 
  & = \eps^{-1/3}
    \int_{B_\delta(\bar x)} |\nabla u_\eps| \, dx + \frac12 \eps^{-4/3}
    \int_{B_\delta(\bar x)} \int_{B_\delta(x)} H_\rho(x - y) u_\eps(x)
    u_\eps(y) \, d y \, d x \notag \\
  & + \frac12 \int_{B_\delta(\bar x)} \int_\TTT G_\rho(x - y) \, d
    \mu_\eps(y) \, d \mu_\eps(x).
\end{align}
Then, since $G_\rho$ is smooth and $\mu_\eps(\TTT) = \lambda$, by
Corollary \ref{cor-conv} the integral
$\int_\TTT G_\rho(x - y) \, d \mu_\eps(y)$ converges to
$\lambda \int_\TTT G_\rho(y) \, dy$ uniformly in $x \in \TTT$ as
$\eps \to 0$. At the same time, by the definition of $G$ and
\eqref{Hrhomax} we have
$0 = \int_\TTT G(y) \, dy = \int_\TTT G_\rho(y) \, dy + \int_\TTT
H_\rho(y) \, dy \leq \int_\TTT G_\rho(y) \, dy + C \rho^2$
for some universal $C > 0$. Hence, we get
\begin{align}
  \label{EQ2}
  \nu_\eps(B_\delta(\bar x))
  & \geq \eps^{-1/3} \int_{B_\delta(\bar x)}
    |\nabla u_\eps| \, dx + \frac12\eps^{-4/3} \int_{B_\delta(\bar x)}
    \int_{B_\delta(x)} H_\rho(x - y) u_\eps(x) u_\eps(y) \, d y \, d x
    - C \lambda \rho^2,
\end{align}
for $\eps$ sufficiently small and $C > 0$ universal.

We now identify $u_\eps$ with its periodic extension to the whole of
$\mathbb R^3$. By Fubini's theorem, for a given
$\delta' \in (0, \delta)$, there is
$t = t_{\delta',\delta} \in (\delta', \delta)$ such that
\begin{align}
  \int_{\partial B_t(\bar x)} u_\eps(x) \, d \mathcal H^2(x) \leq {1
  \over \delta - \delta'} \int_{\delta'}^\delta \left(
  \int_{\partial B_s(\bar x)} u_\eps(x) \, d \mathcal H^2(x) \right) ds
  = {1 \over \delta - \delta'} \int_{B_\delta(\bar x) \backslash
  B_{\delta'}(\bar x)} u_\eps \, dx.
\end{align}
We then define $\tilde u_\eps \in BV(\mathbb R^3; \{0, 1\})$ by
$\tilde u_\eps = u_\eps \chi_{B_t(\bar x)}$. Recalling again Corollary
\ref{cor-conv}, we obtain
\begin{align} \label{Pddd} %
  \int_{\R^3} |\nabla \tilde u_\eps| \, dx %
  & = \int_{B_t(\bar x)} |\nabla u_\eps| \, dx + \int_{\partial
    B_t(\bar x)} u_\eps(x) \, d\HH^2(x) \notag \\ %
  &\leq \int_{B_\delta(\bar x)} |\nabla u_\eps| \, dx + C \lambda
    \delta^2 \eps^{2/3},
\end{align}
for some universal $C > 0$, provided that $\eps$ is sufficiently
small. We note that $\tilde u_\eps(x) \leq u_\eps(x)$ for every
$x \in \mathbb R^3$. Furthermore, for sufficiently small $\delta$ we
have $H_\rho \geq 0$ and
\begin{align} \label{esto}
H_\rho(x - y) \geq {(1 - c \rho) \Gam(x-y)}
\qquad \text{for all $|x - y| \leq \tfrac12 \rho$,}
\end{align}
for some universal $c > 0$ (where $\Gamma$ is the Newtonian potential
in $\R^3$, as above).
From \eqref{EQ2}, \eqref{Pddd} and \eqref{esto} we then get
\begin{align} \label{EQ3} %
  \nu_\eps(B_\delta(\bar x)) \geq \eps^{-1/3} \int_{\mathbb R^3}
  |\nabla \tilde u_\eps| \, dx + \frac{1-c\rho}{2}\eps^{-4/3}
  \int_{\mathbb R^3} \int_{B_{\rho/2}(x)} \Gamma(x-y) \tilde u_\eps(x)
  \tilde u_\eps(y) \, dy \, dx - C \lambda \rho^2,
\end{align}
for $\eps$ small enough. Letting now
$\hat u_\eps(x) := \tilde u_\eps(\eps^{1/3} x)$ be the rescaled
function which satisfies
\begin{align}
  \int_{\mathbb R^3} \hat u_\eps \, dx = \frac 1 \eps \int_{\mathbb R^3}
  \tilde u_\eps \, dx = \lambda |B_t(\bar x)|\eps^{-1/3} +o(\eps^{-1/3}),
\end{align}
for every fixed $\rho_0 > 0$ and $\eps$ sufficiently small, we get
\begin{multline}
  \label{EQ4}
  \nu_\eps(B_\delta(\bar x)) \\
  \geq (1 - c \rho) \eps^{1/3} \left( \int_{\mathbb R^3} |\nabla \hat
    u_\eps| \, dx + \frac 12\int_{\mathbb R^3} \int_{B_{\rho_0}(x)}
    \Gamma(x-y)\hat u_\eps(x) \hat u_\eps(y) \, dy \,dx \right) - C
  \lambda \rho^2
  \\
  \geq \frac{(1-2 c\rho)\lam |B_t(\bar x)|}{\int_{\R^3} \hat u_\eps \,
    dx} \left( \int_{\R^3} |\nabla \hat u_\eps| \, dx + \frac 12
  \int_{\R^3} \int_{\R^3} \Gam_{\rho_0}(x-y) \hat u_\eps(x) \hat
  u_\eps(y) \, dy \, dx \right) - C \lambda \rho^2,
\end{multline}
where $\Gam_{\rho_0}$ is defined via \eqref{gammarho}.  Recalling
Corollary \ref{cor-ER} and choosing $\rho_0\geq R_1$, we obtain
\begin{align}
  \liminf_{\eps\to 0} \nu_\eps(B_\delta(\bar x)) \geq (1-2 c\rho)
  \lambda f^* |B_t(\bar x)| - C \lambda \rho^2,  
\end{align}
which gives \eqref{EQliminf} by first letting $\rho\to 0$ and then
$\delta' \to \delta$.

\smallskip

We now prove a matching upper bound.  Notice that by the definition we
have $v_\eps(x) \geq C := - \lambda |\min_{y \in \TTT} G(y)|$ for
every $x \in \TTT$. Therefore, the negative part $\nu^-_\eps$ of
$\nu_\eps$ obeys
$\nu^-_\eps(U) = -\frac12 \int_{U \cap \{v_\eps < 0 \} } v_\eps
d\mu_\eps \leq \frac12 |C| \mu_\eps(U)$
for every open set $U \subset \TTT$.  In turn, since
$\nu_\eps(\TTT) = \lambda f^* + o_\eps(1)$ by \eqref{Eepsmin}, it
follows that the positive part $\nu^+_\eps$ of $\nu$ obeys
$\nu^+_\eps(U) = \int_{U \cap \{ v_\eps \geq 0 \} } \left( \eps^{-1/3}
  |\nabla u_\eps| \, dx + \frac12 v_\eps \, d \mu_\eps \right) \leq
\lambda f^* + \frac12 |C| \lambda + o_\eps(1)$.
Hence $|\nu_\eps| = \nu^+_\eps + \nu^-_\eps$ is uniformly bounded as
$\eps\to 0$, and up to a subsequence $\nu_\eps \rightharpoonup \nu$
for some $\nu\in \mathcal M(\TTT)$ with $\nu(\TTT) = \lambda f^*$.
Since from the lower bound \eqref{EQliminf} we have
$\nu(U) \ge \lambda f^* |U|$, it then follows that
$d \nu = \lambda f^* dx$.  Finally, in view of the uniqueness of the
limit measure, the result holds for the original sequence of
$\eps \to 0$. \qed

\section{Uniform estimates for minimizers of the rescaled energy}
\label{sec-prop}

In this section, we establish uniform estimates for the minimizers of
the rescaled problem associated with $\widetilde E_\ell$ over
$\widetilde A_\ell$ from \eqref{El} and \eqref{Al}, respectively. The
main result is a uniform bound on the modulus of the potential,
independently of the domain size $\ell$.

\medskip

Throughout this section, $F \subset \TTT_\ell$ with
$|F| = \lambda \ell$ is always taken to be such that
$\tilde u_\ell = \chi_F$ is a regular representative of a minimizer of
$\widetilde E_\ell$ over $\widetilde A_\ell$ for a given $\lambda > 0$
(for simplicity of notation, we suppress the explicit dependence of
$F$ on $\ell$ throughout this section). The estimates below are
obtained for families of minimizers $(\tilde u_{\ell_n})$ as
$\ell_n \to \infty$ and hold for all $\ell_n \geq \ell_0$, where
$\ell_0 > 0$ may depend on $\lambda$ and the choice of the family. For
simplicity of notation, we indicate this by saying that an estimate
holds for $\ell \gg 1$.

\medskip

Following \cite{rigot00,maggi} we recall the notion of
$(\Lambda,r_0)$-minimizer of the perimeter (for a different approach
that leads to the same regularity results, see \cite{goldman12cvar}).

\begin{definition}
  Given $\Lambda,r_0>0$ we say that a set $F \subset \TTT_\ell$ is a
  volume-constrained $(\Lambda,r_0)$-minimizer if
\begin{equation}\label{eqlambda}
  P(F)\leq P(F')+\Lambda |F\Delta F'| \qquad \forall\, F'\subset\TTT_\ell,
  \text{ s.t. }(F\Delta F')\subset B_{r_0}\text{ and }|F'|=|F|\,,
\end{equation}  
where $P(F)$ denotes the perimeter of the set $F$, and $B_r$ denotes a
generic ball of radius $r$ contained in $\TTT_\ell$.
\end{definition}

The following result is a consequence of the regularity theory for
minimal surfaces with volume constraint (see for instance
\cite[Chapters III--IV]{maggi}, \cite[Section 4]{rigot00}).

\begin{proposition}\label{thgn}
  Let $F\subset\TTT_\ell$ be a volume-constrained
  $(\Lambda,r_0)$-minimizer, with \linebreak
  $|F|\in \left( r_0^3, \ell^3 - r_0^3 \right)$.  Then $\partial F$ is
  of class $C^{1,1/2}$, and there exist universal constants
  $\delta > 0$ and $c>0$ such that for all $x_0 \in \overline F$ we
  have
 \begin{align} 
   \label{Lamr0dens}
   |F_0 \cap B_r(x_0)|\geq c r^3 \qquad \text{for all } 
   r\leq \min\left(r_0,\frac{\delta}{\Lambda}\right)\,,
  \end{align}
  where $F_0$ is the connected component of $F$ such that
  $x_0 \in \overline F_0$.
\end{proposition}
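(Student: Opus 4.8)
The plan is to reduce Proposition~\ref{thgn} to the classical regularity theory for $\Lambda$-minimizers of the perimeter, as developed in \cite[Chapters~III--IV]{maggi} and \cite[Section~4]{rigot00}. The first step is to dispose of the volume constraint: I would show that, thanks to the two-sided bound $r_0^3 < |F| < \ell^3 - r_0^3$, the set $F$ is an \emph{unconstrained} $(\Lambda',r_0')$-minimizer, namely
\begin{align}
  P(F) \le P(F') + \Lambda'\,|F \Delta F'| \qquad \text{whenever } F \Delta F' \subset B_{r_0'},
\end{align}
with $\Lambda' \le C\max\{\Lambda, r_0^{-1}\}$ and $r_0' \ge c\,r_0$ for dimensional constants $c,C>0$. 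This is the standard volume-fixing argument: given a competitor $F'$ whose volume differs from $|F|$ by $t$, one restores the prescribed volume by adding or deleting an amount of mass near a fixed ball of radius $\sim r_0$ on which both $F$ and $\TTT_\ell\setminus F$ have measure $\gtrsim r_0^3$ (this is where the two-sided volume condition enters), at a perimeter cost of at most $C|t|/r_0 \le C|F\Delta F'|/r_0$.

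The second step is purely local. An unconstrained $(\Lambda',r_0')$-minimizer has distributional mean curvature bounded in modulus by $\Lambda'$, so the De Giorgi--type excess-decay theory (\cite[Chapters~III--IV]{maggi}) gives that its reduced boundary is $C^{1,\gamma}$ for every $\gamma<1$ — in particular $C^{1,1/2}$ — with the estimates holding at every scale $s \le \min\{r_0',\delta'/\Lambda'\}$ for a dimensional $\delta'>0$; since we are in $\R^3$ and $3<8$, the singular set is empty, so $\partial F$ is a $C^{1,1/2}$ hypersurface and, more importantly for what follows, around each boundary point it is the graph of a function with small Lipschitz constant over a hyperplane, uniformly, on balls of radius $\sim\min\{r_0',\delta'/\Lambda'\}$. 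After relabelling the universal constant this is the scale $\min\{r_0,\delta/\Lambda\}$ in the statement, and the assertion $\partial F\in C^{1,1/2}$ follows.

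The third step is the density lower bound, via the classical comparison argument. For $x_0\in\partial F$ and $r\le\min\{r_0,\delta/\Lambda\}$, set $m(r):=|F\cap B_r(x_0)|$ and compare $F$ with $F\setminus B_r(x_0)$; since $F\Delta(F\setminus B_r(x_0))=F\cap B_r(x_0)$, the almost-minimality inequality yields $P(F;B_r(x_0))\le m'(r)+\Lambda' m(r)$ for a.e.\ $r$, where $m'(r)$ is the area of the spherical slice $\partial B_r(x_0)\cap F$. Combining this with $P(F\cap B_r(x_0))=P(F;B_r(x_0))+m'(r)$, the isoperimetric inequality $m(r)^{2/3}\le C_3\,P(F\cap B_r(x_0))$, and absorbing the $\Lambda' m(r)$ term (which is $\le\Lambda'\omega_3^{1/3}r\,m(r)^{2/3}$) into the left-hand side once $r$ is below the stated scale, one obtains $\tfrac{d}{dr}\,m(r)^{1/3}\ge c$, which integrates — using $m(r)\le\omega_3 r^3\to0$ — to $m(r)\ge c\,r^3$. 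To pass from $F$ to the connected component $F_0$ one invokes the quantitative graph representation of the second step: at the scale $\min\{r_0,\delta/\Lambda\}$ the set $F$ coincides in $B_r(x_0)$ with a subgraph region, hence is connected there, so $F\cap B_r(x_0)=F_0\cap B_r(x_0)$ for $x_0\in\partial F_0$; a short covering/propagation argument then extends the bound to all $x_0\in\overline F_0$.

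The main point requiring care is the first step: checking that the volume-fixing modification can be carried out with a perimeter cost controlled \emph{solely} in terms of $\Lambda$, $r_0$ and the dimension, with the dependence on $\Lambda$ and $r_0$ confined to the single scale $\min\{r_0,\delta/\Lambda\}$; this (and the mild circularity with the density estimates, which is resolved in the standard way by arranging the volume-fixing ball far from $B_r(x_0)$) is exactly what is treated in \cite[Chapters~III--IV]{maggi} and \cite[Section~4]{rigot00}. Everything else is a direct transcription of well-known arguments.
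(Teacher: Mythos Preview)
Your proposal is essentially correct and in fact supplies considerably more detail than the paper itself, which does not give a proof of this proposition at all but simply records it as ``a consequence of the regularity theory for minimal surfaces with volume constraint'' and cites \cite[Chapters~III--IV]{maggi} and \cite[Section~4]{rigot00}. What you outline is precisely the standard route taken in those references: remove the volume constraint via a volume-fixing variation (this is Almgren's lemma, cf.\ \cite[Lemma~17.21]{maggi}, and is where the two-sided volume bound enters), apply the excess-decay/De Giorgi theory to the resulting unconstrained $(\Lambda',r_0')$-minimizer to get $C^{1,1/2}$ regularity with no singular set in dimension three, and run the differential-inequality comparison with $F\setminus B_r(x_0)$ for the density lower bound. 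The upgrade from $F$ to the connected component $F_0$ via the local graph representation, and the extension from $\partial F_0$ to $\overline{F_0}$, are correct.

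Two very minor remarks. First, the sharp H\"older exponent for a $(\Lambda,r_0)$-minimizer coming from the excess decay is $1/2$ (the term $\Lambda r$ in the decay inequality limits the rate), so your parenthetical ``$C^{1,\gamma}$ for every $\gamma<1$'' is slightly optimistic in this generality; fortunately you only use $C^{1,1/2}$. Second, you are right to flag the apparent circularity in the volume-fixing step; the standard resolution --- locating the fixing ball away from $B_r(x_0)$ using only the crude volume assumption $r_0^3<|F|<\ell^3-r_0^3$, not the density bound being proved --- is exactly what the cited references do.
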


Let $\Gam(x) := {1 \over 4\pi |x|}, \ x \in \R^3$, be the Newtonian
potential in $\R^3$ and let
$\Gam^\#_\ell(x) := {1 \over 4\pi |x|}, \ x \in \TTT_\ell$, be the
restriction of $\Gamma(x)$ to $\TTT_\ell$.  Letting
\begin{align}
  \label{Gell}
  \widetilde G_\ell(x) := {1 \over \ell} G \left( {x \over
  \ell} \right), \qquad \quad
  x \in \TTT_\ell,
\end{align}
by \eqref{decompG} we have for all $\ell \geq 1$:
\begin{align}\label{decompGG}
  \widetilde G_\ell(x) = \Gam^\#_\ell(x) + R_\ell(x) \qquad {\rm for\ all\
  }x \in \TTT_\ell,
\end{align}
with $R_\ell \in {\rm Lip}(\TTT_\ell)$ satisfying
\begin{align}\label{eqR}
  |R_\ell(x)|\leq \frac{C}{\ell}\qquad {\rm and}
  \qquad |\nabla R_\ell(x)|\leq \frac{C}{\ell^2}
\qquad {\rm for\ all\
  }x\in\TTT_\ell\,,
\end{align}
with a universal $C>0$.

Let now 
\begin{align}
  v_F(x) := \int_F \widetilde G_\ell(x-y) \, dy, \qquad \quad
  x \in \TTT_\ell,
\end{align}
be the potential associated with $F$.  Notice that $v_F$ satisfies
\begin{align}
  -\Delta v_F = \chi_F - {\lambda \over \ell^2} \qquad \text{and}
  \qquad \int_{\TTT_\ell} v_F \, dx = 0.
\end{align}
In particular, by standard elliptic regularity
$v_F \in C^{1,\alpha}(\TTT_\ell)$ for any $\alpha \in (0, 1)$
\cite{gilbarg}, and $v_F$ is subharmonic outside $\overline F$, so
that the maximum of $v_F$ is attained in $\overline F$. Moreover, we
have the following a priori bounds for $v_F$ throughout the rest of
this section, $v_F$ always refers to the potential associated with the
minimizer $F$).

\begin{lemma}
  \label{lem-estv}
  There exists a universal constant $C > 0$ such that
  \begin{align}
  \label{estvE}
    -C \leq v_F \leq C (\lambda \ell)^{2/3},
  \end{align}
  for all $\ell \gg 1$.
\end{lemma}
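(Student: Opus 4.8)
The plan is to reduce the statement to an estimate for $v_F$ on $\overline F$ and then to bound $v_F$ there directly from the volume constraint $|F| = \lambda\ell$, \emph{without} using minimality of $F$. As noted in the text preceding the statement, $v_F$ is continuous on the compact torus $\TTT_\ell$ and subharmonic on $\TTT_\ell\setminus\overline F$ (there $-\Delta v_F = -\lambda/\ell^2 < 0$), so $\max_{\TTT_\ell} v_F = \max_{\overline F} v_F$, and it suffices to bound $v_F(x_0)$ for $x_0\in\overline F$. The only tool needed is the splitting $\widetilde G_\ell = \Gam^\#_\ell + R_\ell$ from \eqref{decompGG}, together with $\|R_\ell\|_{L^\infty(\TTT_\ell)} \le C/\ell$ from \eqref{eqR}.

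For the lower bound I would simply write $v_F(x) = \int_F \Gam^\#_\ell(x-y)\,dy + \int_F R_\ell(x-y)\,dy$ and note that the first integrand is nonnegative while the second term is bounded below by $-\|R_\ell\|_{L^\infty(\TTT_\ell)}\,|F| \ge -C\lambda$; as $\lambda$ is fixed, this gives the left inequality in \eqref{estvE}.

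For the upper bound, fix $x_0\in\overline F$, set $r := (\lambda\ell)^{1/3}$, and split the Newtonian part $\int_F \Gam^\#_\ell(x_0-y)\,dy$ into the contributions of $B_r(x_0)$ and of its complement. For $\ell\gg 1$ one has $r < \ell/2$, so $B_r(x_0)$ is an ordinary ball in $\TTT_\ell$, hence $\int_{B_r(x_0)} \Gam^\#_\ell(x_0-y)\,dy = r^2/2$, while on the complement $\Gam^\#_\ell(x_0-y)\le (4\pi r)^{-1}$, so that $\int_{F\setminus B_r(x_0)}\Gam^\#_\ell(x_0-y)\,dy \le |F|/(4\pi r) = \lambda\ell/(4\pi r)$. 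Adding these and inserting $r = (\lambda\ell)^{1/3}$ yields $\int_F\Gam^\#_\ell(x_0-y)\,dy \le (\tfrac12 + \tfrac1{4\pi})(\lambda\ell)^{2/3}$; the remainder $|\int_F R_\ell(x_0-y)\,dy| \le C\lambda$ is $\le (\lambda\ell)^{2/3}$ for $\ell\gg 1$. Hence $v_F(x_0) \le C(\lambda\ell)^{2/3}$, and by the reduction to $\overline F$ this gives the right inequality in \eqref{estvE} on all of $\TTT_\ell$.

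I do not expect a genuine obstacle here: the estimate is deliberately coarse. The two steps that require a little care are the torus geometry — checking that for $\ell\gg 1$ the ball $B_r(x_0)$ does not wrap around, so that the exact value $r^2/2$ and the volume comparison $\le|F|/(4\pi r)$ are legitimate — and the balancing of $r^2$ against $\lambda\ell/r$, which forces $r\sim(\lambda\ell)^{1/3}$ and hence the exponent $2/3$. It is worth stressing that this bound is far from sharp for minimizers, for which the corresponding potential is in fact $O(1)$ (cf.\ Theorem \ref{thm-droplets}); Lemma \ref{lem-estv} only provides the a priori bound from which the sharper estimates of this section will be bootstrapped.
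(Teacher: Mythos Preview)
Your upper bound is correct and essentially identical to the paper's: both split the Newtonian kernel over $B_r(x_0)$ and its complement with $r=(\lambda\ell)^{1/3}$ and absorb the Lipschitz remainder $R_\ell$.

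The lower bound, however, has a genuine gap. Your argument yields only $v_F \ge -\|R_\ell\|_{L^\infty(\TTT_\ell)}\,|F| \ge -C\lambda$, which depends on $\lambda$; the lemma claims a \emph{universal} constant. Your explicit disclaimer ``without using minimality of $F$'' is precisely the problem: without minimality the best possible lower bound is indeed $-C\lambda$ (see the Remark immediately following the lemma, which makes exactly this point). The paper's proof of the lower bound does use minimality, via Corollary~\ref{cor-conv}: it writes $v_\eps(x) \ge \int_\TTT G_\rho(x-y)\,d\mu_\eps(y)$ for the smooth far-field kernel $G_\rho$, then invokes the weak convergence $\mu_\eps \rightharpoonup \lambda\,dx$ of minimizing measures to conclude that this integral converges uniformly to $\lambda\int_\TTT G_\rho(y)\,dy \ge -C\lambda\rho^2$. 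Choosing $\rho \sim \lambda^{-1/2}$ then makes the bound universal. This step is not available for arbitrary $F$ with $|F|=\lambda\ell$, and your proposal does not provide a substitute.
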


\begin{proof}
  First of all, observe that $v_F(x) = v_\eps(\eps^{1/3} x)$ for
  $\eps = \ell^{-3}$, where $v_\eps$ is defined in \eqref{veps}, in
  which $\mu_\eps$ is given by \eqref{mueps} with
  $u_\eps(x) = \chi_F(\eps^{-1/3} x)$. Furthermore, by a rescaling we
  have that $u_\eps$ is a minimizer of $E_\eps$ over
  $\mathcal A_\eps$. Therefore, to establish a lower bound for $v_F$,
  it is sufficient to do so for $v_\eps$.

  Let $G_\rho$ and $H_\rho$ be as in \eqref{GHrho} (with the choice of
  $\eta$ fixed once and for all), and note that there exists a
  universal $\rho_0 > 0$ such that $H_\rho \geq 0$ for all
  $\rho \in (0, \rho_0)$ and, hence,
  \begin{align}
    \label{vepsGrho}
    v_\eps(x) \geq \int_\TTT G_\rho(x - y) \, d \mu_\eps(y).
  \end{align}
  At the same time, by Corollary \ref{cor-conv} and the boundedness of
  $|\nabla G_\rho|$ we have
  \begin{align}
    \label{Grhodx}
    \int_\TTT G_\rho(x - y) \, d \mu_\eps(y) \to \lambda
    \int_\TTT G_\rho(y) \, dy \qquad \text{uniformly in } x \in \TTT, 
  \end{align}
  as $\eps \to 0$. Notice that from the definition of $G$ we have
  $0 = \int_\TTT G(x) \, dx = \int_\TTT G_\rho(x) \, dx + \int_\TTT
  H_\rho(x) \, dx $. Therefore, by \eqref{decompG} we get
  \begin{align}
    \label{G0r2}
    -C \rho^2  \leq \int_\TTT G_\rho(x) \, dx \leq 0,
  \end{align}
  for some universal $C > 0$ and all $\rho \in (0, \rho_0)$.  Choosing
  $\rho = \min \{ \rho_0, \lambda^{-1/2} \}$, we then obtain
  $v_\eps \geq -2 C$ for all $\eps > 0$ sufficiently small.

  On the other hand, by \eqref{eqR} there exists a universal constant
  $C > 0$ such that
  \begin{align}
    v_F(x) 
    & \leq C \int_F {dy \over |x - y|} \leq C \left(
      \int_{B_R(x)} {dy \over |x - y|} + { |F \backslash B_R(x)|
      \over R} \right) \notag \\
    & \leq C (2 \pi R^2 + R^{-1} |F|),
  \end{align}
  for any $\ell \geq 1$ and $R > 0$.  The claim then follows by
  choosing $R = |F|^{1/3} = (\lambda \ell)^{1/3}$.
\end{proof}

\begin{Remark} \rm Let $\lambda_0 > 0$ and let
  $\lambda \in (0, \lambda_0)$. Since
  $v_F \geq \lambda \min_{x \in \TTT} G(x)$, it is also possible to
  obtain a lower bound on $v_F$ which depends only on $\lambda_0$, and
  not on the family of the minimizers, provided that
  $\ell \geq \ell_0$ for some $\ell_0 > 0$ depending only on
  $\lambda$. In this case all the estimates of this section still
  hold, but with constants that depend on $\lambda_0$.
\end{Remark}

We next obtain a pointwise estimate of the gradient of $v_F$ in terms
of $v_F$ itself.

\begin{lemma}\label{stimagrad}
  There exists a universal constant $C>0$ such that for every
  $\ell \gg 1$ we have
  \begin{align}\label{CCC}
    |\nabla v_F(x)|\le \frac32 \left( v_F(x)+ C \right) \,,
  \end{align}
  for any $x\in\TTT_\ell$.
\end{lemma}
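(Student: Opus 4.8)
The plan is to estimate $\nabla v_F(x)$ pointwise from the Newtonian representation of the potential, cutting the Coulomb integral at a radius that depends on $v_F(x)$ itself so that the short-range and long-range contributions balance.

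First I would record the basic bounds. By \eqref{decompGG} we may write $v_F = \Gamma^\#_\ell * \chi_F + R_\ell * \chi_F$, with $\Gamma^\#_\ell(z) = (4\pi|z|)^{-1}$; since $\nabla\Gamma^\#_\ell \in L^1(\TTT_\ell)$ and $|\nabla R_\ell| \le C\ell^{-2}$ by \eqref{eqR}, differentiation under the integral gives
\[
  \nabla v_F(x) = \int_F \nabla\Gamma^\#_\ell(x-y)\,dy + \int_F \nabla R_\ell(x-y)\,dy,
\]
where the second term is bounded by $C\ell^{-2}|F| = C\lambda\ell^{-1} \le 1$ for $\ell \gg 1$ (recall $|F| = \lambda\ell$), and $|\nabla\Gamma^\#_\ell(z)| \le (4\pi|z|^2)^{-1}$. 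Moreover, from $|R_\ell| \le C\ell^{-1}$ we get $\tfrac1{4\pi}\int_F |x-y|^{-1}\,dy = v_F(x) - \int_F R_\ell(x-y)\,dy \le v_F(x) + C\lambda$, and in particular $v_F(x) \ge -C\lambda$; here and below constants may depend on the fixed $\lambda$ but not on $\ell$ or on $F$.

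The key step is the choice of cut-off radius. Let $C_0 > 0$ be the constant of Lemma \ref{lem-estv} with $v_F \ge -C_0$, put $C_1 := C_0 + C\lambda + 1$ and $r := v_F(x) + C_1$; then $r \ge 1$, while $v_F \le C(\lambda\ell)^{2/3}$ (Lemma \ref{lem-estv}) forces $r < \tfrac12\ell$ for $\ell \gg 1$, so $B_r(x)$ is a genuine Euclidean ball in $\TTT_\ell$. Splitting $F = (F\cap B_r(x)) \cup (F\setminus B_r(x))$, on the near part $\tfrac1{4\pi}\int_{F\cap B_r(x)} |x-y|^{-2}\,dy \le \tfrac1{4\pi}\int_{B_r} |z|^{-2}\,dz = r$, while on the far part $|x-y| \ge r$ gives $|x-y|^{-2} \le r^{-1}|x-y|^{-1}$, hence $\tfrac1{4\pi}\int_{F\setminus B_r(x)} |x-y|^{-2}\,dy \le r^{-1}\big(v_F(x) + C\lambda\big) \le 1$ by the choice of $C_1$ (note $0 \le v_F(x) + C\lambda \le r$). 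Adding the $R_\ell$-error from the previous paragraph, we arrive at $|\nabla v_F(x)| \le r + 1 + 1 = v_F(x) + C_1 + 2$. Using $v_F(x) \ge -C_0$ once more, the choice $C := C_1 + C_0 + 2$ yields $\tfrac32(v_F(x) + C) - (v_F(x) + C_1 + 2) = \tfrac12 v_F(x) + \tfrac32 C - C_1 - 2 \ge -\tfrac12 C_0 + \tfrac32 C - C_1 - 2 \ge 0$, which is \eqref{CCC}.

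There is no single hard estimate in this argument; the one idea that makes it work is the self-referential radius $r \sim v_F(x)$, which simultaneously makes the near-field part (of size $\sim r$) and the far-field part (of size $\sim r^{-1}(v_F(x)+C)$) of order $v_F(x)+C$ — a fixed cut-off would leave a term $r^{-1}v_F(x)$ that is not controlled by $v_F(x)$ when $v_F(x)$ is large. In fact the computation gives the stronger bound $|\nabla v_F(x)| \le v_F(x) + C$; the factor $\tfrac32$ in \eqref{CCC} simply absorbs the lower-order errors that are only controlled for $\ell \gg 1$.
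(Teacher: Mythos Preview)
Your argument is correct and is a genuinely different route from the paper's. The paper does not use a self-referential cut-off: it splits $G = G_\rho + H_\rho$ at a \emph{fixed} scale $\rho$, bounds $|\nabla H_\rho(y)| \le (1+c\rho)|y|^{-1} H_\rho(y) + C|y|^{-1}\rho^{-1}\chi_{B_\rho\setminus B_{\rho/2}}(y)$ so that the near-field gradient is controlled by the near-field potential itself, and then handles the smooth far-field piece $\int |\nabla G_\rho|\,d\mu_\eps$ and the annular error by invoking the equidistribution of minimizers (Corollary~\ref{cor-conv}), which forces these terms to be $o(1)$ as $\eps\to 0$. Your idea of choosing $r = v_F(x)+C_1$ replaces this use of minimality by an elementary balance: the near-field $\frac{1}{4\pi}\int_{B_r}|z|^{-2}\,dz = r$ and the far-field $r^{-1}\cdot\frac{1}{4\pi}\int_F|x-y|^{-1}\,dy \le r^{-1}(v_F(x)+C\lambda)$ are both $\le v_F(x)+C$. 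This works for \emph{any} $F\subset\TTT_\ell$ with $|F|=\lambda\ell$, not just minimizers, and even yields the sharper inequality $|\nabla v_F|\le v_F+C$.

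The price you pay is the constant: your $C_1 = C_0 + C\lambda + 1$ depends on $\lambda$, whereas the lemma asserts a \emph{universal} $C$. The paper earns universality precisely through the minimizer-specific step, ultimately taking $\rho\le\lambda^{-1/2}$ so that the residual $C(1+\lambda\rho^2)$ is universal. For the downstream use of the lemma (the proof of Theorem~\ref{lem-pot} first obtains a $\lambda$-dependent bound and then bootstraps) your version would in fact suffice, but as a proof of the lemma \emph{as stated} you should flag that your constant carries a $\lambda$-dependence that the paper's argument avoids.
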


\begin{proof}
  Without loss of generality we may assume that $x = 0$. Arguing as in
  the proof of Lemma \ref{lem-estv} and with the same notation, we can
  write
  \begin{align}
    \label{gradvf}
    |\nabla v_F(0)| \leq \int_F |\nabla \widetilde G_\ell(y)| \, dy = \ell \int_\TTT
    | \nabla G(y)| \, \chi_F(y \ell) \, dy = \eps^{1/3} \int_\TTT  |
    \nabla G(y)| \, d \mu_\eps(y) \notag \\
    \leq \eps^{1/3} \int_\TTT  |
    \nabla G_\rho(y)| \, d \mu_\eps(y) + \eps^{1/3} \int_\TTT  |
    \nabla H_\rho(y)| \, d \mu_\eps(y),
  \end{align}
  where we recalled that $\eps = \ell^{-3}$.  Using \eqref{decompG},
  we have 
  \begin{align}
    |\nabla H_\rho(y)| \leq (1 + c \rho) |y|^{-1} H_\rho(y) + C
    |y|^{-1} \rho^{-1} \chi_{B_\rho \backslash B_{\rho/2}}(y),
  \end{align}
  for some universal $c, C > 0$ and all $\rho \in (0, \rho_0)$.
  Substituting this into \eqref{gradvf} and recalling \eqref{mueps}
  and \eqref{Hrhomax}, we obtain
  \begin{align}
    \label{gradvFc}
    {|\nabla v_F(0)| \over 1 + c \rho} 
    & \leq \int_{\TTT \backslash
      B_{\eps^{1/3}}(0)} H_\rho(y) \, d \mu_\eps(y) + \eps^{-1/3}
      \int_{B_{\eps^{1/3}}(0)}
      |y|^{-1} H_\rho(y) \, u_\eps(y) \, dy \notag \\
    & + C \eps^{1/3} \rho^{-1} \int_{B_\rho(0) \backslash B_{\rho/2}(0)}
      |y|^{-1} \, d \mu_\eps(y) + \eps^{1/3} \int_\TTT |
      \nabla G_\rho(y)| \, d \mu_\eps(y) \\
    & \leq \int_\TTT H_\rho(y) \, d \mu_\eps(y) + C' (1 + \eps^{1/3}
      \rho^{-2} \lambda ) + \eps^{1/3} \int_\TTT | \nabla G_\rho(y)| \,
      d \mu_\eps(y), \notag
  \end{align}
  for some universal $C, C' > 0$.  Since by Corollary \ref{cor-conv}
  and the smoothness of $G_\rho$ we have
  $\int_\TTT | \nabla G_\rho(x - y)| \, d \mu_\eps(y) \to \lambda
  \int_\TTT | \nabla G_\rho(y)| \, d y$
  uniformly in $x \in \TTT$ as $\eps \to 0$, it is possible to choose
  $\eps_0 > 0$ sufficiently small independently of $x$ such that the
  last two terms in the right-hand side of \eqref{gradvFc} are bounded
  by a universal constant for all $\eps < \eps_0$. Thus, for every
  $\rho \leq 1/(2 c)$ and $\eps < \eps_0$, with $\eps_0$ depending on
  $\rho$, we have
  \begin{align}
    \frac23 |\nabla v_F(0)| \leq v_F(0) + C - \int_\TTT G_\rho(y) \, d
    \mu_\eps(y),
  \end{align}
  where we also took into account that
  $v_F(0) = \int_\TTT G_\rho(y) \, d \mu_\eps(y) + \int_\TTT H_\rho(y)
  \, d \mu_\eps(y)$.
  Finally, using \eqref{Grhodx} and \eqref{G0r2}, we obtain
  \begin{align}
    |\nabla v_F(0)| \leq  \frac32 v_F(0) + C(1 +  \lambda \rho^2),
  \end{align}
  for some universal $C > 0$ and all $\eps < \eps_0$, possibly
  decreasing the value of $\eps_0$. The proof is concluded by choosing
  $\rho \leq \lambda^{-1/2}$.
\end{proof}

\begin{corollary}\label{corgrad}
  Let $\ell \gg 1$ and let $\bar x \in \overline F$ be a global
  maximum of $v_F$.  Then
  \begin{align}
    v_F(y)\ge \frac 34\, v_F(\bar x)-\frac14 C
    \qquad\text{for all }y\in 
    B_{1/6}(\bar x)\,,    
  \end{align}
  where $C$ is as in \eqref{CCC}. Furthermore, if
  $\int_{B_r(x_0)} v_F(x) \, dx \leq C' |B_r| $ for some
  $x_0 \in \TTT_\ell$, $r \leq \frac16$ and $C' > 0$, then
  \begin{align}
    v_F(y) \le C + 2 C'
    \qquad\text{for all }y\in  B_r(x_0)\,,
  \end{align}
\end{corollary}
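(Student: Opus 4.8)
The plan is to obtain both assertions by integrating the pointwise gradient estimate of Lemma~\ref{stimagrad} along short geodesic segments of $\TTT_\ell$. Set $w := v_F + C$, where $C$ is the constant from \eqref{CCC}; since \eqref{CCC} only becomes weaker when $C$ is increased, and since Lemma~\ref{lem-estv} provides a universal lower bound $v_F \geq -C_1$ with $C_1 > 0$, we may assume from the outset that $C \geq 2 C_1$, so that $w \geq C_1 > 0$ on all of $\TTT_\ell$ and $\log w$ is well defined and smooth. With this notation \eqref{CCC} reads $|\nabla \log w| \leq \tfrac32$ on $\TTT_\ell$. Observe that for $\ell \gg 1$ every ball of radius at most $1$ in $\TTT_\ell$ is isometric to a Euclidean ball and is geodesically convex, so the geodesic joining any two points at distance $\leq 1$ is an ordinary straight segment of that length, and the one-dimensional fundamental theorem of calculus applies along it exactly as in $\R^3$.

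For the first estimate, fix $y \in B_{1/6}(\bar x)$ and let $\gamma$ be the segment from $\bar x$ to $y$, of length $|y - \bar x| \leq \tfrac16$. Integrating $|\nabla \log w| \leq \tfrac32$ along $\gamma$ gives $\log w(\bar x) - \log w(y) \leq \tfrac32 \cdot \tfrac16 = \tfrac14$, hence $w(y) \geq e^{-1/4} w(\bar x) \geq \tfrac34 w(\bar x)$, where the last step uses $e^{-t} \geq 1 - t$. Recalling $w = v_F + C$ and rearranging yields $v_F(y) \geq \tfrac34 v_F(\bar x) - \tfrac14 C$, which is the claim. (Note that the maximality of $\bar x$ plays no role in this part; the same bound holds for an arbitrary center.)

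For the second estimate, argue by contradiction: suppose $v_F(y_0) > C + 2 C'$, that is $w(y_0) > 2(C + C')$, for some $y_0 \in B_r(x_0)$. Given any $z \in B_r(x_0)$, the segment from $y_0$ to $z$ lies in the convex ball $B_r(x_0)$ and has length at most $2r \leq \tfrac13$, so integrating $|\nabla \log w| \leq \tfrac32$ along it gives $w(z) \geq e^{-3r} w(y_0) \geq e^{-1/2} w(y_0) \geq \tfrac12 w(y_0) > C + C'$, where we used $r \leq \tfrac16$ and again $e^{-t} \geq 1 - t$. Integrating over $B_r(x_0)$ then yields $\int_{B_r(x_0)} v_F \, dx = \int_{B_r(x_0)} (w - C) \, dx > C' |B_r|$, contradicting the hypothesis. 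Hence $w \leq 2(C + C')$ throughout $B_r(x_0)$, that is $v_F \leq C + 2 C'$ there.

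There is no real obstacle here: once Lemma~\ref{stimagrad} is in hand the entire argument is a one-dimensional ODE comparison, and the numerical thresholds ($1/6$, and $r \leq 1/6$) are chosen precisely so that the resulting exponential factors $e^{-1/4}$ and $e^{-1/2}$ dominate $\tfrac34$ and $\tfrac12$ respectively. The only two points that deserve any attention are the strict positivity of $w$, which is secured by the lower bound in Lemma~\ref{lem-estv} after possibly enlarging $C$, and the remark that for $\ell \gg 1$ the small balls involved are genuinely convex and flat, so that integrating $\nabla \log w$ along the connecting segments is legitimate.
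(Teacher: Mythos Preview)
Your argument is correct, but it differs in method from the paper's.  The paper applies the one-point mean value theorem directly to $v_F$: for the first assertion it writes $v_F(\bar x)-v_F(y)=\nabla v_F(\xi)\cdot(\bar x-y)$, bounds $|\nabla v_F(\xi)|$ by \eqref{CCC}, and then uses the \emph{global maximality} of $\bar x$ to estimate $v_F(\xi)\le v_F(\bar x)$; for the second assertion it takes $\bar y$ to be the max of $v_F$ on $\overline{B_r(x_0)}$ and a point $x_1$ realising the average, and again relies on local maximality and convexity of the ball to control the intermediate value.  Your route instead rewrites \eqref{CCC} as a Lipschitz bound on $\log w$ and integrates, obtaining multiplicative (exponential) control, which you then convert into the linear bounds via $e^{-t}\ge 1-t$.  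This buys you a small bonus: as you noted, your proof of the first inequality does not actually use that $\bar x$ is a maximum, whereas the paper's does; and your second argument avoids introducing auxiliary max and mean-value points.  The cost is the need to ensure $w=v_F+C>0$, which forces you to enlarge $C$ using the lower bound of Lemma~\ref{lem-estv}; this is harmless since all constants are universal and the statement only fixes $C$ through \eqref{CCC}, which remains valid after enlargement, but it is worth stating explicitly that the $C$ in the conclusion is then this (still universal) enlarged constant.
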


\begin{proof}
  Since $v_F \in C^1(\TTT_\ell)$, for any $y\in B_{1/6}(\bar x)$ there
  exists $\theta \in (0,1)$ such that with the help of \eqref{CCC} we
  have
  \begin{align}
    v_F(\bar x)-v_F(y)
    &=\nabla v_F(\theta \bar x + (1-\theta)y)\cdot (\bar x-y)
      \notag \\
    &\le \frac16 \, |\nabla v_F(\theta \bar x + (1-\theta)y)|
      \notag \\
    &\le \frac14 \, v_F(\theta \bar x + (1-\theta)y)+\frac14 C  
      \notag \\
    &\le \frac14 \, v_F(\bar x) +\frac14 C \,.
  \end{align}

  Similarly, letting $\bar y$ be a global maximum of $v_F$ in
  $\overline B_r(x_0)$ and letting $x_1 \in \overline B_r(x_0)$ be
    such that $v_F(x_1) = |B_r|^{-1} \int_{B_r(x_0)} v_F(x) \, dx$, we
    may write
  \begin{align} 
    v_F(\bar y)
    &\le v_F(\bar y)-v_F(x_1) + C'
      \notag \\
    &\le |\nabla v_F(\theta x_1 + (1-\theta)\bar y)|\, |\bar y-x_1|+
      C' 
      \notag \\
    &\le \frac12\, v_F(\theta \bar x_1 + (1-\theta)\bar y) + \frac12
      C + C'
      \notag \\
    &\le \frac 12\, v_F(\bar y) + \frac12
      C + C' \,,
\end{align}
which completes the proof.
\end{proof}

The next lemma provides a basic estimate for the variation of the
Coulombic energy under uniformly bounded perturbations.

\begin{lemma}\label{afm}
  There exists a universal constant $C>0$ such that for any
  $\ell \geq 1$ and for any $F'\subset \TTT_\ell$, with
  $F\Delta F'\subset B_r(x_0)$ for some $x_0 \in \TTT_\ell$ and
  $r > 0$, there holds
  \begin{align}
    \left| \int_{F}v_F \, dx -\int_{F'}v_{F'} \, dx 
    \right|\le
    \left( 2 \|v_F\|_{L^\infty(\TTT_\ell)} + C r^2 \right)\,
    |F\Delta F'|\,.     
  \end{align}
\end{lemma}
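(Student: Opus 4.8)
The plan is to expand the Coulombic interaction as a symmetric bilinear form and isolate the contribution of the symmetric difference $F\Delta F'$. For $E\subset\TTT_\ell$ write $D(E):=\int_{\TTT_\ell}\int_{\TTT_\ell}\widetilde G_\ell(x-y)\chi_E(x)\chi_E(y)\,dx\,dy$, so that $\int_F v_F\,dx-\int_{F'}v_{F'}\,dx=D(F)-D(F')$, and set $\psi:=\chi_{F'}-\chi_F$, whence $|\psi|=\chi_{F\Delta F'}$ a.e.\ and $\|\psi\|_{L^1(\TTT_\ell)}=|F\Delta F'|$. Since $\widetilde G_\ell$ is even, expanding $\chi_{F'}=\chi_F+\psi$ gives
\begin{align}
  D(F')-D(F)=2\int_{\TTT_\ell}v_F\,\psi\,dx+\int_{\TTT_\ell}\int_{\TTT_\ell}\widetilde G_\ell(x-y)\,\psi(x)\,\psi(y)\,dx\,dy .
\end{align}
The first term on the right is bounded in modulus by $2\|v_F\|_{L^\infty(\TTT_\ell)}\|\psi\|_{L^1(\TTT_\ell)}=2\|v_F\|_{L^\infty(\TTT_\ell)}|F\Delta F'|$, which is precisely the first term in the assertion; so the lemma reduces to bounding the second (``self-interaction of $F\Delta F'$'') term in modulus by $Cr^2|F\Delta F'|$ with $C$ universal. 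Using $|\psi|\le\chi_{F\Delta F'}$ together with Fubini, this in turn follows from the pointwise estimate $\sup_{x\in\TTT_\ell}\int_{F\Delta F'}|\widetilde G_\ell(x-y)|\,dy\le Cr^2$.

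To prove this pointwise estimate I would invoke the decomposition \eqref{decompGG}--\eqref{eqR}, which yields $|\widetilde G_\ell(x-y)|\le\Gam^\#_\ell(x-y)+C\ell^{-1}$, and split $\TTT_\ell$ into the torus ball $B_{\ell/2}(x)$ (which, $\ell/2$ being the injectivity radius of $\TTT_\ell$, is isometric to a Euclidean ball, with $\Gam^\#_\ell(x-y)=\tfrac1{4\pi|x-y|}$ there) and its complement (where $\Gam^\#_\ell(x-y)\le\tfrac1{2\pi\ell}$). On the near part, the bathtub inequality applied to $(F\Delta F')\cap B_{\ell/2}(x)$ — whose equimeasurable ball has radius at most $\ell/2$ — gives $\int_{(F\Delta F')\cap B_{\ell/2}(x)}\tfrac{dy}{4\pi|x-y|}\le\tfrac12\big(\tfrac{3}{4\pi}|F\Delta F'|\big)^{2/3}$, while the far part together with the $R_\ell$ contribution is at most $C\ell^{-1}|F\Delta F'|$. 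Hence $\int_{F\Delta F'}|\widetilde G_\ell(x-y)|\,dy\le C\big(|F\Delta F'|^{2/3}+\ell^{-1}|F\Delta F'|\big)$ uniformly in $x$. Finally one converts this to $Cr^2$ using $|F\Delta F'|\le|B_r(x_0)|$: one may assume $r\le\mathrm{diam}\,\TTT_\ell=\tfrac{\sqrt3}{2}\ell$ (otherwise $B_r(x_0)=\TTT_\ell$ and $r$ may be replaced by $\mathrm{diam}\,\TTT_\ell$); if $r\le\ell/2$ then $|F\Delta F'|\le\tfrac{4\pi}{3}r^3$ and $\ell\ge2r$, while if $\ell/2<r\le\tfrac{\sqrt3}{2}\ell$ then $|F\Delta F'|\le\ell^3$ and $\ell<2r$, so in both cases $|F\Delta F'|^{2/3}+\ell^{-1}|F\Delta F'|\le Cr^2$.

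The one point that needs genuine care — though it is a mild one — is uniformity in $\ell$: one must check that the $R_\ell$ term and the contribution of the ``far'' part of the torus kernel are \emph{both} of order $\ell^{-1}|F\Delta F'|$, and carry the dichotomy $r\le\ell/2$ versus $r>\ell/2$ through the last step. Everything else is routine: the algebraic identity in the first display, and the bathtub bound $\int_E\tfrac{dy}{4\pi|x-y|}\le C|E|^{2/3}$, which transfers verbatim to $\TTT_\ell$ because, once $|E\cap B_{\ell/2}(x)|\le|B_{\ell/2}|$, the equimeasurable ball fits within the injectivity radius.
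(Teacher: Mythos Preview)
Your proof is correct and follows essentially the same line as the paper's: both expand $D(F)-D(F')$ via the symmetric bilinear form to isolate the cross term $2\int v_F\,\psi$ (yielding $2\|v_F\|_{L^\infty}|F\Delta F'|$) and the self-interaction of $\psi=\chi_{F'}-\chi_F$, then control the latter using the decomposition \eqref{decompGG}--\eqref{eqR}. The only difference is cosmetic: for the self-interaction term the paper observes directly that $x,y\in F\Delta F'\subset B_r(x_0)$ forces $x\in B_{2r}(y)$, so $\int_{F\Delta F'}|\widetilde G_\ell(x-y)|\,dx\le\int_{B_{2r}(0)}|\widetilde G_\ell|\le Cr^2$, bypassing the bathtub inequality and the case split on $r$ versus $\ell/2$.
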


\begin{proof}
By direct computation, we have
\begin{align}
  \hspace{6ex} 
  & \hspace{-6ex} %
    \left| \int_{F} 
    v_F\,dx -\int_{F'}v_{F'}\,dx \right| 
    = \left|
    \int_{\TTT_\ell} \int_{\TTT_\ell} \left( \chi_F(x) \widetilde G_\ell(x - y)
    \chi_F(y) - \chi_{F'}(x) \widetilde G_\ell(x - y)
    \chi_{F'}(y) \right) \, dx \, dy \right| \notag \\
  & = \left| \int_{\TTT_\ell} \int_{\TTT_\ell} (\chi_F(x) + \chi_{F'}(x)
    ) \widetilde G_\ell(x - y)  (\chi_F(y) - \chi_{F'}(y) ) \, dx
    \, dy \right| \notag \\ 
  & \leq 2 \left| \int_{\TTT_\ell} \int_{\TTT_\ell} \chi_F(x) \widetilde G_\ell(x -
    y)  (\chi_F(y) - \chi_{F'}(y) ) \, dx \, dy \right| \notag \\ 
  & \qquad +  \left| \int_{\TTT_\ell}
    \int_{\TTT_\ell} (\chi_F(x) - \chi_{F'}(x) 
    ) \widetilde G_\ell(x - y)  (\chi_F(y) - \chi_{F'}(y) ) \, dx \, dy \right|
    \notag \\ 
  & \leq 2 \left| \int_{\TTT_\ell} v_F(y)  (\chi_F(y) - \chi_{F'}(y) )\,
    dy \right| + 2 \left| \int_{\TTT_\ell}  \int_{B_r(y)} \widetilde G_\ell(x - y)
    (\chi_F(y) - \chi_{F'}(y) )\, dx \, dy \right| \notag \\
  & \leq \left( 2 \|v_F\|_{L^\infty(\TTT_\ell)} + C r^2 \right) | F \Delta F'|,
\end{align}
for some universal $C > 0$, where we used \eqref{decompGG} and
\eqref{eqR} in the last line.
\end{proof}

Lemma \ref{afm} implies that minimizers of $\widetilde E_\ell$ are
volume constrained $(\Lambda,r_0)$-minimizers of the perimeter for
$r_0 = 1$ and $\Lambda = \| v_F \|_{L^\infty(\TTT_\ell)} + C$, with
$C > 0$ universal. In particular, by Lemma \ref{lem-estv} we get
$\Lambda \leq C(\lambda\ell)^{2/3}$, provided that $\ell \gg 1$.
Therefore, from Proposition \ref{thgn} we obtain the following result.

\begin{proposition}\label{thrigot}
  There exist universal constants $c > 0$ and $\delta>0$ such that for
  all $\ell \gg 1$ and all $x_0 \in \overline F$ there holds
  \begin{align} 
    \label{densfirst} |F_0 \cap B_r(x_0)|\geq c r^3 \qquad
    \text{for all } r\leq 
    \frac{\delta}{(\lambda\ell)^{2/3}} \,,
  \end{align}
  where $F_0$ is the connected component of $F$ such that
  $x_0 \in \overline F_0$.
\end{proposition}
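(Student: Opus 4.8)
The plan is simply to assemble the three ingredients collected immediately before the statement: Lemma~\ref{afm}, which shows that the Coulombic part of $\widetilde E_\ell$ is a lower-order perturbation of the perimeter under competitors localized in a ball of radius $r\le 1$; Lemma~\ref{lem-estv}, which bounds the potential $v_F$; and Proposition~\ref{thgn}, which converts volume-constrained $(\Lambda,r_0)$-minimality of the perimeter into the sought density estimate. No genuinely new computation is needed, and indeed the discussion preceding the statement already carries out most of the argument.

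First I would check that the minimizer $F$ of $\widetilde E_\ell$ is a volume-constrained $(\Lambda,1)$-minimizer of the perimeter in the sense of \eqref{eqlambda}, with $\Lambda=\|v_F\|_{L^\infty(\TTT_\ell)}+C$ for a universal $C>0$. Indeed, for any $F'\subset\TTT_\ell$ with $|F'|=|F|$ and $F\Delta F'\subset B_1(x_0)$, minimality of $F$ gives $P(F)+\tfrac12\int_F v_F\,dx\le P(F')+\tfrac12\int_{F'}v_{F'}\,dx$; moving the Coulombic terms to one side and estimating their difference by Lemma~\ref{afm} with $r=1$ yields $P(F)\le P(F')+\bigl(\|v_F\|_{L^\infty(\TTT_\ell)}+\tfrac12 C\bigr)|F\Delta F'|$. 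Then I would feed in the size bound: by Lemma~\ref{lem-estv} we have $\|v_F\|_{L^\infty(\TTT_\ell)}\le C(\lambda\ell)^{2/3}$ for $\ell\gg 1$, hence $\Lambda\le C(\lambda\ell)^{2/3}$ with $C$ universal. Since $|F|=\lambda\ell\in(1,\ell^3-1)$ once $\ell$ is large enough depending on $\lambda$, Proposition~\ref{thgn} applies with $r_0=1$ and provides universal constants $c>0$ and $\delta_0>0$ such that $|F_0\cap B_r(x_0)|\ge cr^3$ for all $r\le\min\{1,\delta_0/\Lambda\}$. Since $\delta_0/\Lambda\ge\delta_0 C^{-1}(\lambda\ell)^{-2/3}$ and the right-hand side is $\le 1$ for $\ell\gg 1$, the admissible range reduces to $r\le\delta(\lambda\ell)^{-2/3}$ with $\delta:=\delta_0/C$, which is exactly \eqref{densfirst}.

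The only point that requires attention — and it is an appeal to the cited regularity theory rather than a real obstacle — is that the density constant $c$ in Proposition~\ref{thgn} must be genuinely independent of $\Lambda$, $r_0$ and $\ell$; this is precisely the content of that proposition, which rests on the standard density estimates for $(\Lambda,r_0)$-minimizers of the perimeter (see \cite[Chapters III--IV]{maggi}, \cite[Section~4]{rigot00}), valid on the scale $r\le\min\{r_0,\delta/\Lambda\}$ on which the perimeter term dominates the volume penalty. Consequently the constants $c$ and $\delta$ produced above depend on nothing but the fixed choice of $\eta$ already made, as claimed, and the bound holds for all $\ell\gg 1$ with $\ell_0$ allowed to depend on $\lambda$ and on the chosen family of minimizers.
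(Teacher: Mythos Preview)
Your proposal is correct and follows exactly the paper's own argument, which is carried out in the paragraph immediately preceding the proposition: use Lemma~\ref{afm} to see that $F$ is a volume-constrained $(\Lambda,1)$-minimizer with $\Lambda=\|v_F\|_{L^\infty(\TTT_\ell)}+C$, invoke Lemma~\ref{lem-estv} to get $\Lambda\le C(\lambda\ell)^{2/3}$, and then apply Proposition~\ref{thgn}. The only slip is the remark that the constants depend on ``the fixed choice of $\eta$'': the cut-off $\eta$ plays no role in this section, so that aside should be deleted.
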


We now show that the potential $v_F$ is bounded in
$L^\infty(\TTT_\ell)$ by a universal constant as $\ell \to \infty$.

\begin{theorem}[$L^\infty$-estimate on the potential]
  \label{lem-pot} %
  There exists a universal constant $C>0$ and a constant $\ell_0 > 0$
  such that for all $\ell \geq \ell_0$ we have
  \begin{align} \label{stimav} %
    \|v_F\|_{L^\infty(\TTT_\ell)} \leq C.
  \end{align}
\end{theorem}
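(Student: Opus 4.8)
The plan is to argue by contradiction, exploiting the already-established ingredients: the gradient estimate of Lemma \ref{stimagrad} (which propagates largeness of $v_F$ along balls of universal radius, cf.\ Corollary \ref{corgrad}), the uniform density bound of Proposition \ref{thrigot}, the equidistribution of energy from Theorem \ref{thm-uniform}, and the a priori bound $-C\le v_F$ from Lemma \ref{lem-estv}. Suppose \eqref{stimav} fails: then there is a sequence $\ell_n\to\infty$ and regular minimizers $F=F_{\ell_n}\subset\TTT_{\ell_n}$ with $M_n:=\|v_F\|_{L^\infty(\TTT_{\ell_n})}\to\infty$. Let $\bar x_n\in\overline F$ be a point where the maximum is attained (recall $v_F$ is subharmonic off $\overline F$, so the max is in $\overline F$). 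By Corollary \ref{corgrad}, $v_F\ge \tfrac34 M_n-\tfrac14 C$ on $B_{1/6}(\bar x_n)$; in particular, for $n$ large, $v_F\ge \tfrac12 M_n$ on a ball of fixed radius $r_*:=1/6$ around $\bar x_n$.

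The next step is to convert this pointwise lower bound on the potential into a lower bound on the mass of $F$ in $B_{r_*}(\bar x_n)$ that \emph{grows} with $M_n$, which will then force a contradiction with the linear-in-mass energy bound coming from equidistribution. Concretely, since $-\Delta v_F=\chi_F-\lambda\ell^{-2}$ and $\int_{B_{r_*}(\bar x_n)} v_F\,dx\ge \tfrac12 M_n |B_{r_*}|$, while $v_F$ is bounded below by $-C$ everywhere, one can test the equation with a fixed smooth cutoff supported in $B_{2r_*}(\bar x_n)$ (recall we may assume $\ell\gg 1$ so such a ball embeds in $\TTT_\ell$) to get
\begin{align}
  \int_{B_{2r_*}(\bar x_n)} \chi_F\,dx \ \ge\ c\int_{B_{2r_*}(\bar x_n)} v_F\,\psi\,dx - C' \ \ge\ c'\,M_n - C'',
\end{align}
for universal constants; here the lower bound $v_F\ge -C$ controls the region where $\psi$ transitions. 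Thus $|F\cap B_{2r_*}(\bar x_n)|\ge c' M_n - C''$. On the other hand, by Theorem \ref{thm-uniform} (rescaled to $\TTT_{\ell_n}$: the energy density measure $\eps^{-4/3}(\eps|\nabla u_\eps|+\tfrac12\eps^{2/3}u_\eps v_\eps)\,dx$ converges to $\lambda f^* dx$), the energy of the minimizer $F$ inside $B_{2r_*}(\bar x_n)$ is bounded by $C|B_{2r_*}|=:C_0$, uniformly in $n$. But the Coulombic self-energy contribution of a set of mass $\ge c'M_n$ concentrated inside a fixed ball $B_{2r_*}$ — which is a lower bound on the energy in that ball, up to the controlled crossed and remainder terms — grows like $(c'M_n)^2/(2r_*)$ by the standard estimate that a mass $m$ confined to a ball of radius $\rho$ has Coulombic energy $\gtrsim m^2/\rho$. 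Choosing $n$ large enough that $c'M_n-C''$ is large makes this exceed $C_0$, a contradiction. Hence no such sequence exists and \eqref{stimav} holds.

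The main obstacle I expect is making the second step clean and quantitative in the periodic setting: one must (i) legitimately rescale Theorem \ref{thm-uniform} to the $\TTT_{\ell_n}$ picture to extract a \emph{uniform} bound $C_0$ on the energy in a fixed-size ball — this uses the equivalence of the $\eps\to 0$ and $\ell\to\infty$ limits established in Section \ref{sec-scale}, and the fact that weak convergence of $\nu_\eps$ to $\lambda f^* dx$ gives, via a Portmanteau-type argument on slightly larger balls, the desired upper bound; and (ii) correctly bound the energy in $B_{2r_*}(\bar x_n)$ \emph{from below} by the confined-mass Coulombic energy, handling the long-range tail of $\widetilde G_\ell$ (use the decomposition \eqref{decompGG}, $\widetilde G_\ell=\Gam^\#_\ell+R_\ell$ with $R_\ell$ of size $O(1/\ell)$, so the correction is negligible) and the sign of the cross terms. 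An alternative to (ii), avoiding energy comparisons, is to directly contradict the density bound: since $F$ has $|F|=\lambda\ell\to\infty$ but, by equidistribution of \emph{mass} (Corollary \ref{cor-conv} rescaled), $|F\cap B_{2r_*}(\bar x_n)|$ should be of order $\lambda|B_{2r_*}|$ (bounded), the conclusion $|F\cap B_{2r_*}(\bar x_n)|\ge c'M_n-C''\to\infty$ is already absurd. This last route is shorter and I would present it as the primary argument, keeping the energy-based contradiction as a remark.
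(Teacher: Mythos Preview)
Your argument has a genuine gap at the crucial ``testing'' step. Integrating $-\Delta v_F=\chi_F-\lambda\ell^{-2}$ against a cutoff $\psi$ supported in $B_{2r_*}(\bar x_n)$ gives
\[
\int_{\TTT_\ell}\chi_F\,\psi\,dx \;=\; \int_{\TTT_\ell} v_F\,(-\Delta\psi)\,dx \;+\; O(\ell^{-2}),
\]
and since $-\Delta\psi$ has zero integral and takes both signs, this does \emph{not} yield the inequality $\int\chi_F\psi\ge c\int v_F\psi$ you claim. In fact your conclusion $|F\cap B_{2r_*}(\bar x_n)|\ge c'M_n\to\infty$ is impossible for the trivial reason that the left side is bounded by $|B_{2r_*}|$; this shows that it is the \emph{derivation}, not the subsequent contradiction, that must be wrong. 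Heuristically, a large value of $v_F(\bar x)$ need not come from nearby mass at all: the near-field Newtonian contribution satisfies $\int_{F\cap B_1(\bar x)}\frac{dy}{4\pi|\bar x-y|}\le \int_{B_1(0)}\frac{dy}{4\pi|y|}=\tfrac12$ regardless of $F$, so the largeness of $v_F$ must come from intermediate or far scales, and no local testing recovers it.

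Your fallback on equidistribution (Corollary~\ref{cor-conv} and Theorem~\ref{thm-uniform}) is at the wrong scale. Those are weak-limit statements on the unit torus $\TTT$; rescaled to $\TTT_\ell$, they control $|F\cap B_{\rho\ell}(\cdot)|$ and the energy in balls of radius comparable to $\ell$, but say nothing about fixed-radius balls such as $B_{2r_*}(\bar x_n)$, which shrink to points under the rescaling. So neither your energy upper bound $C_0$ nor your mass upper bound in a fixed ball follows from those results.

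The paper's proof is quite different and does not attempt to localize mass near the maximum. Instead it first locates, via a covering argument together with the density bound of Proposition~\ref{thrigot} and the zero-average of $v_F$, a fixed-radius ball $B_i$ disjoint from $F$ on which $|v_F|\le C$ (Corollary~\ref{corgrad}). Then it runs a dichotomy at $\bar x$: either the area $\mathcal H^2(F\cap\partial B_r(\bar x))$ dominates $\tfrac{V}{9}|F\cap B_r(\bar x)|$ for all small $r$, which integrates to an exponential bound on $m(r)=|F\cap B_r(\bar x)|$ incompatible with the density lower bound unless $V$ is bounded; or there is a radius at which the area is small, and one builds a competitor by transplanting $F\cap B_r(\bar x)$ into $B_i$. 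The Coulombic gain is of order $V|F\cap B_r(\bar x)|$ (using Corollary~\ref{corgrad} at $\bar x$ and the bound on $v_F$ in $B_i$), while the perimeter cost is controlled by the small area term, contradicting minimality unless $V\le C$. This competitor/minimality mechanism is what is missing from your approach.
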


\begin{proof}
  Observe first that by \eqref{estvE} we have $v_F \geq -C$, for some
  universal constant $C > 0$ and $\ell \gg 1$. Therefore, letting
  $V:= \displaystyle \max_{x \in \TTT_\ell} v_F(x)$, the thesis is
  equivalent to showing that
  \begin{align} \label{stimavv} %
    V \leq C,
  \end{align}
  for some universal $C>0$ and large enough $\ell$.

  We first prove \eqref{stimavv} with the constant depending only on
  $\lambda$. Partition $\TTT_\ell$ into $N$ cubes of sidelength
  $L = \ell\,N^{-1/3}$, with $N^{1/3}$ chosen to be the smallest
  integer such that
  $L \le \min \left(\frac16 c^{1/3} \lambda^{-1} \delta, \frac13
  \right)$,
  where $c$ and $\delta$ are as in \eqref{densfirst}.  Note that with
  our choice of $L$ we have
  $N \geq 216 \lambda^3 \ell^3 / (c \delta^3)$.  If $\ell$ is
  sufficiently large (depending on $\lambda$), we also have that
  $\delta(\lambda\ell)^{-2/3}\le \frac12 L \le \frac{1}{12} c^{1/3}
  \lambda^{-1} \delta$.
  In particular, any ball of radius $\delta(\lambda\ell)^{-2/3}$ can
  be inscribed into a union of $27$ adjacent cubes of the partition
  and stay at least distance $\delta(\lambda\ell)^{-2/3}$ from the
  boundary of that union.  Hence, by \eqref{densfirst} and a counting
  argument we get that at least $\tfrac78 N$ cubes do not intersect
  $F$, so that we can find disjoint balls $B_1,\ldots,B_{M}$ of radius
  $\frac12 L \leq \frac16$ not intersecting $F$, with
  $M \geq \tfrac78 N$.
  
  Recalling that $\int_{\TTT_\ell} v_F\, dx=0$ and that $v_F$ is
  bounded below by $-C$, for $\ell \gg 1$ we get
  \begin{align}
    0 = \int_{\TTT_\ell} v_F\, dx
    \geq \sum_{i=1}^{M}\int_{B_i} v_F\, dx -C \ell^3.
  \end{align}
  It follows that there exists an index $i$ such that, for some
  universal $C' > 0$, we have
  \begin{align}
    \int_{B_i} v_F\, dx\leq C M^{-1} \ell^3 \leq C' |B_i|
    \,.  
  \end{align}
  We then apply the second part of Corollary \ref{corgrad} with
  $x_0 = x_i$, where $x_i$ is the center of $B_i$, to obtain
  \begin{align}
    \label{vFBi}
    |v_F (x)|\le C \qquad \text{for all} \ x \in 
    B_i,
  \end{align}
  for some universal $C>0$.

  \smallskip

  Let now $\bar x \in\overline F$ be a global maximum of $v_F$, so
  that $v_F(\bar x)=V$, and assume that
  \begin{align}\label{eqrr}
    \mathcal H^2(F\cap\partial B_{r}(\bar x))\ge \frac19 V
    |F\cap  B_{r}(\bar x)|\qquad
    \text{for any }r\in (0,L/2)\,.
  \end{align}
  Letting $m(r):=|F\cap B_{r}(\bar x)|$, so that
  ${dm(r) \over dr} =\mathcal H^2(F\cap\partial B_r(\bar x))$ for
  a.e. $r$, \eqref{eqrr} can be written as
  \begin{align}\label{eqdiff}
    {dm(r) \over dr} \ge \frac19 V m(r) \qquad \text{for a.e. }r\in (0,L/2)\,.
  \end{align}
  Integrating \eqref{eqdiff} over $(r_0,L/2)$, we get (for a similar
  argument, see the proof of \cite[Theorem 3.3]{km:cpam13})
  \begin{align}
    m(r_0)\le m(L/2)\, e^{V(r_0 - L/2)/9}\,.    
  \end{align}
  Notice now that, as in Proposition \ref{thrigot}, from Lemma
  \ref{afm} it follows that
  \begin{align}
    m(r)\ge cr^3 \qquad \text{for all }r \leq \min \left( 1, \frac
    \delta V \right)\,.    
  \end{align}
  In particular, if $r_0 = \delta / V \le L/4$, we have
  \begin{align}\label{eqVV}
    \frac{c \delta^3}{V^3}\le m(r_0)
    \le C L^3 e^{-LV/36} \,,
  \end{align}
  for some universal constant $C>0$, which implies \eqref{stimavv}
  with the constant depending only on $\lambda$.

  On the other hand, if \eqref{eqrr} does not hold, there exists
  $r\in (0,L/2)$ such that
  \begin{align}\label{eqrg}
    \mathcal H^2(F\cap\partial B_{r}(\bar x)) < \frac19 V
    |F\cap B_{r}(\bar x)|\,.
  \end{align}
  We claim that, as in the proof of Lemma \ref{lem-ln}, if
  \eqref{stimavv} does not hold, it is convenient to move the set
  $F\cap B_{r}(\bar x)$ inside the ball $B_i$.  Indeed, we define
  $F_i:=(x_i-\bar x)+(F\cap B_{r}(\bar x))$ and
  $\hat u = \tilde u_\ell - \chi_{F\cap B_{r}(\bar x)} + \chi_{F_i}$.
  Note that by construction $F \cap B_i = \varnothing$, so $\hat u $
  is admissible. By minimality of $\tilde u_\ell$ and using
  \eqref{eqR}, \eqref{vFBi} and \eqref{eqrg}, we get
  \begin{align}
    \widetilde E_{\ell}(\tilde u_\ell)
    &\leq \widetilde E_{\ell}(\hat u)
      \notag  \\   
    & =  \widetilde E_{\ell}(\tilde u_\ell) 
      + 2 \mathcal H^2(F\cap\partial B_{r}(\bar x)) 
      +\int_{F_i}v_F\,dx - \int_{F\cap B_{r}(\bar x)}v_F\,dx
      \notag \\
    &\qquad - \int_{F_i}\int_{F\cap B_{r}(\bar x)}\widetilde G_\ell(x-y)\,dx \, dy
      + \int_{F\cap B_{r}(\bar x)}\int_{F\cap B_{r}(\bar
      x)}\widetilde G_\ell(x-y)\,dx \, dy
      \notag \\
    &<  \widetilde E_{\ell}(\tilde u_\ell) 
      + \left( \frac29 V + C \right) |F\cap B_{r}(\bar x)|
      - \int_{F\cap B_{r}(\bar x)}v_F\,dx\,,
  \end{align}
  for some universal $C>0$, provided that $\ell \gg 1$. Notice now
  that Corollary \ref{corgrad} implies that
  \begin{align}
    v_F(x)\ge \frac 34 V - C \qquad\text{for any
    }x\in B_r(\bar x),   
  \end{align}
  for a universal $C > 0$.  Hence
  \begin{align}
    0 < \left(C- \frac12 V \right)|F\cap B_{r}(\bar x)|,  
  \end{align}
  for some universal $C > 0$ and $\ell \gg 1$, which leads to a
  contradiction if $V$ is too large.

  \medskip

  Lastly, to establish \eqref{stimavv} with $C$ universal, we note
  that using \eqref{stimavv} with the constant depending on $\lambda$
  one gets that the density estimate in \eqref{densfirst} holds for
  all $r \leq r_0$ with some $r_0 > 0$ depending only on $\lambda$,
  for $\ell \gg 1$. We can then repeat the covering argument at the
  beginning of the proof with $L > 0$ universal, provided that
  $\ell \gg 1$, and obtain the conclusion by repeating the above
  argument.
\end{proof}

{From} Theorem \ref{lem-pot} and the arguments leading to Proposition
\ref{thrigot}, we obtain an improved density estimate for minimizers
of $\widetilde E_\ell$.

\begin{corollary}\label{cor-dens} %
  There exist a universal constant $c>0$ and a constant $\ell_0 > 0$
  such that for all $x_0 \in \overline F$ and all $\ell \geq \ell_0$
  we have
 \begin{align} \label{crst} %
    |F_0 \cap B_r(x_0)|\geq c r^3 \qquad \text{for all $r\leq 1$,}
  \end{align}
  where $F_0$ is the connected component of $F$ such that
  $x_0 \in \overline F_0$.
\end{corollary}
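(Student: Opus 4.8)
The plan is to combine the now-universal $L^\infty$-bound on the potential from Theorem~\ref{lem-pot} with the perimeter-regularity machinery already assembled in this section. First I would invoke Theorem~\ref{lem-pot} to fix a universal constant $C_0 > 0$ and a constant $\ell_0 > 0$ such that $\|v_F\|_{L^\infty(\TTT_\ell)} \leq C_0$ for all $\ell \geq \ell_0$. By Lemma~\ref{afm} (used with $r_0 = 1$), this immediately upgrades the observation recorded after Lemma~\ref{afm} that $\tilde u_\ell = \chi_F$ is a volume-constrained $(\Lambda, r_0)$-minimizer of the perimeter: we may now take $r_0 = 1$ and $\Lambda \leq 2 C_0 + C$ for a universal $C > 0$, i.e.\ $\Lambda$ is itself a \emph{universal} constant, in contrast to the $\ell$-dependent bound $\Lambda \leq C(\lambda\ell)^{2/3}$ that was used in Proposition~\ref{thrigot}.

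Next I would apply Proposition~\ref{thgn} with these parameters. Its volume hypothesis $|F| \in (r_0^3, \ell^3 - r_0^3)$ reads $\lambda\ell \in (1, \ell^3 - 1)$, which holds once $\ell \geq \ell_0$ after possibly enlarging $\ell_0$ (depending on $\lambda$, as allowed by the conventions of this section). Proposition~\ref{thgn} then yields universal constants $c,\delta > 0$ such that, for every $x_0 \in \overline F$ and $F_0$ the connected component of $F$ containing $x_0$,
\begin{align}
  |F_0 \cap B_r(x_0)| \geq c\, r^3 \qquad \text{for all } r \leq r_* := \min\Big(1, \tfrac{\delta}{\Lambda}\Big),
\end{align}
and since $\Lambda$ is universal, so is $r_*$.

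It remains only to pass from the range $r \leq r_*$ to the full range $r \leq 1$. For $r \in [r_*, 1]$, monotonicity of $r \mapsto |F_0 \cap B_r(x_0)|$ together with $r^3 \leq 1$ gives
\begin{align}
  |F_0 \cap B_r(x_0)| \geq |F_0 \cap B_{r_*}(x_0)| \geq c\, r_*^3 \geq c\, r_*^3\, r^3,
\end{align}
so \eqref{crst} holds with the universal constant $c\, r_*^3$ on $[r_*,1]$, while the original (larger) constant $c$ already covers $r \leq r_*$. Replacing $c$ by $c\, r_*^3$ completes the proof.

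I do not expect a genuine obstacle here: essentially the entire content of the corollary is contained in Theorem~\ref{lem-pot}, and what is left is the bookkeeping of feeding a universal $\Lambda$ into Proposition~\ref{thgn} plus the trivial monotonicity argument. The only point deserving a word of care is that the volume constraint $|F| \in (r_0^3,\ell^3-r_0^3)$ in Proposition~\ref{thgn} forces $\ell_0$ to depend on $\lambda$, which is consistent with the statement of the corollary.
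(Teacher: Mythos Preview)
Your proposal is correct and follows exactly the approach the paper intends: the paper's own ``proof'' is the single sentence ``From Theorem~\ref{lem-pot} and the arguments leading to Proposition~\ref{thrigot}, we obtain an improved density estimate,'' and you have spelled out precisely those arguments---feed the universal $L^\infty$-bound on $v_F$ into Lemma~\ref{afm} to get a universal $\Lambda$, then apply Proposition~\ref{thgn}. Your additional monotonicity step extending from $r\le r_*$ to $r\le 1$ is a harmless bookkeeping detail the paper leaves implicit.
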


Finally, we establish a uniform diameter bound for the connected
components of the minimizers in Theorem \ref{lem-pot}.

\begin{lemma}[Diameter bound] \label{lem-diam} %
  Let $F_0$ be a connected component of $F$. Then there exists a
  universal constant $C>0$ such that
  \begin{align} \label{diamF0} %
    \mathrm{diam} \, F_0 \leq C,
  \end{align}
  for all $\ell \gg 1$.
\end{lemma}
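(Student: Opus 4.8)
The plan is to argue by contradiction, following the strategy already used in Lemma \ref{lem-ln}: if a connected component $F_0$ of the minimizer $F$ has very large diameter, then it must contain a portion that can be relocated to a far-away ``empty'' region of $\TTT_\ell$, strictly decreasing the energy. The key new ingredient available here, which was not available in Lemma \ref{lem-ln}, is the uniform bound $\|v_F\|_{L^\infty(\TTT_\ell)} \leq C$ from Theorem \ref{lem-pot}, together with the improved density estimate $|F_0 \cap B_r(x_0)| \geq c r^3$ for all $r \leq 1$ from Corollary \ref{cor-dens}. These control the local structure of $F_0$ at unit scale independently of $\ell$.

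First I would set $m_0 := |F_0|$ and, exactly as in Lemma \ref{lem-ln}, extract from a component of diameter $\geq 2(N+1)$ a chain of points $x_0, \ldots, x_N \in \overline F_0$ with $|x_k - x_0| = 2k$, so that the unit balls $B_1(x_k)$ are mutually disjoint. By Corollary \ref{cor-dens}, each satisfies $|F_0 \cap B_1(x_k)| \geq c$, hence $m_0 \geq cN$, which bounds $N$ (and thus $\mathrm{diam}\, F_0$) in terms of $m_0$; so it remains to rule out large $m_0$. For this, note that $F_0$ has perimeter at least $c'm_0^{2/3}$ by the isoperimetric inequality (or: at least $\sim N \sim m_0$ via the density bound, since one accumulates a definite amount of perimeter near each $x_k$). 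On the other hand, one can split off the piece $F_0 \cap B_1(x_0)$ and translate it by a large vector $b$ to a location far from $\mathrm{supp}(u_\ell)$ — such a location exists since $|F|/\ell^3 \to 0$ — obtaining an admissible competitor $\hat u = \tilde u_\ell - \chi_{F_0 \cap B_1(x_0)} + \chi_{F_0 \cap B_1(x_0)}(\cdot - b)$. The change in energy is bounded by the perimeter of $F_0 \cap B_1(x_0)$ (at most a universal constant, since it lies in a unit ball) plus the change in the nonlocal term. Using $\|v_F\|_{L^\infty} \leq C$ and Lemma \ref{afm}-type estimates, the self-interaction gain from removing the interaction between $F_0 \cap B_1(x_0)$ and the rest of $F_0$ is bounded below; here the decay of $\widetilde G_\ell$ and the monotonicity argument of Lemma \ref{lem-ln} give, via the chain of balls, a logarithmic lower bound $\geq C\log N$ on that interaction — but now without needing a truncation, since for $\ell \gg 1$ we have $\widetilde G_\ell(x) \geq \Gamma^\#_\ell(x) - C/\ell \geq 0$ on the relevant scale by \eqref{decompGG}--\eqref{eqR}.

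Putting these together, for $N$ larger than a universal threshold the competitor strictly lowers the energy, contradicting minimality of $\tilde u_\ell$; hence $N \leq N_0$ universally, and so $m_0 \leq cN_0$ and $\mathrm{diam}\, F_0 \leq 2(N_0+1) =: C$ for all $\ell \gg 1$.

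I expect the main obstacle to be bookkeeping the nonlocal energy change carefully enough to see that it is genuinely negative for large $N$: one must check that moving $F_0 \cap B_1(x_0)$ far away does not create new unfavorable interactions (it does not, since the target region is essentially empty, so the new interaction term is $O(\|v_F\|_{L^\infty} \cdot |F_0 \cap B_1(x_0)|)$, a universal constant), and that the ``lost'' interaction between $F_0 \cap B_1(x_0)$ and $F_0 \setminus B_1(x_0)$ really grows with $N$. The latter follows because the density bound forces a uniform amount of mass of $F_0$ in each $B_1(x_k)$, and $\sum_{k=1}^N \widetilde G_\ell(2k+2) \gtrsim \sum_{k=1}^N (k)^{-1} \gtrsim \log N$ once $\ell$ is large enough that $\widetilde G_\ell$ is comparable to the Newtonian kernel up to distance $\sim N \ll \ell$. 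A minor technical point is ensuring the far-away target ball of radius $O(1)$ fits inside $\TTT_\ell$ and misses $\mathrm{supp}(u_\ell)$, which is immediate for $\ell \gg 1$ since the total volume fraction $\lambda/\ell^2 \to 0$.
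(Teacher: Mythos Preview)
Your competitor argument has a genuine gap that breaks the proof. You claim that the ``lost'' nonlocal interaction when moving $A := F_0 \cap B_1(x_0)$ far away is at least $c\log N$, based on the chain-of-balls estimate for the interaction between $A$ and $F_0 \setminus A$. But the competitor changes the interaction between $A$ and \emph{all} of $B := F \setminus A$, not just $F_0 \setminus A$. Since $\widetilde G_\ell$ is not a positive kernel (it has zero average on $\TTT_\ell$), the interaction between $A$ and the other components $F \setminus F_0$ can be large and negative, and in fact the total interaction is forced to be bounded:
\[
\int_A \int_{B}\widetilde G_\ell(x-y)\,dx\,dy \ = \ \int_A v_F\,dx \ - \ \int_A\int_A \widetilde G_\ell(x-y)\,dx\,dy \ = \ O(1),
\]
by Theorem~\ref{lem-pot} and the fact that $|A| \leq |B_1|$. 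You correctly bound the new interaction $\int_{A'}\int_B \widetilde G_\ell$ by $O(\|v_F\|_{L^\infty}|A'|) = O(1)$ using the same reasoning --- but that very reasoning applies equally to the old interaction. So the net nonlocal change is $O(1)$, not $-c\log N$, and the contradiction does not materialize. This is precisely where the analogy with Lemma~\ref{lem-ln} fails: there the kernel is nonnegative and truncated, and the components of the minimizer are at distance $\geq 2R$, so moving a piece of one component does not interact with the others at all.

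The repair is to drop the competitor and argue directly. The chain-of-balls computation, applied at a single point $x_0 \in \overline F_0$, gives a lower bound on the \emph{Newtonian} part of the potential:
\[
\int_{F_0 \cap B_R(x_0)} \frac{dy}{|x_0-y|} \ \geq\ c\,\log\bigl(\min\{\mathrm{diam}\,F_0,\,R\}\bigr),
\]
using the density bound from Corollary~\ref{cor-dens}. On the other hand, after separating off the smooth far-field part of $\widetilde G_\ell$ (as in the proof of Lemma~\ref{stimagrad}, using \eqref{Grhodx} and \eqref{G0r2}), the uniform bound on $v_F$ from Theorem~\ref{lem-pot} yields a matching upper bound $\int_{F \cap B_R(x_0)} |x_0-y|^{-1}\,dy \leq C$ for all $R \geq 1$ and $\ell \gg 1$, with $C$ universal. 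Comparing the two bounds for $R$ large gives $\mathrm{diam}\,F_0 \leq C$ directly. This is the paper's route, and it uses exactly the two ingredients (Theorem~\ref{lem-pot} and Corollary~\ref{cor-dens}) you identified, just without the detour through a competitor.
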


\begin{proof}
  Assume that ${\rm diam} \, F_0 \geq 2$. Arguing as in the proof of
  Lemma \ref{stimagrad} and using its notations, for any
  $x \in \TTT_\ell$ and a universally small $\rho_0 > 0$ we have
  \begin{align}
    \label{vFGepsx}
    v_F(x) \geq \int_{F \cap B_{\eps^{-1/3} \rho/2}(x)} {dy \over 8
    \pi |x - y|} + \int_\TTT G_\rho(\eps^{1/3} x - y) \, d
    \mu_\eps(y), 
  \end{align}
  for all $\rho \in (0, \rho_0)$. Observe that by \eqref{Grhodx} and
  \eqref{G0r2} the last term in the right-hand side of \eqref{vFGepsx}
  can be bounded below by $-2 C \lambda \rho^2$, for $\ell \gg 1$ and
  $C > 0$ universal.  Taking $\rho \leq \lambda^{-1/2}$ and using
  \eqref{stimav}, we then get
  \begin{align}
    \int_{F \cap B_R(x)} {dy \over |x - y|} \leq C,
  \end{align}
  with a universal $C > 0$, for any $R \geq 1$ and $x \in \TTT_\ell$,
  provided that $\ell \gg 1$ independently of $x$.

  Recalling \eqref{crst} and arguing as in Lemma \ref{lem-ln}, for all
  $\ell \gg 1$ there exists $x_0\in \overline F_0$ such that
  \begin{align}
    C \ge \int_{F_0 \cap B_R(x_0)} {dy \over |x_0 - y|}  \geq c
    \min \{ \log \left( {\rm diam} \, F_0 \right), \log R \} \,,
  \end{align}
  for some universal $c, C > 0$. The claim then follows by choosing a
  universal $R$ that is sufficiently large.
\end{proof}

\section{Proof of Theorem \ref{thm-droplets}}\label{sec-drop}

For $\lambda > 0$, let $(u_\eps) \in \mathcal A_\eps$ be a family of
the regular representatives of minimizers of $E_\eps$, and let
$N_\eps$ and $u_{\eps,k} \in BV(\R^3; \{0, 1\})$ be as in the
statement of the theorem. Without loss of generality we may set
$x_{\eps,k} = 0$ in the statements below. We need to show that there
exists $\eps_0 > 0$ such that for all $\eps \leq \eps_0$:
\begin{enumerate}
\item[i)] There exist universal constants $C, c > 0$ such that
  \begin{align}
    \label{ccDueps-2}
    \| v_\eps \|_{L^\infty(\TTT_\ell)} \leq C, \qquad \quad
    \int_{\R^3} 
    u_{\eps,k}(x) \, dx \geq c \eps.
  \end{align}
\item[ii)] There exist universal constants $C, c > 0$ such that
  \begin{align} \label{cCNeps-2} %
    \mathrm{supp} (u_{\eps,k}) \subseteq B_{C \eps^{1/3}}(0), \qquad
    \quad c \lambda \eps^{-1/3} \leq N_\eps \leq C \lambda
    \eps^{-1/3}.
  \end{align}
\item[iii)] There exists a collection of indices $I_\eps$ such that
  $(\# I_\eps) / N_\eps \to 1$ as $\eps \to 0$ and, upon extraction of
  a subsequence, for every sequence $\eps_n \to 0$ and every
  $k_n \in I_{\eps_n}$ there holds $\tilde u_n \to \tilde u$ in
  $L^1(\R^3)$, where
  $\tilde u_n (x) := u_{\eps_n,k_n}(\eps_n^{1/3} x)$, and $\tilde u$
  is a minimizer of $\widetilde E_\infty$ over
  $\widetilde{\mathcal A}_\infty(m^*)$ for some
  $m^* \in \mathcal I^*$.
\end{enumerate}
The estimate for the potential in (i) follows from Theorem
\ref{lem-pot}, setting
$\tilde u_{\ell_\eps}=u_{\eps}(\cdot /\ell_\eps) \in \widetilde{\cal
  A}_{\ell_\eps}$
with $\ell_\eps = \eps^{-1/3}$ and noting that with
$\tilde u_{\ell_\eps} = \chi_F$ we have
$v_F = v_\eps(\cdot / \ell_\eps)$.  Similarly, the volume estimate in
(i) follows from Corollary \ref{cor-dens}. The inclusion in (ii)
follows from Lemma \ref{lem-diam} by a rescaling. The estimate for
$N_\eps$ in (ii) follows from (i) and the fact that
$\int_\TTT u_\eps \, dx=\lambda\eps^{2/3}$.

\medskip

We turn to the proof of statement (iii). Given $\delta>0$, let
$N_{\eps,\delta} \geq 0$ be the number of the components $u_{\eps,k}$
such that for $\tilde u_{\eps,k}(x) := u_{\eps,k}(\eps^{1/3} x)$,
we have
\begin{align}
  \widetilde E_\infty (\tilde u_{\eps,k}) \geq (f^*+\delta)
  \int_{\R^3} \tilde u_{\eps,k} \, dx. 
\end{align}
By \eqref{Eepsmin}, \eqref{cCNeps} and the arguments in the proof of
Proposition \ref{prp-lower} we have, as $\eps \to 0$,
\begin{align}
  \lambda f^* 
  &= \eps^{-4/3}E_\eps(u_\eps) + o_\eps(1) %
    \geq \eps^{1/3}\sum_{k=1}^{N_{\eps}}\widetilde
    E_\infty(\tilde u_{\eps,k}) + o_\eps(1) \notag \\
  &\geq \eps^{1/3}\Big(
    (f^*+\delta)\sum_{k=1}^{N_{\eps,\delta}} \int_{\R^3}
    \tilde u_{\eps,k} \, dx + f^* 
    \sum_{k=N_{\eps,\delta} + 1}^{N_{\eps}} \int_{\R^3}
    \tilde u_{\eps,k} \, dx \notag \Big) + o_\eps(1) \\ 
  &\geq \lambda f^* + c\,\delta
    N_{\eps,\delta}\,\eps^{1/3} + o_\eps(1), 
\end{align}
where we suitably ordered all $\tilde u_{\eps,k}$ and included a
possibility that the range of summation is empty in either of the two
sums. Hence, $N_{\eps,\delta} = o(\eps^{-1/3})$, and by (ii) it
follows that $N_{\eps,\delta} = o(N_\eps)$ for all $\delta>0$. This
implies that for every $\delta > 0$ there is $\eps_\delta > 0$ and a
collection of indices $I_{\eps_\delta}$ satisfying
$(\# I_{\eps_\delta}) / N_{\eps_\delta} \to 1$ such that
$\widetilde E_\infty (\tilde u_{{\eps_\delta},k}) / \int_{\R^3} \tilde
u_{{\eps_\delta},k} \, dx \to f^*$
uniformly in $k \in I_{\eps_\delta}$ as $\delta \to 0$. By (ii), for
every sequence of $\delta_n \to 0$ and every choice of
$k_n \in I_{\eps_{\delta_n}}$ the sequence
$\tilde u_n := \tilde u_{\eps_{\delta_n},k_n}$ is supported in
$B_R(0)$ for some $R > 0$ universal and equibounded in $BV(\R^3)$.
Hence, upon extraction of a subsequence we have
$\tilde u_n \to \tilde u$ in $L^1(\mathbb R^3)$ with
$m := \int_{\R^3} \tilde u \, dx > 0$. At the same time, by lower
semicontinuity of $\widetilde E_\infty$ we also have
$\widetilde E_\infty (\tilde u) / m \leq f^*$. Then, by Theorem
\ref{prp-fstar} the latter is, in fact, an equality, and so
$u_n(x) := \tilde u(\lambda_n x)$ with
$\lambda_n := (m^{-1} \int_{\R^3} \tilde u_{\eps,k} \, dx )^{1/3} \to
1$,
is a minimizing sequence for $\widetilde E_\infty$ over
$\widetilde {\mathcal A}_\infty(m)$ (cf. \eqref{Elam}). Thus,
$\tilde u$ is a minimizer of $\widetilde E_\infty$ over
$\widetilde {\mathcal A}_\infty(m)$. Again, by Theorem \ref{prp-fstar}
we then have $m \in \mathcal I^*$.  \qed

\paragraph*{Acknowledgements.}

The work of CBM was partly supported by NSF via grants DMS-0908279 and
DMS-1313687. The work of MN was partly supported by the Italian
CNR-GNAMPA and by the University of Pisa via grant PRA-2015-0017. CBM
gratefully acknowledges the hospitality of the University of
Heidelberg. HK and CBM gratefully acknowledge the hospitality of the
Max Planck Institute for Mathematics in the Sciences, and both CBM and
MN gratefully acknowledge the hospitality of Mittag-Leffler Institute,
where part of this work was completed.


\appendix

\section{Appendix}

We recall that by the Riesz-Fischer theorem, the space of signed Radon
measures $\MM(\TTT)$ is embedded in the space of distributions via the
identification
\begin{align} \label{app-h1dist} %
  \SKP{\phi}{\mu} := \int_\TTT \phi d\mu && \forall \phi \in
                                            C^\infty(\TTT).
\end{align}
On the other hand, any measure $\mu \in \MM^+(\TTT) \cap \mathcal H'$
(recall the definition in \eqref{def-H1*}) can be extended by
continuity to an element of the dual space $\mathcal H'$, which we
still denote by $\mu$, such that
\begin{align} \label{muriesz} \int_\TTT \phi\, d\mu =
  \SKPLL{\phi}{\mu}{{\mathcal H}}{{\mathcal H}'} && \forall
  \phi \in \mathcal H \cap C^0(\TTT).
\end{align}

\begin{lemma}\label{lemv}
  Let $\mu\in\MM^+(\TTT) \cap \mathcal H'$ and $u\in \mathcal H$.
  Then, up to taking the precise representative, $u$ belongs to
  $L^1(\TTT,d\mu)$ and
   
   \begin{align}\label{seitre}
     \SKPLL{u}{\mu}{{\mathcal H}}{{\mathcal H}'} = \int_\TTT u\, d\mu.
   \end{align}
 \end{lemma}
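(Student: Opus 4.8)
The plan is to establish \eqref{seitre} by first proving it for smooth $u$ via the definition of the extension of $\mu$ to $\mathcal H'$, and then passing to the limit along a mollified sequence, using both the $\mathcal H$-convergence and a truncation argument to control the measure integral. First I would take $u \in \mathcal H \cap C^\infty(\TTT)$: then \eqref{muriesz} is exactly \eqref{seitre} by construction of the continuous extension of $\mu$, so there is nothing to prove in this case. For general $u \in \mathcal H$, pick a sequence $u_k \in C^\infty(\TTT)$ with $\int_\TTT u_k\,dx = 0$ and $u_k \to u$ strongly in $\mathcal H$ (standard mollification, adjusting the mean, which is zero in the limit anyway). Then $\SKPLL{u_k}{\mu}{\mathcal H}{\mathcal H'} \to \SKPLL{u}{\mu}{\mathcal H}{\mathcal H'}$ since $\mu \in \mathcal H'$, and the left-hand side equals $\int_\TTT u_k\,d\mu$ by the smooth case. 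So it remains to show $\int_\TTT u_k\,d\mu \to \int_\TTT u\,d\mu$, where on the right $u$ is understood as its precise (quasi-continuous) representative, and to show along the way that this representative lies in $L^1(\TTT,d\mu)$.

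For the convergence of the measure integrals, the key is that $\mu$ does not charge sets of zero $H^1$-capacity, which is precisely the content of $\mu \in \mathcal H'$: indeed, for any Borel set $A$ with $\mathrm{cap}(A) = 0$ one can find $\phi_j \in \mathcal H \cap C(\TTT)$ with $\phi_j \geq \chi_A$ pointwise and $\|\phi_j\|_{\mathcal H} \to 0$, whence $\mu(A) \leq \liminf_j \int_\TTT \phi_j\,d\mu \leq C\liminf_j \|\phi_j\|_{\mathcal H} = 0$ by \eqref{def-H1*}. Since a subsequence of $u_k$ converges $H^1$-quasi-everywhere to the precise representative of $u$ (this is the standard fact that strong $H^1$ convergence implies quasi-uniform convergence of a subsequence, e.g. via a Borel--Cantelli argument on the sets where $|u_k - u_{k+1}|$ is large, controlled by capacity), and since $\mu$ ignores the exceptional set, we get $u_k \to u$ $\mu$-a.e. along that subsequence. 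To upgrade $\mu$-a.e. convergence to $L^1(\TTT,d\mu)$ convergence I would use a truncation: for $T > 0$ set $u^T := \max(-T,\min(T,u))$ and likewise $u_k^T$; then $u_k^T \in \mathcal H \cap C(\TTT)$ with $\|u_k^T\|_{\mathcal H} \leq \|u_k\|_{\mathcal H}$, and $|u_k^T| \leq T$ gives a dominating bound, so by dominated convergence $\int_\TTT u_k^T\,d\mu \to \int_\TTT u^T\,d\mu$. On the other hand, applying \eqref{def-H1*} to $|u_k| - |u_k^T| = (|u_k| - T)^+$ (which is in $\mathcal H \cap C(\TTT)$ with $\mathcal H$-norm bounded by $\|u_k\|_{\mathcal H}$) yields $\int_{\{|u_k| > T\}} (|u_k| - T)\,d\mu \leq C\|u_k\|_{\mathcal H}$, hence a uniform-in-$k$ tail bound $\int_{\{|u_k| \geq 2T\}} |u_k|\,d\mu \leq 2C\|u_k\|_{\mathcal H}/T$, i.e. uniform integrability; combined with $\mu$-a.e. convergence this gives $u_k \to u$ in $L^1(\TTT,d\mu)$ by Vitali's convergence theorem, and in particular $u \in L^1(\TTT,d\mu)$ with $\int_\TTT u_k\,d\mu \to \int_\TTT u\,d\mu$. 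Putting the two convergences together gives \eqref{seitre}.

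The main obstacle I expect is the interchange between the two notions of ``$u$'': the duality pairing $\SKPLL{u}{\mu}{\mathcal H}{\mathcal H'}$ is defined purely by density in $\mathcal H$ and a priori knows nothing about pointwise values, whereas $\int_\TTT u\,d\mu$ requires a genuine pointwise (precise) representative, and one must be sure that $\mu$ is fine enough to ``see'' only the capacity-theoretic representative and coarse enough not to charge the capacity-zero ambiguity set. The technical heart is therefore the capacity argument showing $\mu(\{\mathrm{cap} = 0\}) = 0$ together with the quasi-everywhere convergence of a subsequence of mollifications; everything else (the truncation and the Vitali argument) is routine once that is in place. One should also note for cleanliness that it suffices to prove the claimed identity for a subsequence, since the full-sequence statement then follows because the right-hand side $\SKPLL{u}{\mu}{\mathcal H}{\mathcal H'}$ is independent of the approximating sequence, and the precise representative of $u$ is intrinsic.
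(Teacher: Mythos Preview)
Your overall architecture matches the paper's: establish \eqref{seitre} for continuous $u$ by definition, approximate in $\mathcal H$, use that $\mu$ does not charge capacity-zero sets to get $\mu$-a.e.\ convergence of a subsequence, and then upgrade to $L^1(\TTT,d\mu)$ convergence. The divergence is only in this last step, and that is where your argument has a genuine gap.

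The problem is the uniform-integrability estimate. First, \eqref{def-H1*} applies only to test functions in $\mathcal H\cap C(\TTT)$, i.e.\ with \emph{zero mean}; the function $(|u_k|-T)^+$ is nonnegative and in general has strictly positive mean, so you cannot feed it directly into \eqref{def-H1*}. (The same issue bites your capacity argument: a continuous $\phi_j\ge\chi_A\ge 0$ with zero mean would have to vanish identically, so no such $\phi_j$ exists for nonempty $A$.) Second, even granting the bound $\int(|u_k|-T)^+\,d\mu\le C\|u_k\|_{\mathcal H}$, your conclusion $\int_{\{|u_k|\ge 2T\}}|u_k|\,d\mu\le 2C\|u_k\|_{\mathcal H}/T$ does not follow: on $\{|u_k|\ge 2T\}$ one has $(|u_k|-T)\ge \tfrac12|u_k|$, which yields only $\int_{\{|u_k|\ge 2T\}}|u_k|\,d\mu\le 2C\|u_k\|_{\mathcal H}$ with no $1/T$ factor, and hence no decay as $T\to\infty$. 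Both points are repairable --- subtract the Lebesgue mean before invoking \eqref{def-H1*}, and then show separately that the mean term and $\|\nabla(|u_k|-T)^+\|_{L^2}^2=\int_{\{|u_k|>T\}}|\nabla u_k|^2$ vanish as $T\to\infty$ uniformly in $k$ --- but the proof as written does not do this.

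The paper sidesteps truncation and Vitali entirely: it shows directly that $(u_k)$ is Cauchy in $L^1(\TTT,d\mu)$ by applying \eqref{muriesz} to the \emph{mean-subtracted} function $|u_k-u_{k'}|-\int_\TTT|u_k-u_{k'}|\,dx\in\mathcal H\cap C^0(\TTT)$, yielding
\[
\|u_k-u_{k'}\|_{L^1(\TTT,d\mu)}\le \|\mu\|_{\mathcal H'}\,\|\nabla(u_k-u_{k'})\|_{L^2(\TTT)}+\mu(\TTT)\,\|u_k-u_{k'}\|_{L^1(\TTT)}\to 0.
\]
This is shorter and structurally cleaner, since subtracting the mean is exactly the manoeuvre that makes the test function admissible for $\mathcal H$.
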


 \begin{proof}
   The result follows as in \cite[Theorem 1]{brezis79}. For the
   reader's convenience we include a simple alternative proof here.
   Since $u\in {\mathcal H}$, by \cite[Section 4.8: Theorem 1]{evans}
   we can identify $u$ with its precise representative and find a
   sequence $u_k \in \mathcal H\cap C^0(\TTT)$ such that $u_k \to u$
   in ${\mathcal H}$, and
   \begin{align}\label{eqN}
     u_k(x) \to u(x) &&\text{for all $x \not\in N$,}
   \end{align}
   where $N\subset\TTT$ is a set of zero inner capacity, that is, for
   any compact set $K\subset N$ there exists a sequence
   $\varphi_n\in \mathcal H\cap C^0(\TTT)$ such that $\varphi_n\to 0$
   in $\mathcal H$ and $\varphi_n=1$ on $K$.  Since
   $\mu\in \mathcal H'$ we have $\mu(K)=0$ for all compact
   $K\subset N$, so that
   \begin{align}\label{munot} 
   \mu(N)=\sup_{K\subset N}\mu(K)=0\,.
   \end{align}
   Since the functions $u_k$ are continuous for all $k \in \N$, we
   have
   \begin{align}
     \label{ukmu}
     \SKPLL{u_k}{\mu}{\mathcal H}{\mathcal H'} = \int_{\TTT} u_k d\mu, 
   \end{align}
   Therefore, by \eqref{muriesz} we get
   \begin{align}\label{eqalpha}
     \SKPLL{|u_k-u_{k'}|-\alpha_{k,k'}}{\mu}{\mathcal H}{\mathcal
     H'} = \int_{\TTT} |u_k-u_{k'}| d\mu -
     \alpha_{k,k'}\,\mu(\TTT),
   \end{align}
   for all $k' \in \N$, where
   \begin{align}
     \alpha_{k,k'}:= \int_\TTT |u_k-u_{k'}| \, dx\,.
   \end{align}
   It then follows
   \begin{eqnarray}\notag
     \| u_k-u_{k'}\|_{L^1(\TTT,d\mu)}
     &\leq& \|
            |u_k-u_{k'}|-\alpha_{k,k'}\|_{\mathcal H}\|\mu\|_{\mathcal H'} 
            +\mu(\TTT)\| u_k-u_{k'}\|_{L^1(\TTT)}
     \\\label{cauchy}
     &=& 
         \| \nabla (u_k-u_{k'}) 
         \|_{L^2(\TTT)}\|\mu\|_{\mathcal H'}+\mu(\TTT)\| 
         u_k-u_{k'}\|_{L^1(\TTT)}. 
   \end{eqnarray}
   Since $u_k$ is a Cauchy sequence in $\mathcal H$, hence also in
   $L^1(\TTT)$, from \eqref{cauchy} it follows that $u_k$ is a Cauchy
   sequence in $L^1(\TTT,d\mu)$ and, therefore, converges to some
   $\tilde u \in L^1(\TTT,d\mu)$. In fact, passing to a subsequence
   and using \eqref{eqN} and \eqref{munot}, we have
   $\tilde u(x) = u(x)$ for $\mu$-a.e.  $x \in \TTT$. Therefore, from
   \eqref{ukmu} we get
   \begin{align}
     \SKPLL{u}{\mu}{\mathcal H}{\mathcal H'} =\lim_{k\to\infty}
     \SKPLL{u_k}{\mu}{\mathcal H}{\mathcal H'} = \lim_{k\to\infty}
     \int_{\TTT} u_k \,d\mu = \int_{\TTT} u \,d\mu,
   \end{align}
   which concludes the proof.
\end{proof}

\medskip

The following lemma characterizes the measures in terms of the
Coulombic potential, see \cite[Lemma 3.2]{gms:arma13} for a related
result.
\begin{lemma}\label{lemhans}
  Let $\mu \in \MM^+(\TTT) \cap \mathcal H'$, and let
  $G:\TTT\to (-\infty,+\infty]$ be the unique distributional solution
  of \eqref{G} with $G(0) =+\infty$. Then the function
  \begin{align} \label{app-defv} %
    v(x) := \int_{\TTT}G(x-y)\,d\mu(y) \qquad x\in\TTT
  \end{align}
  belongs to $\mathcal H$ and solves
  \begin{align} \label{app-v} %
    - \int_\TTT v \Delta \varphi \, dx = \int_\TTT \varphi \, d \mu
    \qquad \qquad \forall \varphi \in C^\infty(\TTT) \cap \mathcal H.
  \end{align}
  Moreover $v\in L^1(\TTT,d\mu)$ and 
  \begin{align} \label{app-eq} %
    \int_ \TTT \int_\TTT G(x-y)\,d\mu(x) \, d\mu(y) = \int_{\TTT} v \,
    d\mu = \int_\TTT |\nabla v|^2 \, dx \,.
  \end{align}
\end{lemma}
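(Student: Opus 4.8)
The plan is to realize $v$ as a limit of the (smooth) potentials generated by mollified measures, for which every assertion is elementary, and then to pass to the limit, controlling the homogeneous norm by means of the hypothesis $\mu\in\mathcal H'$. Write $m:=\mu(\TTT)<\infty$ and let $\mu_n:=\mu*\phi_n$, where $(\phi_n)$ is a standard family of nonnegative smooth periodic mollifiers with $\int_\TTT\phi_n\,dx=1$; then $\mu_n\in C^\infty(\TTT)$, $\mu_n\geq 0$, $\mu_n(\TTT)=m$, and $\mu_n\rightharpoonup\mu$ in $\MM(\TTT)$. Put $v_n:=G*\mu_n=(G*\phi_n)*\mu\in C^\infty(\TTT)$. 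Since $\int_\TTT G\,dx=0$ we have $\int_\TTT v_n\,dx=0$, so $v_n\in\mathcal H$, and $-\Delta v_n=\mu_n-m$ in the classical sense.

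First I would establish the uniform bound $\|v_n\|_{\mathcal H}\leq C$. Integrating by parts and using that $v_n$ has zero mean, $\|v_n\|_{\mathcal H}^2=\int_\TTT|\nabla v_n|^2\,dx=\int_\TTT v_n(\mu_n-m)\,dx=\int_\TTT v_n\,d\mu_n=\int_\TTT(v_n*\tilde\phi_n)\,d\mu$, where $\tilde\phi_n(z):=\phi_n(-z)$ and the last equality is Fubini. The function $v_n*\tilde\phi_n$ lies in $C^\infty(\TTT)\cap\mathcal H$, and convolution with a probability density does not increase the seminorm $\|\cdot\|_{\mathcal H}$ (immediate on the Fourier side, since $|\widehat{\phi}_n(k)|\leq 1$), so the defining inequality for $\mu\in\mathcal H'$ gives $\|v_n\|_{\mathcal H}^2=\int_\TTT(v_n*\tilde\phi_n)\,d\mu\leq C\|v_n*\tilde\phi_n\|_{\mathcal H}\leq C\|v_n\|_{\mathcal H}$, hence $\|v_n\|_{\mathcal H}\leq C$. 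By the Poincar\'e inequality and Rellich's theorem, after passing to a subsequence $v_n\rightharpoonup w$ in $\mathcal H$ and $v_n\to w$ strongly in $L^2(\TTT)$. On the other hand $G\in L^1(\TTT)$ (its singularity is integrable in three dimensions), so Young's inequality gives $\|v_n-G*\mu\|_{L^1(\TTT)}=\|(G*\phi_n-G)*\mu\|_{L^1(\TTT)}\leq m\,\|G*\phi_n-G\|_{L^1(\TTT)}\to 0$; since Fubini identifies $(G*\mu)(x)$ with $\int_\TTT G(x-y)\,d\mu(y)$ for a.e. $x$, comparing the two limits yields $v=w\in\mathcal H$. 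Moreover $G$ is bounded below on $\TTT$, so the pointwise formula \eqref{app-defv} defines $v$ with values in $(-\infty,+\infty]$; it is finite a.e. because $\int_\TTT v\,dx=0$ by Tonelli, and it agrees a.e. with the above element of $\mathcal H$.

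Next, for $\varphi\in C^\infty(\TTT)\cap\mathcal H$ one has $-\int_\TTT v_n\Delta\varphi\,dx=\int_\TTT\nabla v_n\cdot\nabla\varphi\,dx=\int_\TTT(\mu_n-m)\varphi\,dx=\int_\TTT\varphi\,d\mu_n$, using that $\varphi$ has zero mean; letting $n\to\infty$ with $v_n\to v$ in $L^2(\TTT)$ and $\mu_n\rightharpoonup\mu$ in $\MM(\TTT)$ gives \eqref{app-v}. Combining \eqref{app-v} with \eqref{muriesz} and the density of $C^\infty(\TTT)\cap\mathcal H$ in $\mathcal H$, one sees that $v$ is exactly the Riesz representative in $\mathcal H$ of the functional $\mu\in\mathcal H'$, i.e. $\langle v,\varphi\rangle_{\mathcal H}=\SKPLL{\varphi}{\mu}{\mathcal H}{\mathcal H'}$ for every $\varphi\in\mathcal H$.

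Finally, Lemma \ref{lemv} applied to $u=v\in\mathcal H$ shows that the precise representative of $v$ lies in $L^1(\TTT,d\mu)$ with $\SKPLL{v}{\mu}{\mathcal H}{\mathcal H'}=\int_\TTT v\,d\mu$; taking $\varphi=v$ in the Riesz identity gives $\int_\TTT|\nabla v|^2\,dx=\langle v,v\rangle_{\mathcal H}=\SKPLL{v}{\mu}{\mathcal H}{\mathcal H'}=\int_\TTT v\,d\mu<\infty$. Since $G\geq -C_0$ on $\TTT$ for some $C_0>0$, Tonelli applied to $G(x-y)+C_0\geq 0$ yields $\int_\TTT\int_\TTT(G(x-y)+C_0)\,d\mu(x)\,d\mu(y)=\int_\TTT(v(x)+C_0 m)\,d\mu(x)<\infty$, whence $\int_\TTT\int_\TTT G(x-y)\,d\mu(x)\,d\mu(y)=\int_\TTT v\,d\mu$, and \eqref{app-eq} follows. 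I expect the only genuinely delicate point to be the uniform $\mathcal H$-bound on $v_n$: that is where $\mu\in\mathcal H'$ enters, and it hinges on rewriting $\int_\TTT v_n\,d\mu_n$ as $\int_\TTT(v_n*\tilde\phi_n)\,d\mu$ and on the fact that mollification is a contraction for $\|\cdot\|_{\mathcal H}$. The remaining steps are routine approximation together with Fubini/Tonelli bookkeeping, with mild care to keep the pointwise function of \eqref{app-defv} distinct from its $\mathcal H$- and precise representatives.
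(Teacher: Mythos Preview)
Your proof is correct and takes a somewhat different route from the paper's. Both arguments mollify $\mu$ and work with the smooth potentials $v_n=G*\mu_n$, and both establish $v\in\mathcal H$ first. From there the paths diverge. The paper notes that $v_n=v*\rho_n\to v$ \emph{strongly} in $\mathcal H$ once $v\in\mathcal H$ is known, and then passes to the limit in $\int v_n\,d\mu_n=\|v_n\|_{\mathcal H}^2$ by rewriting the left side as $\int\int (G*\tilde\rho_n)(x-y)\,d\mu(x)\,d\mu(y)$ and invoking dominated convergence; the needed domination $|G*\tilde\rho_n|\leq C(1+|G|)$ comes from Newton's theorem and forces the mollifiers to be radially symmetric--decreasing. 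You instead bound $\|v_n\|_{\mathcal H}$ uniformly via the neat estimate $\|v_n\|_{\mathcal H}^2=\int(v_n*\tilde\phi_n)\,d\mu\leq C\|v_n\|_{\mathcal H}$ (this is exactly where $\mu\in\mathcal H'$ enters), extract a weak limit, identify $v$ as the Riesz representative of $\mu$, and then appeal to Lemma~\ref{lemv} together with Tonelli. This avoids the Newton--theorem step and places no symmetry restriction on the mollifiers; the price is the reliance on Lemma~\ref{lemv}.

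One step you label ``mild care'' is in fact where the argument still needs to be closed. Lemma~\ref{lemv} yields $\int v^*\,d\mu=\|v\|_{\mathcal H}^2$ for the \emph{precise representative} $v^*$ of $v\in\mathcal H$, while Tonelli gives $\int\int G(x-y)\,d\mu(x)\,d\mu(y)=\int v_{\mathrm{pt}}\,d\mu$ for the \emph{pointwise} function $v_{\mathrm{pt}}(x)=\int G(x-y)\,d\mu(y)$ of \eqref{app-defv}. To chain these (and to conclude $v_{\mathrm{pt}}\in L^1(d\mu)$, which is what the lemma asserts) you must show $v_{\mathrm{pt}}=v^*$ $\mu$-a.e. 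This is true but not automatic: writing $G=\Gamma^\#+R$ as in \eqref{decompG}, the $R$-part of $v_{\mathrm{pt}}$ is Lipschitz, and for the $\Gamma^\#$-part the ball averages $|B_r(x)|^{-1}\int_{B_r(x)}\Gamma^\#(z-y)\,dz$ are non-increasing in $r$ and tend to $\Gamma^\#(x-y)$ by local superharmonicity of the Newtonian kernel, so monotone convergence gives $|B_r(x)|^{-1}\int_{B_r(x)}v_{\mathrm{pt}}\,dz\to v_{\mathrm{pt}}(x)$ for \emph{every} $x$. Hence $v_{\mathrm{pt}}$ is the precise representative, and since $v\in H^1$ the set $\{v_{\mathrm{pt}}=+\infty\}$ has zero capacity and therefore (as $\mu\in\mathcal H'$) zero $\mu$-measure. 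With this observation your proof is complete; the paper's argument sidesteps the issue by never invoking Lemma~\ref{lemv}.
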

\begin{proof}
  By the definition of $G$ and the fact that $G \in L^1(\TTT)$, the
  function $v$ belongs to $L^1(\TTT)$, solves \eqref{app-v} and has
  zero average on $\TTT$.  On the other hand, by \eqref{def-H1*} one
  can define a functional $T_\mu \in \mathcal H'$ such that
  $T_\mu(\varphi) = \int_\TTT \varphi \, d \mu$ for every $\varphi \in
  C^\infty(\TTT) \cap \mathcal H$.  Therefore, by Riesz Representation
  Theorem there exists $\tilde v \in \mathcal H$ such that
  \begin{align}\label{vweakH}
    - \int_\TTT v \Delta \varphi \, dx = \langle \varphi, \tilde v
    \rangle_{\mathcal H} = - \int_\TTT \tilde v \Delta \varphi \, dx
    \qquad \qquad \forall \varphi \in C^\infty(\TTT) \cap \mathcal H.
  \end{align}
  Thus, since $\Delta$ is a one-to-one map from $C^\infty(\TTT) \cap
  \mathcal H$ to itself, we conclude that $v = \tilde v$ almost
  everywhere with respect to the Lebesgue measure on $\TTT$ and,
  hence, $v \in \mathcal H$.

  Let now $\rho\in C^\infty(\TTT)$ be a radial symmetric-decreasing
  mollifier supported on $B_{1/8}(0)$, let $\rho_n(x):= n^3\rho(nx)$,
  so that $\rho_n \to \delta_0$ in $\mathcal D'(\TTT)$, and let
  $f_n \in C^\infty(\TTT)$ be defined as
  \begin{align}
    f_n(x):=\int_\TTT \rho_n(x-y)\,d\mu(y) \qquad x \in \TTT.
  \end{align}
  Then, if the measures $\mu_n\in\MM^+(\TTT) \cap \mathcal H'$ are
  such that $d\mu_n=f_n\,dx$, we have $T_{\mu_n} \to T_\mu$ in
  $\mathcal H'$ and $\mu_n \rightharpoonup \mu$ in $\MM(\TTT)$.
  Letting also $v_n(x):=\int_\TTT G(x-y)\,d\mu_n(y)$, we observe that
  $v_n\to v \in \mathcal H$, and
  $\mu_n\otimes \mu_n \rightharpoonup \mu\otimes \mu$ in
  $\MM(\TTT\times \TTT)$. For all $M>0$, we then get
  \begin{align}\label{nn}
    \int_ \TTT \int_\TTT G_M(x-y)\,d\mu(x)\, d\mu(y) = \lim_{n\to
    \infty}\int_\TTT \int_\TTT G_M(x-y)\,d\mu_n(x)\, d\mu_n(y)\,,
  \end{align}
  where we set $G_M(x):=\min(G(x),M)\in C(\TTT)$.  By Monotone
  Convergence Theorem we also have
  \begin{align}
    \int_\TTT \int_\TTT G(x-y)\,d\mu(x)\, d\mu(y) = \lim_{M\to \infty}
    \int_\TTT \int_\TTT G_M(x-y)\,d\mu(x) \, d\mu(y)\,.
  \end{align}
  Recalling \eqref{nn}, it then follows
  \begin{eqnarray}
    \int_\TTT \int_\TTT G(x-y)\,d\mu(x) \, d\mu(y) 
    &=& \lim_{M\to
        \infty}\lim_{n\to \infty}\int_\TTT \int_\TTT
        G_M(x-y)\,d\mu_n(x) \, d\mu_n(y)  \notag
    \\
    &\leq& \lim_{n\to \infty}\int_\TTT \int_\TTT
           G(x-y)\,d\mu_n(x) \, d\mu_n(y)  \notag
    \\ 
    &=& \lim_{n\to \infty}\int_{\TTT} v_n\, d\mu_n  = \lim_{n\to
        \infty} \|v_n\|^2_{\mathcal H} = \|v\|^2_{\mathcal H}\,.
  \end{eqnarray}
  Together with the fact that $G$ is bounded from below, by
  Fubini-Tonelli theorem this implies that $v\in L^1(\TTT,d\mu)$, with
  $\|v\|_{L^1(\TTT,d\mu)}\leq \|v\|^2_{\mathcal H}$.

  It remains to prove \eqref{app-eq}. We reason as in \cite[Theorem
  1.11]{landkof} and pass to the limit, as $n\to \infty$, in the
  equality
  \begin{align} \label{app-keq} %
    \int_\TTT v_n\, d\mu_n %
    = \int_\TTT |\nabla v_n|^2 \, dx\,,
  \end{align}
  which holds for all $n\in\mathbb N$.  Notice that the right-hand
  side of \eqref{app-keq} converges since $v_n \to v$ in $\mathcal H$,
  so that
  \begin{align} 
    \label{right} 
    \lim_{n\to \infty}\int_\TTT |\nabla v_n|^2 \, dx = \int_\TTT |\nabla
    v|^2 \, dx\,.
  \end{align}
  In order to pass to the limit in the left-hand side of
  \eqref{app-keq}, we write
  \begin{align}
    \int_\TTT v_n\, d\mu_n = \int_\TTT \int_\TTT
    G(x-y)\,d\mu_n(x)d\mu_n(y) = \int_\TTT \int_\TTT
    G_n(x-y)\,d\mu(x)d\mu(y)\,,
  \end{align}
  where we set
  \begin{eqnarray}
    G_n(x) &:=& \int_\TTT G(x-y)\tilde\rho_n(y)\,dy,
    \\
    \tilde\rho_n(x) &:=& \int_\TTT \rho_n(x-y)\rho_n(y)\,dy\,.
  \end{eqnarray} 
  We claim that there exists $C>0$ such that
  \begin{align} 
    \label{app-jeq}
    |G_n(x)| \leq C\left(1+|G(x)|\right)
  \end{align}
  for all $x\in\TTT$.  Indeed, we can write $G=\Gamma^\#+R$ as in
  \eqref{decompG}. Letting
  \begin{align}
    \Gamma^\#_n(x):=\int_\TTT \Gamma^\#
    (x-y)\tilde\rho_n(y)\,dy \qquad {\rm 
    and}\qquad R_n(x):=\int_\TTT R(x-y)\tilde\rho_n(y)\,dy\,,
  \end{align}
  we have that $R_n \to R$ uniformly as $n\to \infty$.  Moreover,
  since $\Gamma^\#$, $\Gamma^\#_n$ and $\tilde \rho_n$ are periodic
  when viewed as functions on $\mathbb R^3$, rewriting the integrals
  as integrals over subsets of $\mathbb R^3$ and applying Newton's
  Theorem we get
  \begin{align}
    \frac{\Gamma^\#_n(x)}{\Gamma^\#(x)} = 4 \pi |x|
    \int_{B_{1/4}(0)} \Gamma^\#(x - y) 
    \tilde\rho_n(y) \,dy = |x| \int_{B_{1/4}(0)} {\tilde\rho_n(y)
    \over |x - y|} \,dy = \int_{B_{|x|}(0)}\tilde\rho_n(y)\,dy
    \notag \\ + |x| \int_{B_{1/4}(0) \backslash B_{|x|}(0)}
    {\tilde\rho_n(y) \over |y|} \,dy \leq
    \int_{B_{1/4}(0)}\tilde\rho_n(y)\,dy = 1 \qquad {\rm for\ all\ }
    |x| < \frac14.
  \end{align}
  Since also
  \begin{align}
    \frac{\Gamma^\#_n(x)}{\Gamma^\#(x)} = 4 \pi |x|
    \int_{B_{1/8}(0)} \Gamma^\#(x - y) 
    \tilde\rho_n(y) \,dy \leq C \qquad {\rm for\ all\ } \frac14 \leq
    |x| \leq {\sqrt{3} \over 2},
  \end{align}
  this proves \eqref{app-jeq}.
	
  From the fact that $G_n(x)\to G(x)$ for all $x\in\TTT$, by
  \eqref{app-jeq} and the Dominated Convergence Theorem we get
  \begin{align} 
    \notag \lim_{n\to \infty}\int_\TTT v_n\,d\mu_n &= \lim_{n\to
      \infty}\int_\TTT \int_\TTT G_n(x-y)\,d\mu(x)d\mu(y)
    \\
    \label{left} &=\int_\TTT \int_\TTT G(x-y)\,d\mu(x)d\mu(y) =
    \int_\TTT v\, d\mu.
  \end{align}
  From \eqref{app-keq}, \eqref{right} and \eqref{left} we obtain
  \eqref{app-eq}.
 \end{proof}

 \begin{lemma} 
   \label{lemhans2} 
   Let $G$ be as in Lemma \ref{lemhans} and let $\mu \in \MM^+(\TTT)$
   satisfy \eqref{muCoul}. Then $\mu \in \MM^+(\TTT) \cap \mathcal
   H'$.
 \end{lemma}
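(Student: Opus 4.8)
The plan is to reduce the statement to the mollified measures already used in the proof of Lemma~\ref{lemhans}. Let $\rho_n$ be the mollifiers from that proof and set $\mu_n := f_n\,dx$ with $f_n := \rho_n * \mu \in C^\infty(\TTT)$, so that $\mu_n \in \MM^+(\TTT)$, $\mu_n(\TTT) = \mu(\TTT)$, $\mu_n \rightharpoonup \mu$ in $\MM(\TTT)$, and $\mu_n \in \MM^+(\TTT) \cap \mathcal H'$ trivially (for $\phi \in \mathcal H \cap C(\TTT)$ one has $\int_\TTT \phi f_n\,dx \leq \|f_n\|_{L^2(\TTT)}\|\phi\|_{L^2(\TTT)} \leq C_n\|\phi\|_{\mathcal H}$ by the Poincar\'e inequality for mean-zero functions). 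Writing $v_n(x) := \int_\TTT G(x-y)\,d\mu_n(y)$ and applying Lemma~\ref{lemhans} to $\mu_n$, we obtain $v_n \in \mathcal H$ together with
\begin{align}
  \|v_n\|_{\mathcal H}^2 \ = \ \int_\TTT v_n\,d\mu_n \ = \ \int_\TTT \int_\TTT G(x-y)\,d\mu_n(x)\,d\mu_n(y).
\end{align}

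First I would show that the right-hand side is bounded uniformly in $n$ by the Coulombic energy \eqref{muCoul} of $\mu$. As in the proof of Lemma~\ref{lemhans}, a change of variables and the symmetry of $\rho_n$ give $\int_\TTT \int_\TTT G(x-y)\,d\mu_n(x)\,d\mu_n(y) = \int_\TTT \int_\TTT G_n(x-y)\,d\mu(x)\,d\mu(y)$, with $G_n := G * \tilde\rho_n$ and $\tilde\rho_n := \rho_n * \rho_n$ supported in $B_{1/4}(0)$. Decomposing $G = \Gamma^\# + R$ with $R \in \mathrm{Lip}(\TTT)$ as in \eqref{decompG}, and using Newton's theorem for the radial mollification $\Gamma^\# * \tilde\rho_n$ exactly as in the proof of Lemma~\ref{lemhans} (where one finds $\Gamma^\#*\tilde\rho_n \le \Gamma^\#$ near the origin and $\Gamma^\#*\tilde\rho_n$ uniformly bounded away from it), together with $\|R * \tilde\rho_n\|_{L^\infty(\TTT)} \leq \|R\|_{L^\infty(\TTT)}$, one gets a pointwise bound $G_n \leq G + C$ for a universal constant $C > 0$ (which on the diagonal holds trivially with the convention $G(0)=+\infty$). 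Note that, unlike in Lemma~\ref{lemhans}, only this one-sided bound is needed, not the pointwise convergence $G_n \to G$. Integrating against $\mu\otimes\mu$ yields
\begin{align}
  \|v_n\|_{\mathcal H}^2 \ \leq \ \int_\TTT \int_\TTT G(x-y)\,d\mu(x)\,d\mu(y) + C\,\mu(\TTT)^2 \ =: \ A \ < \ \infty,
\end{align}
uniformly in $n$, by \eqref{muCoul}.

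With this bound in hand I would finish by duality. Since $-\Delta v_n = \mu_n - \mu(\TTT)$ in $\mathcal D'(\TTT)$ and $\int_\TTT \phi\,dx = 0$ for $\phi \in C^\infty(\TTT) \cap \mathcal H$, integration by parts gives $\int_\TTT \phi\,d\mu_n = \int_\TTT \nabla\phi\cdot\nabla v_n\,dx \leq \|\phi\|_{\mathcal H}\,\|v_n\|_{\mathcal H} \leq \sqrt{A}\,\|\phi\|_{\mathcal H}$. Passing to the limit $n\to\infty$ and using $\mu_n \rightharpoonup \mu$ in $\MM(\TTT)$ (so that $\int_\TTT\phi\,d\mu_n\to\int_\TTT\phi\,d\mu$ for $\phi$ continuous) gives $\int_\TTT \phi\,d\mu \leq \sqrt{A}\,\|\phi\|_{\mathcal H}$ for all $\phi \in C^\infty(\TTT) \cap \mathcal H$. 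To reach the full class in \eqref{def-H1*}, given $\phi \in C(\TTT) \cap \mathcal H$ I would mollify, $\phi_\delta := \rho_\delta * \phi \in C^\infty(\TTT) \cap \mathcal H$, note that $\phi_\delta \to \phi$ both uniformly and in $\mathcal H$ (since $\mathcal H \subset H^1(\TTT)$ by \eqref{calH}), and let $\delta \to 0$ to obtain $\int_\TTT \phi\,d\mu \leq \sqrt{A}\,\|\phi\|_{\mathcal H}$. Hence $\mu \in \MM^+(\TTT) \cap \mathcal H'$ with constant $\sqrt{A}$ in \eqref{def-H1*}.

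The main obstacle is the uniform estimate $\sup_n \|v_n\|_{\mathcal H} < \infty$, i.e.\ the fact that mollification does not increase the periodic Coulombic energy beyond an additive constant; but this is precisely the content of the $G_n$-estimates carried out inside the proof of Lemma~\ref{lemhans}, here used as a bound for $\mu$ rather than as an auxiliary step. The only point requiring care is that Lemma~\ref{lemhans} itself presupposes membership in $\MM^+(\TTT) \cap \mathcal H'$, so it may be invoked only for the smooth measures $\mu_n$, with the passage to $\mu$ effected by the weak-$*$ limit and the mollification of test functions as above.
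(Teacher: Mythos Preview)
Your proof is correct and follows essentially the same route as the paper: mollify $\mu$ to $\mu_n$, express $\|v_n\|_{\mathcal H}^2$ via the identity from Lemma~\ref{lemhans}, control it through the $G_n$-estimates established there, and conclude by duality and weak convergence $\mu_n\rightharpoonup\mu$. The only cosmetic differences are that the paper passes to the exact limit $\int\!\int G\,d\mu\,d\mu$ via dominated convergence (rather than your one-sided bound $G_n\le G+C$), and it tests directly with $\varphi\in C^0(\TTT)\cap\mathcal H$, making your final smooth-to-continuous density step unnecessary.
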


 \begin{proof}
   Let $\varphi \in C^0(\TTT) \cap \mathcal H$. Using the same
   notation and arguments as in the proof of Lemma \ref{lemhans},
   with the help of Cauchy-Schwarz inequality we obtain
   \begin{align}
     \int_\TTT \varphi \, d \mu 
     & = \lim_{n \to \infty} \int_\TTT
       \varphi \, d \mu_n = \lim_{n \to \infty} \int_\TTT \nabla \varphi
       \cdot \nabla v_n \, dx \notag \\
     & \leq \| \varphi \|_{\mathcal H} \lim_{n \to \infty} \left(
       \int_\TTT \int_\TTT G(x
       - y) \, d \mu_n(x) \, d \mu_n(y) \right)^{\frac12} \notag \\
     & = \| \varphi \|_{\mathcal H} \lim_{n \to \infty} \left(
       \int_\TTT \int_\TTT G_n(x - y) \, d \mu(x) \, d \mu(y)
       \right)^{\frac12} \notag \\
     & = \| \varphi \|_{\mathcal H} \left( \int_\TTT \int_\TTT G(x -
       y) \, d \mu(x) \, d \mu(y) \right)^{\frac12},
   \end{align}
   which yields the inequality in \eqref{def-H1*}.
 \end{proof}

\bibliography{../nonlin,../mura,../stat}
\bibliographystyle{unsrt}

\end{document}